\newif\ifarxiv
  \mathchardef\mathcomma\mathcode`\,
\theoremstyle{plain}
\newtheorem{theorem}{Theorem}[section]
\newtheorem{lemma}[theorem]{Lemma}
\theoremstyle{definition}
\newtheorem{definition}[theorem]{Definition}
\newtheorem{example}[theorem]{Example}
\newcommand{\alternaterowcolors}{\rowcolors{2}{}{gray!20}}
\newcommand{\cmark}{\ding{51}} 
\newcommand{\xmark}{\ding{55}} 
\newcommand{\pmark}{\ensuremath{\sim}} 
\newcommand{\defn}[1]{\textbf{\emph{#1}}}
\renewcommand{\epsilon}{\varepsilon}
\renewcommand{\phi}{\varphi}
\newcommand{\RR}{\mathbb R}
\newcommand{\nnr}{\mathbb{\RR}_{\ge 0}} 
\newcommand{\NN}{\mathbb N}
\newcommand{\ZZ}{\mathbb Z}
\DeclareMathOperator{\Meas}{Meas}
\DeclareMathOperator{\Normalize}{normalize}
\newcommand{\keyword}[1]{\textcolor{violet}{\mathsf{#1}}}
\renewcommand{\Comment}[1]{\textcolor{gray}{// \; #1}}
\newcommand{\Skip}{\keyword{skip}}
\newcommand{\Ite}[4][]{\keyword{if}\, {#2} \, \allowbreak \{ #3 \} \, #1 \allowbreak \keyword{else} \, \{ #4 \}}
\newcommand{\Whilst}[2]{\keyword{while}\, {#1} \, \allowbreak \{ #2 \}}
\newcommand{\Diverge}{\keyword{diverge}}
\newcommand{\Fail}{\keyword{fail}}
\newcommand{\Observe}{\keyword{observe} \,}
\newcommand{\Flip}{\keyword{flip}}
\newcommand{\monus}{\mathbin{\dot-}}
\newcommand{\passign}{\mathbin{{+}{=}}}
\newcommand{\massign}{\mathbin{{\dot-}{=}}}
\newcommand{\ProbBranch}[3]{\{ #1 \} \mathrel{[#2]} \{ #3 \}}
\newcommand{\True}{\keyword{true}}
\newcommand{\False}{\keyword{false}}
\newcommand{\estates}{\NN^n_{\lightning}} 
\newcommand{\distname}[1]{\textcolor{teal}{\mathsf{#1}}}
\newcommand{\Dirac}{\distname{Dirac}}
\newcommand{\Bernoulli}{\distname{Bernoulli}}
\newcommand{\Geometric}{\distname{Geometric}}
\newcommand{\Uniform}{\distname{Uniform}}
\newcommand{\egd}[2]{\distname{EGD}\mathopen{}\left(#1, #2\right)\mathclose{}}
\newcommand{\X}{{\bm X}}
\newcommand{\ExpVal}{\mathbb{E}}
\newcommand{\Prob}{\mathbb{P}}
\newcommand{\sem}[1]{\llbracket #1 \rrbracket} 
\newcommand{\esem}[1]{\llbracket #1 \rrbracket_{\lightning}} 
\newcommand{\lfp}[1]{\mathsf{lfp}(#1)}
\newcommand{\x}{{\bm x}}
\newcommand{\w}{{\bm w}}
\newcommand{\ii}{{\bm i}}
\newcommand{\oo}{{\bm o}}
\newcommand{\balpha}{{\bm \alpha}}
\newcommand{\bbeta}{{\bm \beta}}
\newcommand{\bgamma}{{\bm \gamma}}
\newcommand{\bdelta}{{\bm \delta}}
\newcommand{\blambda}{{\bm \lambda}}
\newcommand{\zero}{\mathbf{0}}
\newcommand{\one}{\mathbf{1}}
\renewcommand{\P}{\mathbf{P}}
\newcommand{\Q}{\mathbf{Q}}
\newcommand{\R}{\mathbf{R}}
\renewcommand{\S}{\mathbf{S}}
\newcommand{\T}{\mathbf{T}}
\newcommand{\semlo}[1]{\llbracket #1 \rrbracket_{\mathsf{lo}}}
\newcommand{\semres}[1]{\llbracket #1 \rrbracket_{\mathsf{res}}}
\newcommand{\evgeo}[2]{\left.#1\right|_{#2}^{\mathsf{geo}}}
\newcommand{\semgeo}[1]{\mathrel{\llbracket #1 \rrbracket^{\mathsf{geo}}}}
\newcommand{\mle}{\preceq} 
\newcommand{\mge}{\succeq} 
\newcommand{\trafole}{\sqsubseteq} 
\newcommand{\egdle}{\preceq_{\mathsf{EGD}}} 
\newcommand{\egdge}{\succeq_{\mathsf{EGD}}} 
\newcommand{\JoinRel}{\mathsf{Join}}
\newcommand{\widen}{\mathop{\nabla}}
\newcommand{\apxref}[1]{\ifarxiv \cref{#1}\else \citet{ZaiserMO24}\fi}
    \let\Cref\crtCref
    \let\cref\crtcref
\renewcommand{\cite}[1]{\@latex@error{Use citet or citep instead of cite}{}}
\begin{document}

\title{Guaranteed Bounds on Posterior Distributions of Discrete Probabilistic Programs with Loops}

\author{Fabian Zaiser}
\orcid{0000-0001-5158-2002}
\affiliation{%
  \institution{University of Oxford}
  \city{Oxford}
  \country{United Kingdom}
}
\email{fabian.zaiser@cs.ox.ac.uk}

\author{Andrzej S. Murawski}
\orcid{0000-0002-4725-410X}
\affiliation{%
  \institution{University of Oxford}
  \city{Oxford}
  \country{United Kingdom}
}
\email{andrzej.murawski@cs.ox.ac.uk}

\author{C.-H. Luke Ong}
\orcid{0000-0001-7509-680X}
\affiliation{%
  \institution{Nanyang Technological University}
  \city{Singapore}
  \country{Singapore}
}
\email{luke.ong@ntu.edu.sg}


\begin{abstract}
We study the problem of bounding the posterior distribution of discrete probabilistic programs with unbounded support, loops, and conditioning.
Loops pose the main difficulty in this setting: even if exact Bayesian inference is possible, the state of the art requires user-provided loop invariant templates.
By contrast, we aim to find \emph{guaranteed bounds}, which sandwich the true distribution.
They are fully automated, applicable to more programs and provide more provable guarantees than approximate sampling-based inference.
Since lower bounds can be obtained by unrolling loops, the main challenge is upper bounds, and we attack it in two ways.
The first is called \emph{residual mass semantics}, which is a flat bound based on the residual probability mass of a loop.
The approach is simple, efficient, and has provable guarantees.

The main novelty of our work is the second approach, called \emph{geometric bound semantics}.
It operates on a novel family of distributions, called \emph{eventually geometric distributions} (EGDs), and can bound the distribution of loops with a new form of loop invariants called \emph{contraction invariants}.
The invariant synthesis problem reduces to a system of polynomial inequality constraints, which is a decidable problem with automated solvers.
If a solution exists, it yields an exponentially decreasing bound on the \emph{whole} distribution, and can therefore bound moments and tail asymptotics as well, not just probabilities as in the first approach.

Both semantics enjoy desirable theoretical properties.
In particular, we prove soundness and convergence, i.e. the bounds converge to the exact posterior as loops are unrolled further.
We also investigate sufficient and necessary conditions for the existence of geometric bounds.
On the practical side, we describe \emph{Diabolo}, a fully-automated implementation of both semantics, and evaluate them on a variety of benchmarks from the literature, demonstrating their general applicability and the utility of the resulting bounds.
\end{abstract}

\begin{CCSXML}
<ccs2012>
  <concept>
      <concept_id>10002950.10003648.10003662</concept_id>
      <concept_desc>Mathematics of computing~Probabilistic inference problems</concept_desc>
      <concept_significance>500</concept_significance>
      </concept>
  <concept>
      <concept_id>10003752.10010124.10010138</concept_id>
      <concept_desc>Theory of computation~Program reasoning</concept_desc>
      <concept_significance>500</concept_significance>
      </concept>
  <concept>
      <concept_id>10011007.10010940.10010992.10010998</concept_id>
      <concept_desc>Software and its engineering~Formal methods</concept_desc>
      <concept_significance>500</concept_significance>
      </concept>
  <concept>
      <concept_id>10003752.10010124.10010131</concept_id>
      <concept_desc>Theory of computation~Program semantics</concept_desc>
      <concept_significance>300</concept_significance>
      </concept>
</ccs2012>
\end{CCSXML}

\ccsdesc[500]{Mathematics of computing~Probabilistic inference problems}
\ccsdesc[500]{Theory of computation~Program reasoning}
\ccsdesc[500]{Software and its engineering~Formal methods}
\ccsdesc[300]{Theory of computation~Program semantics}

\keywords{probabilistic programming, Bayesian inference, program analysis, guaranteed bounds, posterior distribution, moments, tail asymptotics, loop invariant synthesis, quantitative analysis}

\maketitle

\ifarxiv
\vspace{2em}
\paragraph{Note}
This is the full version of the paper with the same name published at POPL 2025 \citep{ZaiserMO25}.
This document includes an appendix with proofs and additional details omitted from the conference version.
\vspace{1em}
\fi

\section{Introduction}
\label{sec:introduction}

Probabilistic programming is a discipline that studies programming languages with probabilistic constructs \citep{BartheKS2020}.
The term is overloaded however.
At the intersection with randomized algorithms and program analysis, it usually means a programming language with a construct for probabilistic branching or sampling from probability distributions.
As such, it is simply a language to express programs with random numbers and researchers study program analysis techniques for termination probabilities, safety properties, cost analysis, and others.
At the intersection with statistics and machine learning, probabilistic programming is used to express (Bayesian) statistical models \citep{MeentPYW18}.
Bayesian inference is a very successful framework for reasoning and learning under uncertainty: it updates prior beliefs about the world with observed data to obtain posterior beliefs using Bayes' rule.
As such, the programming languages for Bayesian models provide a construct for conditioning on data in addition to sampling from distributions.
Since Bayesian inference is a difficult problem, a lot of research focuses on inference algorithms, in particular their correctness and efficiency.
This paper contributes to both areas by developing methods to bound the distributions arising from probabilistic programs, especially those with loops.

\begin{wrapfigure}{r}{0.3\textwidth}
\vspace{-2em}
\begin{minipage}{0.3\textwidth}
\small
\begin{align*}
&\mathit{Throws} := 0; \mathit{Die} := 0; \\
&\Whilst{\mathit{Die} \ne 6}{ \\
&\quad \mathit{Die} \sim \Uniform\{1, \dots, 6\}; \\
&\quad \Observe \mathit{Die} \in \{2, 4, 6\}; \\
&\quad \mathit{Throws} \passign 1 }
\end{align*}
\end{minipage}
\caption{A probabilistic program with a loop and conditioning}
\label{fig:die-paradox}
\vspace{-1em}
\end{wrapfigure}

\begin{example}
\label{ex:die-paradox}
To illustrate the concept, consider the following puzzle due to Elchanan Mossel.
\begin{quote}
You throw a fair six-sided die repeatedly until you get a 6.
You observe only even numbers during the throws.
What is the expected number of throws (including the 6) conditioned on this event?
\end{quote}
This is a surprisingly tricky problem and most people get it wrong on the first try\footnote{In a survey on Gil Kalai's blog, only 27\% of participants chose the correct answer (\url{https://gilkalai.wordpress.com/2017/09/07/tyi-30-expected-number-of-dice-throws/}).}, based on the incorrect assumption that it is equivalent to throwing a die with only the three faces 2, 4, and 6.
Probability theory and statistics abound with such counterintuitive results (e.g. the Monty-Hall problem), and probabilistic programming offers a precise way to disambiguate their description and make them amenable to automatic analysis and inference tools.
Mossel's problem can be expressed as the probabilistic program in \cref{fig:die-paradox}.
The program has a loop that samples a die until it shows 6, and conditions on the number being even.
In each iteration, the counter $\mathit{Throws}$ is incremented.
\end{example}

\subsection{Challenges}

\paragraph{Bayesian inference}
In Bayesian inference, Bayes' rule is used to update prior distributions $p(\theta)$ of model variables $\theta$ with observed data $x$ to obtain posterior distributions: $p(\theta \mid x) = \frac{p(x \mid \theta) p(\theta)}{p(x)}$.
In practice, such Bayesian statistical models are too complex for manual calculations and inferring their posterior distribution is a key challenge in Bayesian statistics.
There are two approaches: exact and approximate inference.
\emph{Exact inference} aims to find an exact representation of the posterior distribution.
Such methods impose heavy restrictions on the supported probabilistic programs and do not usually scale well.
Practitioners therefore mostly use \emph{approximate methods} that do not aim to compute this distribution exactly, but rather to produce unbiased or consistent samples from it.
If the probabilistic program does not contain conditioning, samples can simply be obtained by running the program.
But with observations, program runs that violate the observations must be rejected.
Since the likelihood of the observations is typically low, simple rejection sampling is inefficient, and thus practical samplers use more sophisticated techniques, such as Markov chain Monte Carlo.
While more scalable, these approaches typically do not provide strong guarantees on the approximation error after a finite amount of time \citep[Section 11.5]{GelmanCSDVR13}.

\paragraph{Loops}
Loops are essential to the expressiveness of programming languages but notoriously hard to analyze.
This applies even more strongly to the probabilistic setting, where deciding properties like termination is harder than in the deterministic setting \citep{KaminskiK15}.
Even if a program does not use conditioning, loops can still make sampling difficult.
For example, a program may terminate almost surely, but its expected running time may be infinite.
This prevents sampling-based approaches since they need to run the program.
Furthermore, many inference algorithms are not designed to handle unbounded loops and may return erroneous results for such programs \citep{BeutnerOZ22}.
On the formal methods side, various approaches for probabilistic loop analysis have been proposed, employing techniques such as martingales, moments, and generating functions (see \cref{sec:related-work}).
If all program variables have finite support, the program can be translated to a probabilistic transition system and techniques from probabilistic model checking can be used.

None of these analysis techniques can be applied to \cref{ex:die-paradox} however: methods from cost analysis do not support conditioning and probabilistic model checking requires finite support (but $\mathit{Throws}$ is supported on $\NN$).
The approach by \citet{KlinkenbergBCHK24} via generating functions is theoretically applicable, but requires the user to provide a loop invariant template, i.e. a loop invariant where certain parameters may be missing.
Unfortunately, such an invariant cannot always be specified in their language \citep[Example 25]{KlinkenbergBCHK24}.
Even in cases where this is possible, we argue that figuring out its shape is actually the hard part: it already requires a good understanding of the probabilistic program and its distribution, so it is not a satisfactory solution.

\subsection{Guaranteed Bounds}
To deal with the above challenges, we investigate \defn{guaranteed bounds} on the program distribution.
``Guaranteed'' here refers to a method that computes deterministic (non-stochastic) results about the mathematical denotation of a program \citep{BeutnerOZ22}.
Such bounds are applicable more often than exact inference, e.g. in the presence of loops/recursion, and provide more assurance than approximate methods, which have at best stochastic guarantees.
Why are such bounds useful?

\begin{wrapfigure}{r}{0.4\textwidth}
\vspace{-1em}
\centering
\includegraphics[width=0.4\textwidth]{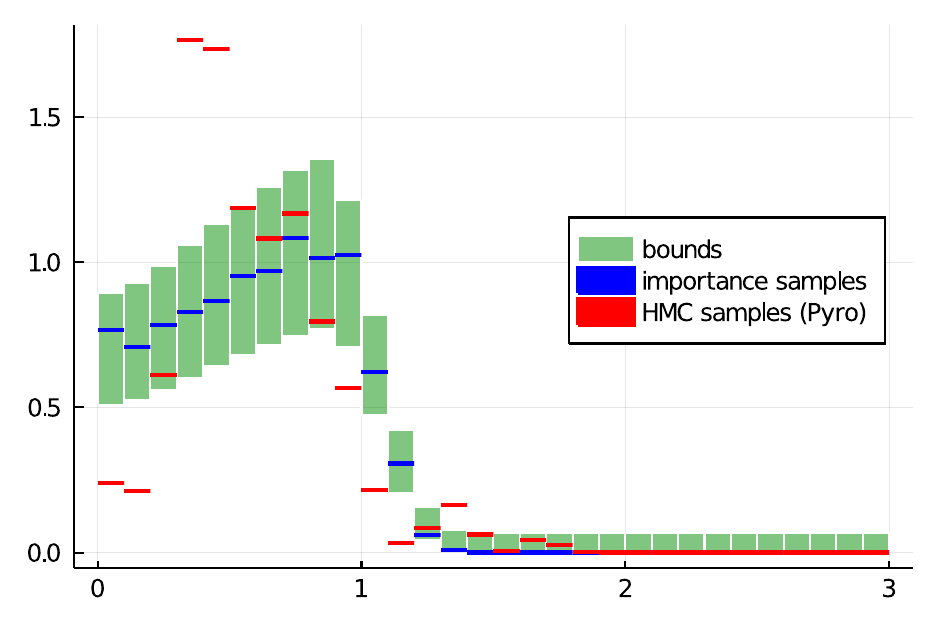}
\caption{Histogram of samples from two inference algorithms (importance sampling and Pyro's HMC), and the guaranteed bounds from \citet{BeutnerOZ22}.
The bounds show that Pyro's HMC produces wrong results.
(Source: \citet{BeutnerOZ22})}
\label{fig:pyro-wrong}
\vspace{-1em}
\end{wrapfigure}

\paragraph{Partial correctness properties}
In quantitative program analysis, one can verify safety properties by bounding the probability of reaching an unsafe state.
Bounding reachability probabilities is also a common problem in probabilistic model checking and quantitative program verification, yet it has not seen much attention in the context of probabilistic programming with conditioning, aside from the work by \citet{BeutnerOZ22} and \citet{WangYFLO24}.
Neither of those can bound moments of infinite-support distributions, whereas our work finds tight bounds on the expected value of $\mathit{Throws}$ in \cref{fig:die-paradox} (see \cref{sec:eval-comparison}).

\paragraph{Checking approximate inference}
In the context of Bayesian inference, the bounds can be useful to check and debug approximate inference algorithms and their implementations.
If the approximate results clearly contradict the bounds, the inference algorithm is likely incorrect, or some of its assumptions are violated, or it has not converged.
\Citet{BeutnerOZ22} provide an example of this: the inference tool Pyro yields wrong results for a probabilistic program with loops, but their bounds can detect this issue (\cref{fig:pyro-wrong}).%
\footnote{
The cause turned out to be an undocumented assumption in the inference algorithm.
Pyro's implementation seems to assume that the dimension (number of samples in a program run) of the problem is constant, which is violated when sampling inside probabilistic loops.
}
Another problem with approximate inference is the tail behavior of the posterior distribution, which is often crucial for the quality of the approximation \citep{LiangHM23}.
Previous work on guaranteed bounds \citep{BeutnerOZ22,WangYFLO24} does not address this aspect, but our work can bound the tail behavior as well.

\begin{table}
\centering
\footnotesize
\caption{
Comparison of our two approaches with the most relevant related work on probabilistic programs with loops.
(Cond.: supports (Bayesian) conditioning; Inf.: allows branching on variables with infinite support; Cont.: allows continuous distributions; Auto.: fully automated; Prob.: computes/bounds probabilities; Mom.: computes/bounds moments; Tails: computes/bounds tail asymptotics of this shape; A.a. (always applicable): yields a result for any program expressible in the respective language.
Partial support is denoted by ``\pmark''.)
}
\label{table:compare-related}
\alternaterowcolors
\begin{tabular}{lccccccccc}
\toprule
&\textbf{Type} &\textbf{Cond.?} &\textbf{Inf.?} &\textbf{Cont.?} &\textbf{Auto.?} &\textbf{Prob.?} &\textbf{Mom.?} &\textbf{Tails?} &\textbf{A.a.?} \\
\midrule
\citet{MoosbruggerSBK22} &exact &\xmark &\xmark &\cmark &\cmark &\pmark &\cmark &$O(n^{-k})$ &\xmark \\
\citet{BeutnerOZ22} &bounds &\cmark &\cmark &\cmark &\cmark &\cmark &\xmark &\xmark &\cmark \\
\citet{WangYFLO24} &bounds &\cmark &\cmark &\cmark &\cmark &\cmark &\xmark &\xmark &\xmark \\
\citet{KlinkenbergBCHK24} &exact &\cmark &\cmark &\xmark &\xmark &\cmark &\cmark &\xmark &\xmark \\
Resid. mass sem. (\cref{sec:unrolling-bounds}) &bounds &\cmark &\cmark &\xmark &\cmark &\cmark &\xmark &\xmark &\cmark \\
Geom. bounds (\cref{sec:geo-bounds}) &bounds &\cmark &\cmark &\xmark &\cmark &\cmark &\cmark &$O(c^n)$ &\xmark \\
\bottomrule
\end{tabular}
\vspace{-1em}
\end{table}

\paragraph{Problem statement}
Given a probabilistic program with posterior distribution $\mu$ on $\NN$, our goal is to bound:
\begin{enumerate}
  \item \emph{probability masses:} given $n \in \NN$, find $l, u \in [0, 1]$ such that $l \le \Prob_{X \sim \mu}[X = n] \le u$;
  \item \emph{moments:} given $k \in \NN$, find $l, u \in \nnr$ such that $l \le \ExpVal_{X \sim \mu}[X^k] \le u$;
  \item \emph{tail asymptotics:} find $c \in [0, 1)$ such that $\Prob_{X \sim \mu}[X = n] = O(c^n)$.
\end{enumerate}

\subsection{Contributions}

In this paper, we develop two new methods to compute guaranteed bounds on the distribution of discrete probabilistic programs with loops and conditioning.
Lower bounds can simply be found by unrolling each loop a finite number of times.
The main challenge is upper bounds and we attack it in two ways: the first is simple, always applicable, and efficient, but coarse; the second is more sophisticated and expensive, but yields much more informative bounds if applicable.
A summary of the most relevant related work is presented in \cref{table:compare-related} and a detailed account in \cref{sec:related-work}.

The first semantics, called \defn{residual mass semantics} (\cref{sec:unrolling-bounds}), is based on the simple idea of bounding the remaining probability mass after the loop unrollings, which has not previously been described, to our knowledge.
We make the following contributions:
\begin{itemize}
\item We introduce the residual mass as a simple but effective idea to bound posterior probabilities.
\item We prove soundness and convergence of the bounds to the true distribution (as loops are unrolled further and further).
\item We implement the semantics in a tool called \emph{Diabolo} and demonstrate empirically that the implementation is more efficient than previous systems (\cref{sec:eval-prev}).
\end{itemize}

The second semantics, called \defn{geometric bound semantics} (\cref{sec:geo-bounds}), is the main novelty of this paper.
The idea is to bound the distribution of loops in a more fine-grained manner with geometric tails, rather than a flat bound as in the first semantics.
\begin{itemize}
\item We present the concept of a \emph{contraction invariant} for a loop, which yields upper bounds on the distribution (\cref{sec:contraction-invariants}).
\item We introduce a family of distributions called \emph{eventually geometric distributions (EGDs)} that are used as an abstract domain to overapproximate the distribution of a loop (\cref{sec:egds}).
\item We present the \emph{geometric bound semantics} (\cref{sec:geo-bound-semantics}) which reduces the synthesis problem of such an invariant to solving a system of polynomial inequalities.
If successful, it immediately yields bounds on probability masses and, contrary to the first semantics, also on moments and tail probabilities of the program distribution.
\item We prove soundness of the semantics and convergence of the bounds, as loops are unrolled further and further (\cref{sec:geo-bound-properties}).
\item We identify necessary conditions and sufficient conditions for its applicability (\cref{sec:geo-bound-properties}).
\item We fully automate it in our tool \emph{Diabolo} (\cref{sec:impl}): contrary to previous work \citep{KlinkenbergBCHK24}, it does not rely on the user to provide a loop invariant (template).
\item We demonstrate its applicability on a large proportion of benchmarks from the literature and compare it with previous approaches and the residual mass semantics (\cref{sec:eval}).
\end{itemize}

Due to space constraints, proofs and some additional details had to be omitted.
They can be found in \ifarxiv\cref{apx:unrolling-bounds,apx:geom-bound-sem,apx:impl,apx:eval}\else the full version of this paper \citep{ZaiserMO24}\fi.

\subsection{Limitations}

Our work deals with \emph{discrete} probabilistic programs with hard conditioning.
This means that programs cannot sample or observe from continuous distributions.
Variables in our programming language take values in $\NN$; negative numbers are not supported (see \cref{sec:future-work} for possible extensions).
While our language is Turing-complete, some arithmetic operations like multiplication as well as some common infinite-support distributions (e.g. Poisson) are not directly supported (see \cref{sec:ppl} for details on our language's expressivity).
The initial values of the program variables are fixed: our methods cannot reason parametrically about these inputs.

The residual mass semantics can yield bounds on the distribution of any such probabilistic program, but convergence with increasing unrolling is only guaranteed if the program terminates almost surely.
If the program distribution has infinite support, we cannot bound the moments or tails: the bound does not contain enough information for this.

The geometric bound semantics yields EGD bounds, which allow bounding moments and tails.
On the other hand, such bounds do not exist for all programs.
Our experiments show that this is not a big concern for many probabilistic programs with loops in practice: EGD bounds exist for a majority of examples we found in the literature.
Another limitation of EGD bounds is that they cannot represent correlation of the tails of two variables, which may lead to imprecise tail bounds or failing to find bounds at all.
Finally, solving the system of polynomial inequalities arising from the semantics, while decidable, can be hard in practice and does not scale to very large programs.
It should be noted that scalability is a general issue in probabilistic program analysis owing to the hardness of the problem \citep{DagumL93} and not specific to our work.

\subsection{Notation and Conventions}
\label{sec:conventions}

We use the Iverson brackets $[\phi]$ to mean 1 if $\phi$ is satisfied and 0 otherwise.
We write variables representing vectors in bold ($\balpha$), tensors (multidimensional arrays) in uppercase and bold ($\T$), and random or program variables in uppercase ($X$).
We write $\zero$ and $\one$ for the constant zero and one functions.
We write $\zero_n$ and $\one_n$ for the zero and ones vector in $\RR^n$.
To set the $k$-th component of $\balpha$ to $v$, we write $\balpha[k \mapsto v]$.
Vectors $\balpha \in \RR^n$ are indexed as $\alpha_1, \dots, \alpha_n$.
We abbreviate $[d] := \{0, \dots, d-1\}$.
Tensors $\T \in \RR^{[d_1] \times \dots \times [d_n]}$ are indexed as $\T_{i_1, \dots, i_n}$ where $i_k$ ranges from $0$ to $d_k - 1$.
We write $\zero_{[d_1] \times \dots \times [d_n]}$ or simply $\zero$ for the zero tensor in $\RR^{[d_1] \times \dots \times [d_n]}$.
We write $|\T| = (d_1 \dots, d_n)$ for the dimensions of $\T \in \RR^{[d_1] \times \dots \times [d_n]}$, and in particular $|\T|_i = d_i$.
To index $\T$ along the $k$-th dimension, we write $\T_{k: j} \in \RR^{[d_1] \times \cdots \times [d_{k - 1}] \times [d_{k + 1}] \times \cdots \times [d_n]}$, which is defined by $(\T_{k: j})_{i_1, \dots, i_{k-1}, i_{k + 1}, \dots, i_n} = \T_{i_1, \dots, i_{k-1}, j, i_{k + 1}, \dots, i_n}$.
We often write tensor indices as $\ii := (i_1, \dots, i_n)$ for brevity.
We also abbreviate $\balpha^{\ii} := \prod_{k=1}^n \alpha_i^{i_k}$.
Other binary operations ($+$, $-$, $\min$, $\max$, etc.) work elementwise on vectors and tensors, e.g. $(\balpha + \bbeta)_j := \alpha_j + \beta_j$ and $\balpha \le \bbeta$ if and only if $\alpha_j \le \beta_j$ for all $j$.

\section{Background}
\label{sec:background}

\subsection{Probability and Measure Theory}

All probability spaces $M$ in this work are equipped with the discrete $\sigma$-algebra $\mathcal{P}(M)$.
The set of measures on a space $M$ is denoted by $\Meas(M)$.
The \defn{restriction} of a measure $\mu \in \Meas(M)$ to a set $M' \subseteq M$ is denoted by $\mu|_{M'}$.
The \defn{mass function} $m: M \to [0,1]$ of a measure $\mu \in \Meas(M)$ is defined by $m(x) := \mu(\{x\})$.
By abuse of notation, we will usually write $\mu$ for $m$ as well.
We will use the words \defn{distribution} and \defn{measure} interchangeably, but a \defn{probability distribution} is a measure $\mu$ with \defn{total mass} $\mu(M) = 1$.
The $k$-th \defn{moment} of a distribution $\mu \in \Meas(\NN)$ is defined as $\ExpVal_{X \sim \mu}[X^k] := \sum_{x \in \NN} \mu(x) \cdot x^k$.
The \defn{tail} of a distribution refers to the region far away from the mean.
A distribution $\mu$ on $\NN$ has \defn{tail asymptotics} $f(n)$ if $\mu(n) = \Theta(f(n))$.
The $\Dirac(x)$ distribution has the mass function $\Dirac(x)(y) = [x = y]$; the $\Bernoulli(\rho)$ distribution for $\rho \in [0,1]$ is given by $\Bernoulli(\rho)(0) = 1 - \rho$ and $\Bernoulli(\rho)(1) = \rho$; the $\Uniform\{a,\dots,b\}$ distribution for $a \le b \in \NN$ by $\Uniform\{a,\dots,b\}(n) = \frac{[a \le n \le b]}{b - a + 1}$; and the $\Geometric(\rho)$ distribution for $\rho \in (0,1]$ by $\Geometric(\rho)(n) = (1 - \rho)^n \cdot \rho$.

\subsection{Probabilistic Programming Language}
\label{sec:ppl}

Our programming language is a simple imperative language with a fixed number of variables $\X = (X_1, \dots, X_n)$.
Each variable only takes values in $\NN$.
We consider the following minimal programming language where $P$ denotes \defn{programs} and $E$ denotes \defn{events}.
\begin{align*}
P & ::= \Skip \mid P_1;P_2 \mid X_k := 0 \mid X_k \passign a \mid X_k \massign 1 \mid \Ite{E}{P_1}{P_2} \mid \Whilst{E}{P_1} \mid \Fail \\
E & ::= X_k = a \mid \Flip(\rho) \mid \lnot E_1 \mid E_1 \land E_2
\end{align*}
where $a \in \NN, \rho \in [0, 1]$ are literals.
We explain the constructs briefly:
$\Skip$ does nothing; $P_1;P_2$ executes $P_1$ and then $P_2$; $X_k := 0$ sets $X_k$ to 0; $X_k \passign a$ increments $X_k$ by $a$; $X_k \massign 1$ decrements $X_k$ by 1 but clamped at 0 ($X_k := \max(X_k - 1, 0)$); $\Ite{E}{P_1}{P_2}$ executes $P_1$ if the event $E$ occurs and $P_2$ otherwise; $\Whilst{E}{P_1}$ repeats $P_1$ as long as $E$ occurs; $\Fail$ states a contradictory observation, i.e. $\Observe \False$.
The $\Flip(\rho)$ event occurs with probability $\rho$, like a coin flip with bias $\rho$ coming up heads, or more formally, sampling 1 from an independent $\Bernoulli(\rho)$ distribution.%
\footnote{
This construct is not usually considered an event because it does not correspond to a subset of the state space.
We preferred it to a separate sampling statement for convenience: it avoids auxiliary variables for conditionals and loops.
}
The logical operators $\lnot, \land$ denote the complement and intersection of events.
We usually assume that all variables are set to zero with probability 1 at the beginning.

\paragraph{Syntactic sugar}
The language is kept minimal to simplify the presentation.
The following syntactic sugar for events (on the left) and statements (on the right) will be convenient:
\[
\begin{aligned}[c]
  E_1 \lor E_2 &\; \leadsto \; \lnot (\lnot E_1 \land \lnot E_2) \\
  X_k \in \{a_1, \dots, a_m\} &\; \leadsto \; X_k = a_1 \lor \dots \lor X_k = a_m \\
  X_k < a &\; \leadsto \; X_k \in \{0, \dots, a - 1\} \\
  X_k \le a &\; \leadsto \; X_k < a + 1  \\
  X_k \ge a &\; \leadsto \; \lnot (X_k < a) \\
  X_k > a &\; \leadsto \; \lnot (X_k \le a)
\end{aligned}
\;
\begin{aligned}[c]
  X_k := c &\; \leadsto \; X_k := 0; X_k \passign c \\
  \ProbBranch{P_1}{\rho}{P_2} &\; \leadsto \; \Ite{\Flip(\rho)}{P_1}{P_2} \\
  X_k \sim \Bernoulli(\rho) &\; \leadsto \; \ProbBranch{X_k := 1}{\rho}{X_k := 0} \\
  \Observe E &\; \leadsto \; \Ite{E}{\Skip}{\Fail}
\end{aligned}
\]

\paragraph{Sampling}
The above language does not include a sampling construct, but sampling from the following distributions can be encoded easily:
all finite discrete distributions (e.g. Bernoulli, Binomial, Uniform, Categorical), the Geometric and Negative Binomial distributions, as well as shifted versions thereof.
Finite discrete distributions can be expressed with branching and the $\Flip(\rho)$ event, similarly to $\Bernoulli(\rho)$ shown above.
The sampling construct $X_k \sim \Geometric(\rho)$ can be expressed as $X_k := 0; \Whilst{\lnot \Flip(\rho)}{X_k \passign 1}$.
A negative binomial distribution can be expressed similarly, as a sum of i.i.d. geometrically distributed variables.
Sampling from other distributions, such as the Poisson distribution, can also be expressed in principle since our language is Turing-complete, but we are not aware of a simple encoding.

\paragraph{Expressivity}
Our language is similar to the \texttt{cReDiP} language in \citet{KlinkenbergBCHK24}, but with more restricted sampling.
Like \texttt{cReDiP}, our language does not support negative integers, continuous distributions/variables, more complex arithmetic like multiplication, or comparisons of two variables.
Note that the non-probabilistic fragment of the language is already Turing-complete because it can simulate a three-counter machine.
Thus these missing constructs could be encoded in principle, at the expense of program complexity.

\subsection{Standard Semantics}

The semantics of programs takes an input measure on the state space and yields an output measure.
For probabilistic programs without conditioning, the state space of programs would be $\NN^n$ \citep{KlinkenbergBKKM20,ZaiserMO23}.
But in the presence of conditioning, we want to track observations failures, so we add a \defn{failure state} $\lightning$, which signifies a failed observation.
If we just represented failures by setting the measure to 0 (as in \citet{ZaiserMO23}), we would not be able to distinguish between observation failures and nontermination (see \citet{KlinkenbergBCHK24} for a detailed discussion).
So the semantics of programs transforms measures on the \defn{state space} $\estates := \NN^n \cup \{ \lightning \}$.
The intuitive idea is that when actually running a probabilistic program, $\Fail$ moves to the failure state $\lightning$ and aborts the program.

\paragraph{Orders on measures and transformers}
To define the semantics, we first need to define partial orders on measures and measure transformers, both of which are standard \citep{Kozen81}.
Given two measures $\mu, \nu$ on $M$, we say that $\mu \mle \nu$ if and only if $\mu(S) \le \nu(S)$ for all sets $S \subseteq M$.
On measure transformers $\phi, \psi: \Meas(M) \to \Meas(M)$, we define the pointwise lifting of this order: $\phi \trafole \psi$ if and only if $\phi(\mu) \mle \psi(\mu)$ for all $\mu \in \Meas(M)$.
Both orders are $\omega$-complete partial orders.

\begin{figure}
\small
\begin{subfigure}{0.33\textwidth}
\begin{minipage}{\textwidth}
\begin{align*}
  \mu|_{X_k = a}(S) &:= \mu(\{ \x \in S \mid x_k = a \}) \\
  \mu|_{\Flip(\rho)} &:= \rho \cdot \mu|_{\NN^n} \\
  \mu|_{\lnot E_1} &:= \mu|_{\NN^n} - \mu|_{E_1} \\
  \mu|_{E_1 \land E_2} &:= (\mu|_{E_1})|_{E_2} \\
\end{align*}
\end{minipage}
\caption{Standard semantics of events}
\label{fig:std-event-semantics}
\end{subfigure}
\hfill
\begin{subfigure}{0.66\textwidth}
\begin{minipage}{\textwidth}
\begin{align*}
\esem{\Skip}(\mu) &= \mu \\
\esem{P_1; P_2}(\mu) &= \esem{P_2}(\esem{P_1}(\mu)) \\
\esem{X_k := 0}(\mu)(S) &= \mu|_\lightning(S) + \mu(\{ \x \in \NN^n \mid \x[k \mapsto 0] \in S \}) \\
\esem{X_k \passign a}(\mu)(S) &= \mu|_\lightning(S) + \mu(\{ \x \in \NN^n \mid \x[k \mapsto x_k + a] \in S \}) \\
\esem{X_k \massign 1}(\mu)(S) &= \mu|_\lightning(S) + \mu(\{ \x \in \NN^n \mid \x[k \mapsto x_k \monus 1] \in S \}) \\
\esem{\Ite{E}{P_1}{P_2}}(\mu) &= \mu|_\lightning + \esem{P_1}(\mu|_{E}) + \esem{P_2}(\mu|_{\lnot E}) \\
\esem{\Fail}(\mu) &= \mu(\estates) \cdot \Dirac(\lightning) \\
\esem{\Whilst{E}{P_1}} &= \lfp{\Phi_{E, P_1}}
\end{align*}
\end{minipage}
\caption{Standard semantics of statements\vspace{-0.5em}}
\label{fig:std-stmt-semantics}
\end{subfigure}
\caption{Standard semantics for probabilistic programs\vspace{-1em}}
\label{fig:std-semantics}
\end{figure}

\paragraph{Semantics}
The \defn{standard semantics} is a straightforward adaptation of \citet{Kozen81,KlinkenbergBCHK24}.
For events $E$, it describes how a measure $\mu$ on program states is ``restricted'' to $E$ (\cref{fig:std-event-semantics}), suggestively written $\mu|_E$ like the restriction of $\mu$ to a subset of $\estates$, even though some events (e.g. $\Flip(\rho)$) do not correspond to such a subset.
For programs $P$, the statement semantics $\esem{P}: \Meas(\estates) \to \Meas(\estates)$ describes how $P$ transforms the measure on program states (\cref{fig:std-stmt-semantics}).

Regarding notation, we write $\mu|_{\lightning}$ for $\mu|_{\{\lightning\}}$, i.e. the restriction of $\mu$ to the failure state $\lightning$, and conversely $\mu|_{\NN^n}$ for the restriction of $\mu$ to $\NN^n$, excluding the failure state $\lightning$.
In \cref{sec:geo-bounds}, we will largely ignore the failure state $\lightning$ and work with distributions on $\NN^n$.
For this purpose, we introduce the \defn{simplified standard semantics} $\sem{P}: \Meas(\NN^n) \to \Meas(\NN^n)$ defined as $\sem{P}(\mu) := \esem{P}(\mu)|_{\NN^n}$.
An interesting case is the loop construct, whose semantics is ${\esem{\Whilst{E}{P_1}}} := \lfp{\Phi_{E, P_1}}$ where
\begin{align*}
&\Phi_{E, P_1}: (\Meas(\estates) \to \Meas(\estates)) \to (\Meas(\estates) \to \Meas(\estates)) \\
&\Phi_{E, P_1}(\psi)(\mu) := \mu|_\lightning + \mu|_{\lnot E} + \psi(\esem{P_1}(\mu|_{E}))
\end{align*}
is the unrolling operator and $\lfp{\Phi_{E, P_1}}$ denotes its least fixed point with respect to $\trafole$.
Another way to look at it is that $W := \esem{\Whilst{E}{P_1}}$ is the least solution to the equation:
\[ W(\mu) = \mu|_\lightning + \mu|_{\lnot E} + W(\esem{P_1}(\mu|_{E})) \quad \forall \mu \in \Meas(\estates) \]

\paragraph{Conditioning}
The $\Fail$ construct moves all the mass to the failure state $\lightning$.
Over the course of the program, some of the initial probability mass gets lost due to nontermination and some moves to the failure state due to failed observations.
Ultimately, we are interested in the distribution \emph{conditioned} on the observations.
These conditional probabilities of $\X = \x \in \NN^n$ are defined as:
$\Prob[\X = \x \mid \X \ne \lightning] = \frac{\Prob[\X = \x \land \X \ne \lightning]}{\Prob[\X \ne \lightning]} = \frac{\Prob[\X = \x]}{1 - \Prob[X = \lightning]}$.
Thus we have to remove the mass on $\lightning$ from the subprobability measure $\mu$ of the program and \defn{normalize} it, which is achieved by the function $\Normalize$ turning subprobability measures on $\estates$ into subprobability measures on $\NN^n$: $\Normalize(\mu) := \frac{\mu - \mu|_{\lightning}}{1 - \mu(\lightning)}$.
Thus the posterior probability distribution of a program $P$ with initial distribution $\mu$ (usually $\Dirac(\zero_n)$) is given by $\Normalize(\esem{P}(\mu))$.
Operationally, we can think of normalization as a rejection sampler of the posterior distribution: it runs the program repeatedly, rejects all runs that end in $\lightning$, and only yields the results of the remaining runs as samples.

\paragraph{Nontermination}
Due to the possibility of nontermination, even the normalized measure of a program may not be a probability distribution, only a subprobability distribution (see the discussion in \citet[Section 3.4]{KlinkenbergBCHK24}).
For example, the loop $\Whilst{\Flip(1)}{\Skip}$ will never terminate, so both its unnormalized and normalized semantics are always the zero measure.
This is not a problem in practice because nontermination is usually considered a bug for statistical models.
Tracking observation failures is useful for defining termination, however.
\begin{definition}
A program $P$ is \defn{almost surely terminating} on a \emph{finite} measure $\mu \in \Meas(\estates)$ if $\esem{P}(\mu)(\estates) = \mu(\estates)$.
\end{definition}

The semantics of a program $\esem{P}$ satisfies the usual properties listed below. The proof is analogous to \citet{KlinkenbergBCHK24}.

\begin{lemma}
\label{lem:stdsem-properties}
For any program $P$, the transformation $\esem{P}$ is linear and $\omega$-continuous, so in particular monotonic.
Also, the total measure does not increase: $\esem{P}(\mu)(\estates) \le \mu(\estates)$ for all $\mu \in \Meas(\estates)$.
\end{lemma}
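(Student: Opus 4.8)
The plan is to prove all three assertions simultaneously by structural induction on $P$, preceded by a short sub-induction on the event grammar. For events, one first checks that $\mu \mapsto \mu|_E$ is linear, monotone and $\omega$-continuous, that $\mu|_E \mle \mu|_{\NN^n} \mle \mu$, and --- the identity that makes the total-mass bookkeeping work --- that $\mu|_E + \mu|_{\lnot E} = \mu|_{\NN^n}$ and $\mu|_\lightning + \mu|_{\NN^n} = \mu$. For $X_k = a$ and conjunctions, $\mu|_E$ is just the restriction of $\mu$ to a genuine subset of $\NN^n$; for $\Flip(\rho)$ one has $\mu|_{\Flip(\rho)} = \rho\cdot\mu|_{\NN^n}$ and $\mu|_{\lnot\Flip(\rho)} = (1-\rho)\cdot\mu|_{\NN^n}$; and negation follows directly from its defining equation $\mu|_{\lnot E_1} = \mu|_{\NN^n} - \mu|_{E_1}$.

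For the program induction, the base cases are immediate: $\Skip$ is the identity; each assignment acts as the identity on the $\lightning$-part plus a pushforward of $\mu|_{\NN^n}$ along a map $\NN^n \to \NN^n$, hence is linear, $\omega$-continuous and even mass-\emph{preserving}; and $\Fail$ sends $\mu$ to $\mu(\estates)\cdot\Dirac(\lightning)$, where $\mu \mapsto \mu(\estates)$ is linear and $\omega$-continuous because the total mass of an increasing supremum of measures is the supremum of the total masses (monotone convergence). Sequencing preserves all three properties under composition, using $\esem{P_2}(\esem{P_1}(\mu))(\estates) \le \esem{P_1}(\mu)(\estates) \le \mu(\estates)$. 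For $\Ite{E}{P_1}{P_2}$, the transformer $\mu \mapsto \mu|_\lightning + \esem{P_1}(\mu|_E) + \esem{P_2}(\mu|_{\lnot E})$ is a finite sum of compositions of linear and $\omega$-continuous maps (by the event sub-induction and the induction hypothesis), hence linear and $\omega$-continuous, and its total mass is at most $\mu(\lightning) + \mu|_E(\estates) + \mu|_{\lnot E}(\estates) = \mu(\lightning) + \mu|_{\NN^n}(\NN^n) = \mu(\estates)$. These steps are routine propagations through the grammar, as in \citet{KlinkenbergBCHK24}.

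The loop $\Whilst{E}{P_1}$ is the one case needing real care, and I expect it to be the main obstacle. Its semantics is $\lfp{\Phi_{E,P_1}}$, and since $\Phi_{E,P_1}$ is $\omega$-continuous on the $\omega$-CPO of measure transformers, Kleene's theorem gives $\esem{\Whilst{E}{P_1}} = \sup_m \psi_m$ where $\psi_0$ is the everywhere-zero transformer and $\psi_{m+1} = \Phi_{E,P_1}(\psi_m)$. An inner induction on $m$ shows each $\psi_m$ is linear, $\omega$-continuous and mass-non-increasing: the step $\psi_m \mapsto \psi_{m+1} = \big(\mu \mapsto \mu|_\lightning + \mu|_{\lnot E} + \psi_m(\esem{P_1}(\mu|_E))\big)$ has exactly the shape of the conditional case, with $\psi_{m+1}(\mu)(\estates) \le \mu(\lightning) + \mu|_{\lnot E}(\estates) + \mu|_E(\estates) = \mu(\estates)$. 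It then remains to check that the three properties survive the passage to the pointwise supremum $\psi^\star(\mu) := \sup_m \psi_m(\mu)$, a setwise supremum of an increasing chain of measures: linearity uses $\sup_m(a_m + b_m) = \sup_m a_m + \sup_m b_m$ for monotone sequences; $\omega$-continuity of $\psi^\star$ uses the commutation of iterated suprema; and $\psi^\star(\mu)(\estates) = \sup_m \psi_m(\mu)(\estates) \le \sup_m \mu(\estates) = \mu(\estates)$, again by monotone convergence. Finally, monotonicity of each $\esem{P}$ follows from its $\omega$-continuity, or directly from linearity and positivity ($\mu \mle \nu$ implies $\nu - \mu \mge 0$, so $\esem{P}(\nu) = \esem{P}(\mu) + \esem{P}(\nu - \mu) \mge \esem{P}(\mu)$).
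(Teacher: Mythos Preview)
Your proposal is correct and follows exactly the standard structural-induction argument the paper defers to (it merely says ``The proof is analogous to \citet{KlinkenbergBCHK24}'' without spelling it out). The event sub-induction, the Kleene iteration for loops, and the checks that linearity, $\omega$-continuity, and mass non-increase survive the pointwise supremum are all the expected steps.
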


\section{Residual Mass Semantics}
\label{sec:unrolling-bounds}

In this section, we first present the \emph{lower bound semantics} based on loop unrolling.
Since even upper bounds on the normalized posterior require lower bounds on the normalizing constant, these lower bounds are also used as part of our methods for upper bounds.
The \emph{residual mass semantics} extends the lower bound semantics to obtain upper bounds on the unnormalized distribution as well and can thus derive both lower and upper bounds on the normalized posterior.
This way, any exact inference method for the loop-free fragment can be transformed into a method to bound the distribution of programs with loops.
We present both semantics formally and prove soundness and convergence.
Full proofs are given in \apxref{apx:unrolling-bounds}.

\subsection{Lower Bounds via Unrolling}

Lower bounds can be computed by unrolling the loop a finite number of times and discarding the part of the distribution that has not exited the loop after the unrollings.
Define the \defn{$u$-fold unrolling} $P^{(u)}$ of a program $P$ by unrolling each loop $u$ times:
\begin{align*}
  (P_1; P_2)^{(u)} &:= P_1^{(u)}; P_2^{(u)} \\
  (\Ite{E}{P_1}{P_2})^{(u)} &:= \Ite{E}{P_1^{(u)}}{P_2^{(u)}} \\
  (\Whilst{E}{P})^{(u)} &:= (\Whilst{E}{P})_0^{(u)} \\
  (\Whilst{E}{P})_v^{(u)} &:= \Ite{E}{P^{(u)}; (\Whilst{E}{P})_{v+1}^{(u)}}{\Skip}; \quad \text{for } v < u \\
  (\Whilst{E}{P})_u^{(u)} &:= \Whilst{E}{P^{(u)}} \\
  P^{(u)} &:= P \quad \text{otherwise}
\end{align*}

\begin{restatable}{lemma}{UnrollingSemantics}
\label{lem:unrolling-semantics}
Unrolling does not change the semantics: $\esem{P} = \esem{P^{(u)}}$ for all programs $P$ and $u \in \NN$. 
\end{restatable}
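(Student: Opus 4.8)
The plan is to prove $\esem{P} = \esem{P^{(u)}}$ by structural induction on the program $P$, with the loop case being the crux. For the base cases ($\Skip$, $X_k := 0$, $X_k \passign a$, $X_k \massign 1$, $\Fail$) we have $P^{(u)} = P$ by the last clause of the definition, so the equality is trivial. For the sequencing case, $\esem{(P_1;P_2)^{(u)}} = \esem{P_1^{(u)}; P_2^{(u)}} = \esem{P_2^{(u)}} \circ \esem{P_1^{(u)}}$, and by the induction hypothesis this equals $\esem{P_2} \circ \esem{P_1} = \esem{P_1;P_2}$. The conditional case is analogous, using linearity of the semantics (\cref{lem:stdsem-properties}) and the induction hypotheses for the two branches, together with the fact that $\mu|_E$ and $\mu|_{\lnot E}$ do not depend on the unrolling.

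The interesting case is $P = \Whilst{E}{P_1}$, where $P^{(u)} = (\Whilst{E}{P_1})_0^{(u)}$ is the result of syntactically unrolling the loop $u$ times and then leaving a residual copy of the loop (with body unrolled) at the bottom. First I would observe that, by the induction hypothesis applied to the body, $\esem{P_1^{(u)}} = \esem{P_1}$, so $\esem{\Whilst{E}{P_1^{(u)}}} = \lfp{\Phi_{E, P_1^{(u)}}} = \lfp{\Phi_{E, P_1}} = \esem{\Whilst{E}{P_1}}$; call this transformer $W$. It then suffices to show, by downward induction on $v$ from $u$ to $0$, that $\esem{(\Whilst{E}{P_1})_v^{(u)}} = W$ for every $v \le u$. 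The case $v = u$ is exactly the observation just made. For the inductive step, unfolding the definition of $(\Whilst{E}{P_1})_v^{(u)} = \Ite{E}{P_1^{(u)}; (\Whilst{E}{P_1})_{v+1}^{(u)}}{\Skip}$ and applying the statement semantics gives, for any $\mu$,
\begin{align*}
\esem{(\Whilst{E}{P_1})_v^{(u)}}(\mu) &= \mu|_\lightning + \esem{P_1^{(u)}; (\Whilst{E}{P_1})_{v+1}^{(u)}}(\mu|_E) + \esem{\Skip}(\mu|_{\lnot E}) \\
&= \mu|_\lightning + \mu|_{\lnot E} + W\bigl(\esem{P_1}(\mu|_E)\bigr),
\end{align*}
where I used $\esem{P_1^{(u)}} = \esem{P_1}$ and the downward-induction hypothesis $\esem{(\Whilst{E}{P_1})_{v+1}^{(u)}} = W$. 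But the right-hand side is precisely $\Phi_{E, P_1}(W)(\mu)$, and since $W = \lfp{\Phi_{E,P_1}}$ is a fixed point of $\Phi_{E,P_1}$, this equals $W(\mu)$. Hence $\esem{(\Whilst{E}{P_1})_v^{(u)}} = W = \esem{\Whilst{E}{P_1}}$, completing the downward induction; taking $v = 0$ finishes the loop case.

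The main obstacle is organizing the two nested inductions cleanly: the outer structural induction on $P$ provides $\esem{P_1^{(u)}} = \esem{P_1}$, which is needed both to identify the residual loop's semantics with $W$ and to discharge the body in each unrolled layer; the inner downward induction on $v$ then propagates the fixed-point equation upward through the syntactic unrollings. The one genuinely load-bearing fact is that $W$ is an \emph{exact} fixed point of $\Phi_{E,P_1}$ (not merely a pre- or post-fixed point), so that each added layer of unrolling leaves the semantics unchanged rather than merely bounding it; this is what distinguishes this lemma from the soundness-of-lower-bounds arguments elsewhere in the section. Everything else — linearity, the behavior of restriction $\mu|_E$ under unrolling, the definition of $\esem{\Ite{E}{\cdot}{\cdot}}$ — is routine bookkeeping.
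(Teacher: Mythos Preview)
Your proposal is correct and follows essentially the same approach as the paper's proof: structural induction on $P$, with the loop case handled by first using the induction hypothesis on the body to identify $\Phi_{E,P_1^{(u)}}$ with $\Phi_{E,P_1}$, and then a downward induction on $v$ that propagates the fixed-point equation of $W = \lfp{\Phi_{E,P_1}}$ through the unrolled layers. The paper's proof is organized identically and relies on exactly the same key fact you highlight, namely that $W$ is an exact fixed point of $\Phi_{E,P_1}$.
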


The \defn{lower bound semantics $\semlo{P}$} is then defined just like $\esem{P}$ except that $\semlo{\Whilst{E}{P_1}}(\mu) := \zero$.
In other words, $\semlo{P} = \esem{P'}$ where $P'$ is the program $P$ where all loops are replaced by infinite loops, effectively setting the measure to zero.
Its correctness follows from the monotonicity of the standard semantics.
\begin{restatable}[Soundness of lower bounds]{theorem}{SoundnessLowerBounds}
\label{thm:soundness-lower-bounds}
For all measures $\mu \in \Meas(\estates)$ and programs $P$, we have $\semlo{P}(\mu) \mle \esem{P}(\mu)$.
\end{restatable}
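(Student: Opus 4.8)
The plan is to prove $\semlo{P}(\mu) \mle \esem{P}(\mu)$ by structural induction on the program $P$, comparing the defining clauses of $\semlo{\cdot}$ and $\esem{\cdot}$ one by one. The key observation is that for every syntactic construct other than $\Whilst{E}{P_1}$ the two semantics are given by literally the same equation, so for those cases the work is purely to push the induction hypothesis through a fixed combination of restrictions, additions, and applications of subprogram semantics; the loop clause is the only place where the inequality becomes genuinely slack, and there it holds for a trivial reason.

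Concretely, I would argue as follows. For the atomic statements ($\Skip$, $X_k := 0$, $X_k \passign a$, $X_k \massign 1$, $\Fail$) we have $\semlo{P}(\mu) = \esem{P}(\mu)$, so $\mle$ holds with equality. For a sequence $P_1;P_2$: the induction hypothesis at $\mu$ gives $\semlo{P_1}(\mu) \mle \esem{P_1}(\mu)$, and the induction hypothesis applied at the measure $\semlo{P_1}(\mu) \in \Meas(\estates)$ gives $\semlo{P_2}(\semlo{P_1}(\mu)) \mle \esem{P_2}(\semlo{P_1}(\mu))$; monotonicity of $\esem{P_2}$ (from \cref{lem:stdsem-properties}) then yields $\esem{P_2}(\semlo{P_1}(\mu)) \mle \esem{P_2}(\esem{P_1}(\mu))$, and transitivity of $\mle$ closes the case. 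For $\Ite{E}{P_1}{P_2}$: both semantics have the shape $\mu|_\lightning + (\text{sem of }P_1)(\mu|_E) + (\text{sem of }P_2)(\mu|_{\lnot E})$, so two applications of the induction hypothesis together with the fact that $\mle$ is preserved under adding a fixed measure and under summing two $\mle$-comparable pairs (it is the pointwise order on the cone of measures) give the result. Finally, for $\Whilst{E}{P_1}$ we have $\semlo{\Whilst{E}{P_1}}(\mu) = \zero$ by definition, and $\zero \mle \esem{\Whilst{E}{P_1}}(\mu)$ because $\esem{\Whilst{E}{P_1}}(\mu)$ is a measure — hence non-negative — and $\zero$ is the least element of $(\Meas(\estates), \mle)$.

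The only external input is monotonicity of $\esem{\cdot}$ from \cref{lem:stdsem-properties}; everything else is elementary order theory on $\mle$ (existence of a bottom element, monotonicity of addition, transitivity). There is thus no real obstacle, and the one point that requires care is the sequencing case: the two sides there are evaluated at \emph{different} input measures, so it is monotonicity of the \emph{exact} semantics $\esem{P_2}$ — not of $\semlo{\cdot}$, which we never need to be monotone — that must be invoked. It is worth noting for context that all of the slack between $\semlo{\cdot}$ and $\esem{\cdot}$ is concentrated in the single loop clause, which is precisely why the residual mass semantics (and later the geometric bound semantics) obtain upper bounds by replacing only that clause while leaving the rest of the semantics untouched.
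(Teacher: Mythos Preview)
Your proposal is correct and follows essentially the same approach as the paper's proof: structural induction on $P$, with equality on atomic statements, the induction hypothesis plus monotonicity of $\esem{P_2}$ (from \cref{lem:stdsem-properties}) for sequencing, the induction hypothesis plus additivity of $\mle$ for conditionals, and the trivial $\zero \mle \esem{\Whilst{E}{P_1}}(\mu)$ for loops. Your write-up is slightly more explicit about where monotonicity enters and why only the exact semantics needs it, but the argument is the same.
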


The lower bounds will get better as loops are unrolled further and further.
In fact, they will converge to the true distribution.
\begin{restatable}[Convergence of lower bounds]{theorem}{ConvergenceLowerBounds}
\label{thm:convergence-lower-bounds}
For all \emph{finite} measures $\mu \in \Meas(\estates)$ and programs $P$, the lower bound $\semlo{P^{(u)}}(\mu)$ converges (in total variation distance) monotonically to the true distribution $\esem{P}(\mu)$ as $u \to \infty$.
\end{restatable}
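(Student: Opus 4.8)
The plan is to reduce the statement, via the monotone convergence theorem, to the claim that the lower bounds increase to the true semantics pointwise on the (countable) state space $\estates$. Concretely, I would establish three facts: (i) $u \mapsto \semlo{P^{(u)}}(\mu)$ is monotonically increasing in the order $\mle$; (ii) $\semlo{P^{(u)}}(\mu) \mle \esem{P}(\mu)$, and $\esem{P}(\mu)$ is finite; (iii) $\sup_u \semlo{P^{(u)}}(\mu) = \esem{P}(\mu)$ pointwise on singletons. Granting these, the difference $\esem{P}(\mu) - \semlo{P^{(u)}}(\mu)$ is a nonnegative measure, so the total variation distance between the two equals $\esem{P}(\mu)(\estates) - \semlo{P^{(u)}}(\mu)(\estates)$, which tends to $0$ by the monotone convergence theorem applied to (iii); the monotonicity in (i) is then exactly the ``monotonically'' in the statement.

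Fact (ii) is essentially free: by \cref{lem:unrolling-semantics} we have $\esem{P^{(u)}} = \esem{P}$, so \cref{thm:soundness-lower-bounds} applied to $P^{(u)}$ gives $\semlo{P^{(u)}}(\mu) \mle \esem{P^{(u)}}(\mu) = \esem{P}(\mu)$, and finiteness of $\esem{P}(\mu)$ follows from \cref{lem:stdsem-properties}. Facts (i) and (iii) are proved together by structural induction on $P$. The base cases are trivial since then $P^{(u)} = P$ has no loops. For sequential composition and conditionals one combines the induction hypotheses with $\omega$-continuity of the transformers $\semlo{\cdot}$ — which holds for the same reason as in \cref{lem:stdsem-properties}, since the clause $\semlo{\Whilst{E}{P_1}} := \zero$ is itself $\omega$-continuous — together with a small ``diagonal'' lemma: if $\nu_u \uparrow \nu$ and $\psi_j \uparrow \psi$ pointwise with every $\psi_j$ monotone and $\omega$-continuous, then $\sup_u \psi_u(\nu_u) = \psi(\nu)$.

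The loop case $P = \Whilst{E}{P_1}$ is the crux. Unwinding the definition of $P^{(u)}$ I would first show $\semlo{P^{(u)}} = \Psi_u^{\,u}(\bot)$, where $\bot$ is the least measure transformer $\mu \mapsto \zero$ and $\Psi_u$ is the ``lower-bound unrolling operator'' $\Psi_u(\psi)(\mu) := \mu|_\lightning + \mu|_{\lnot E} + \psi(\semlo{P_1^{(u)}}(\mu|_E))$ — it differs from $\Phi_{E,P_1}$ only in using the approximated body $\semlo{P_1^{(u)}}$, and the $u$-fold iteration starts from $\bot = \semlo{\Whilst{E}{P_1^{(u)}}}$. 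The inequality $\semlo{P^{(u)}} \trafole \esem{\Whilst{E}{P_1}}$ is just fact (ii). For the reverse inequality in the limit, fix $k \in \NN$: since $\Psi_u(\bot) \mge \bot$ and $\Psi_u$ is monotone, for $u \ge k$ we get $\Psi_u^{\,u}(\bot) \mge \Psi_u^{\,k}(\bot)$; and using the body induction hypothesis $\semlo{P_1^{(u)}} \uparrow \esem{P_1}$ (fact (iii) for $P_1$), monotonicity of $\Psi_u$ in its body, and the diagonal lemma inside a fixed $k$-fold composition, one shows $\sup_u \Psi_u^{\,k}(\bot) = \Phi_{E,P_1}^{\,k}(\bot)$. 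Hence $\sup_u \semlo{P^{(u)}} \mge \Phi_{E,P_1}^{\,k}(\bot)$ for every $k$, so $\sup_u \semlo{P^{(u)}} \mge \sup_k \Phi_{E,P_1}^{\,k}(\bot) = \lfp{\Phi_{E,P_1}} = \esem{\Whilst{E}{P_1}}$, which with fact (ii) forces equality. Monotonicity in $u$ (fact (i) for loops) follows similarly from $\semlo{P_1^{(u)}} \trafole \semlo{P_1^{(u+1)}}$, the fact that extra iterations of $\Psi_u$ only add mass, and monotonicity of $\Psi_u$ in its body.

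The main obstacle is exactly this loop case: the bound $\semlo{P^{(u)}}$ approximates the loop in \emph{two} ways at once — the body is unrolled only $u$ times \emph{and} the body itself is replaced by the approximation $\semlo{P_1^{(u)}}$ rather than $\esem{P_1}$ — so one cannot simply invoke Kleene iteration. Decoupling the two approximations (fix the iteration count $k$, let the unrolling depth $u \to \infty$ first, then send $k \to \infty$) and pushing suprema through a finite composition via $\omega$-continuity is where the technical care lies; the remaining ingredients ($\omega$-continuity of $\semlo{\cdot}$, the order-theoretic bookkeeping, and the monotone-convergence passage to total variation) are routine.
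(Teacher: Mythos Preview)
Your proposal is correct and follows essentially the same approach as the paper: structural induction on $P$, with the loop case handled by decoupling the two simultaneous approximations (body depth and iteration count) via a swap-of-suprema / diagonal argument. The paper packages this slightly differently---it introduces $P_*^{(u)}$, the unrolled program with residual loops replaced by $\Diverge$, so that $\semlo{P^{(u)}} = \esem{P_*^{(u)}}$ and the $\omega$-continuity of $\esem{-}$ from \cref{lem:stdsem-properties} can be reused verbatim---but your operator $\Psi_u$ is exactly the paper's $\Phi_{E,{P_1}_*^{(u)}}$, and your ``fix $k$, send $u\to\infty$, then $k\to\infty$'' is precisely the paper's $\sup_u \sup_n = \sup_n \sup_u$ step.
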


Note that the lower bound semantics involves only finite discrete distributions (assuming all variables are initially zero), which can be represented exactly as an array of the probability masses.
In fact, any exact semantics for the loop-free fragment leads to lower bounds for programs with loops.
A related approach by \citet{JansenDKKW16} combines unrolling with probabilistic model checking to obtain lower bounds on reachability probabilities and expectations.

\subsection{Upper Bounds via Residual Mass}

Lower bounds are easy because $\zero$ is clearly a lower bound that can be used as a starting point for the fixed point iteration.
This strategy does not work for upper bounds since there is no such obvious starting point.

Instead, a simple idea is to use the fact that the total mass of the state distribution can only decrease after program execution (\cref{lem:stdsem-properties}).
In other words, for a program $P$ with initial distribution $\mu \in \Meas(\estates)$, the total mass of the distribution after running $P$ is $\esem{P}(\mu)(\estates) \le \mu(\estates)$.
So the part of the distribution that we miss in the lower bounds by cutting loops short has a probability mass bounded by the distance to the total mass $\mu(\estates)$ at the start.
We call this gap the \defn{residual mass} of the program $P$ with initial measure $\mu$ and define it as $\semres{P}(\mu) := \mu(\estates) - \semlo{P}(\mu)(\estates) \in \nnr$.

The \defn{residual mass semantics} uses the residual mass to bound probabilities $\esem{P}(\mu)(S)$ of the program distribution for a set $S \subseteq \estates$ by analyzing the gap $\esem{P}(\mu)(S) - \semlo{P}(\mu)(S)$ as follows.
The gap is maximized if $S = \estates$ and thus bounded by $\esem{P}(\mu)(\estates) - \semlo{P}(\mu)(\estates)$.
Since $\esem{P}(\mu)(\estates) \le \mu(\estates)$ by \cref{lem:stdsem-properties}, this gap is always bounded by the residual mass.
The residual mass semantics can also bound the normalizing constant $1 - \esem{P}(\mu)(\lightning)$ since $\semlo{P}(\mu)(\lightning) \le \esem{P}(\mu)(\lightning) \le \semlo{P}(\mu)(\lightning) + \semres{P}(\mu)$ where the lower bound follows from \cref{thm:soundness-lower-bounds} and the upper bound from the previous argument.
Thus it also yields bounds on the (normalized) posterior probabilities.
The above soundness argument is made fully rigorous in the following theorem.
We also have a convergence theorem for the residual mass semantics, similar to \cref{thm:convergence-lower-bounds}.

\begin{restatable}[Soundness of the residual mass semantics]{theorem}{SoundnessResidualMassBounds}
\label{thm:soundness-residual-mass-bounds}
Let $P$ be a program and $\mu \in \Meas(\estates)$.
Then for all $S \subseteq \estates$, we can bound the unnormalized probabilities:
\[ \semlo{P}(\mu)(S) \le \esem{P}(\mu)(S) \le \semlo{P}(\mu)(S) + \semres{P}(\mu) \]
If $\mu$ is a probability measure and $S \subseteq \NN^n$, we can bound the posterior probabilities:
\[ \frac{\semlo{P}(\mu)(S)}{1 - \semlo{P}(\mu)(\lightning)} \le \Normalize(\esem{P}(\mu))(S) \le \frac{\semlo{P}(\mu)(S) + \semres{P}(\mu)}{1 - \semlo{P}(\mu)(\lightning) - \semres{P}(\mu)} \]
\end{restatable}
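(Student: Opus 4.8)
The plan is to establish the three two-sided bounds in turn, deriving the posterior bound from the unnormalized one. For the unnormalized bounds, the left inequality $\semlo{P}(\mu)(S) \le \esem{P}(\mu)(S)$ is exactly \cref{thm:soundness-lower-bounds}, so nothing new is needed there. For the right inequality, the key observation is that for any $S \subseteq \estates$ we have
\[
  \esem{P}(\mu)(S) - \semlo{P}(\mu)(S) \le \esem{P}(\mu)(\estates) - \semlo{P}(\mu)(\estates),
\]
which holds because $\semlo{P}(\mu) \mle \esem{P}(\mu)$ (\cref{thm:soundness-lower-bounds}) implies that the nonnegative measure $\esem{P}(\mu) - \semlo{P}(\mu)$ assigns at most as much mass to $S$ as to the whole space $\estates$. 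Then by \cref{lem:stdsem-properties}, $\esem{P}(\mu)(\estates) \le \mu(\estates)$, so the right-hand side is at most $\mu(\estates) - \semlo{P}(\mu)(\estates) = \semres{P}(\mu)$. Rearranging gives $\esem{P}(\mu)(S) \le \semlo{P}(\mu)(S) + \semres{P}(\mu)$, as required.

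For the posterior bounds, I would instantiate the unnormalized bounds at the specific sets $S$ and $\{\lightning\}$. First, applying the unnormalized bound to $\{\lightning\}$ gives
\[
  \semlo{P}(\mu)(\lightning) \le \esem{P}(\mu)(\lightning) \le \semlo{P}(\mu)(\lightning) + \semres{P}(\mu),
\]
hence for the normalizing constant $1 - \esem{P}(\mu)(\lightning)$ we get the sandwich $1 - \semlo{P}(\mu)(\lightning) - \semres{P}(\mu) \le 1 - \esem{P}(\mu)(\lightning) \le 1 - \semlo{P}(\mu)(\lightning)$. Since $\mu$ is a probability measure and $S \subseteq \NN^n$ (so $\lightning \notin S$), by definition $\Normalize(\esem{P}(\mu))(S) = \esem{P}(\mu)(S) / (1 - \esem{P}(\mu)(\lightning))$. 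Bounding the numerator above by $\semlo{P}(\mu)(S) + \semres{P}(\mu)$ and the denominator below by $1 - \semlo{P}(\mu)(\lightning) - \semres{P}(\mu)$ yields the upper bound; bounding the numerator below by $\semlo{P}(\mu)(S)$ and the denominator above by $1 - \semlo{P}(\mu)(\lightning)$ yields the lower bound. Both fractions are nonnegative, so the inequalities compose in the expected direction.

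The main obstacle is a subtle one: I need to be sure the denominator $1 - \semlo{P}(\mu)(\lightning) - \semres{P}(\mu)$ in the upper bound is strictly positive so that the quotient is well-defined and the inequality direction is preserved. This follows because $1 - \semlo{P}(\mu)(\lightning) - \semres{P}(\mu) = 1 - \semlo{P}(\mu)(\lightning) - \mu(\estates) + \semlo{P}(\mu)(\estates) \ge 1 - \mu(\estates) + \semlo{P}(\mu)(\NN^n) \ge \semlo{P}(\mu)(\NN^n) \ge 0$ using $\mu(\estates) = 1$ and $\semlo{P}(\mu)(\estates) = \semlo{P}(\mu)(\lightning) + \semlo{P}(\mu)(\NN^n)$; if this quantity is zero (e.g. a program whose lower-bound unrolling has captured no terminating non-failing mass yet), the stated upper bound is vacuously $+\infty$ or should be read with the usual convention, and one simply notes the bound holds trivially. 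Apart from this care with degenerate normalizing constants, the argument is a routine manipulation of the partial order $\mle$ and the definitions of $\semres{}$ and $\Normalize$, all of which are available from the earlier parts of the excerpt.
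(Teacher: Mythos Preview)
Your proposal is correct and follows essentially the same approach as the paper: the unnormalized upper bound comes from the nonnegative measure $\esem{P}(\mu)-\semlo{P}(\mu)$ having mass on $S$ at most its total mass, then bounding the latter via \cref{lem:stdsem-properties}; the posterior bounds follow by instantiating the unnormalized bound at $\{\lightning\}$ to sandwich the normalizing constant and combining numerator and denominator bounds. Your discussion of the degenerate denominator case is an extra detail the paper's proof does not spell out.
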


\begin{restatable}[Convergence of the residual mass semantics]{theorem}{ConvergenceResidualMassBounds}
\label{thm:convergence-residual-mass-bounds}
Let $P$ be a program terminating almost surely on a \emph{finite} measure $\mu \in \Meas(\estates)$.
Then the residual mass $\semres{P^{(u)}}(\mu)$ converges monotonically to 0 as $u \to \infty$.
In particular, the above bounds on $\Normalize(\esem{P^{(u)}}(\mu))(S)$ converge to the true posterior probability as $u \to \infty$.
\end{restatable}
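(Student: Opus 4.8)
The plan is to reduce the claim about $\semres{P^{(u)}}(\mu)$ to the convergence statement already established for the lower bounds in \cref{thm:convergence-lower-bounds}, and then propagate this through the explicit bound formulas of \cref{thm:soundness-residual-mass-bounds}. First I would unfold the definition: $\semres{P^{(u)}}(\mu) = \mu(\estates) - \semlo{P^{(u)}}(\mu)(\estates)$. Since $\mu(\estates)$ is a fixed finite constant not depending on $u$, showing $\semres{P^{(u)}}(\mu) \to 0$ is equivalent to showing $\semlo{P^{(u)}}(\mu)(\estates) \to \mu(\estates)$. By \cref{thm:convergence-lower-bounds}, $\semlo{P^{(u)}}(\mu)$ converges in total variation to $\esem{P}(\mu)$ as $u \to \infty$; in particular the total masses converge, i.e. $\semlo{P^{(u)}}(\mu)(\estates) \to \esem{P}(\mu)(\estates)$. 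Now the almost-sure-termination hypothesis enters: by \cref{def} of almost surely terminating (the unnumbered \textbf{Definition} preceding \cref{lem:stdsem-properties}), $\esem{P}(\mu)(\estates) = \mu(\estates)$. Combining, $\semlo{P^{(u)}}(\mu)(\estates) \to \mu(\estates)$, hence $\semres{P^{(u)}}(\mu) \to 0$. Monotonicity of this convergence follows from the monotone convergence $\semlo{P^{(u)}}(\mu) \mle \semlo{P^{(u+1)}}(\mu)$ asserted (for total variation) in \cref{thm:convergence-lower-bounds}, which forces $\semlo{P^{(u)}}(\mu)(\estates)$ to be nondecreasing in $u$ and hence $\semres{P^{(u)}}(\mu)$ nonincreasing; here one should be slightly careful and instead argue directly that unrolling one more time only moves mass \emph{out of} the discarded in-loop region, so the discarded mass can only shrink --- I would verify this monotonicity at the level of the unrolling recursion for $\semlo{}$ rather than leaning on the metric statement.

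For the second part --- convergence of the posterior-probability bounds --- I would plug the limit $\semres{P^{(u)}}(\mu) \to 0$ into the two-sided estimate of \cref{thm:soundness-residual-mass-bounds}. The lower bound $\frac{\semlo{P^{(u)}}(\mu)(S)}{1 - \semlo{P^{(u)}}(\mu)(\lightning)}$ and the upper bound $\frac{\semlo{P^{(u)}}(\mu)(S) + \semres{P^{(u)}}(\mu)}{1 - \semlo{P^{(u)}}(\mu)(\lightning) - \semres{P^{(u)}}(\mu)}$ both have numerators and denominators converging to those of $\frac{\esem{P}(\mu)(S)}{1 - \esem{P}(\mu)(\lightning)} = \Normalize(\esem{P}(\mu))(S)$, using $\semlo{P^{(u)}}(\mu)(S) \to \esem{P}(\mu)(S)$ and $\semlo{P^{(u)}}(\mu)(\lightning) \to \esem{P}(\mu)(\lightning)$ from \cref{thm:convergence-lower-bounds}, plus $\semres{P^{(u)}}(\mu) \to 0$. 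One needs the denominator to stay bounded away from $0$ in the limit: since $\mu$ is a probability measure and $P$ terminates almost surely, the normalizing constant $1 - \esem{P}(\mu)(\lightning) = \mu(\estates) - \esem{P}(\mu)(\lightning)$ equals the posterior total mass, which is positive \emph{provided} not all mass fails --- I would either add this mild nondegeneracy assumption explicitly or note that if it is zero the posterior is undefined and the statement is vacuous. Continuity of division on the region where the denominator is positive then gives convergence of both bounds to $\Normalize(\esem{P}(\mu))(S)$, and the squeeze is immediate.

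The main obstacle I anticipate is not the limit computation itself --- which is essentially a routine application of \cref{thm:convergence-lower-bounds} and continuity --- but rather the careful justification of \emph{monotone} convergence of the residual mass, and the handling of the denominator. For monotonicity one must argue at the syntactic level that $(P^{(u)})$ and $(P^{(u+1)})$ relate so that $\semlo{P^{(u)}} \trafole \semlo{P^{(u+1)}}$; this should follow by an induction on program structure mirroring the definition of $u$-fold unrolling, using that replacing the innermost "$\Whilst{E}{P^{(u)}}$" residual loop by one more explicit unrolling only increases the $\semlo{}$-measure (since the extra iteration's terminating mass is added and the rest still goes to $\zero$), combined with monotonicity of $\esem{\cdot}$ from \cref{lem:stdsem-properties}. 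The denominator issue is a genuine edge case: when the observations have zero probability the posterior does not exist, so the cleanest fix is to state the posterior-bound convergence under the standing assumption that the normalizing constant is positive, which is implicitly required for $\Normalize$ to be meaningful anyway.
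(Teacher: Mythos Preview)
Your proposal is correct and follows essentially the same route as the paper: unfold the definition of residual mass, invoke almost sure termination to replace $\mu(\estates)$ by $\esem{P}(\mu)(\estates)$, then apply \cref{thm:convergence-lower-bounds} for the limit, and finally pass to the normalized bounds by continuity of the quotient. The paper's proof is terser and simply cites the ``monotonically'' clause of \cref{thm:convergence-lower-bounds} for monotonicity and does not discuss the degenerate-denominator edge case; your added care on both points is warranted but does not change the argument.
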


Since the residual mass semantics only depends on the lower bounds, it can be implemented using the same representations and techniques as the lower bound semantics.
In particular, it also involves only finite discrete distributions, representable as arrays of probability masses.
We illustrate the residual mass semantics with an example.

\begin{example}[Residual mass semantics]
Consider again the ``die paradox'' program (\cref{ex:die-paradox,fig:die-paradox}) where the variables are $\X := (\mathit{Throws}, \mathit{Die})$.
After three unrollings, the lower bound distribution $\mu := \semlo{P^{(3)}}(\mu_0)$ for the initial distribution $\mu_0 = \Dirac(\zero)$ is given by
\[ \mu(\x) = \begin{cases}
\frac{1}{2} + \frac{1}{3} \cdot \frac{1}{2} = \frac{2}{3} & \text{if } \x = \lightning \\
\frac{1}{6} & \text{if } \x = (1, 6) \\
\frac{1}{3} \cdot \frac{1}{6} = \frac{1}{18} & \text{if } \x = (2, 6) \\
0 & \text{otherwise}
\end{cases} \]
because the failure probability is $\frac{1}{2}$ (for $\mathit{Die} \in \{1, 3, 5\}$) and the second iteration is entered in a state other than $\lightning$ with probability $\frac{1}{3}$ (for $\mathit{Die} \in \{2, 4\}$).
The residual mass is given by $\semres{P^{(3)}}(\mu_0) = \mu_0(\estates) - \mu(\estates) = 1 - (\frac{2}{3} + \frac{1}{6} + \frac{1}{18}) = \frac{1}{9}$.
This yields the following bounds on $\esem{P}(\mu_0)(s)$ for $s \in \estates$:
\[ \esem{P}(\mu_0)(\x) \in \begin{cases}
[\frac{2}{3}, \frac{7}{9}] & \text{if } \x = \lightning \\
[\frac{1}{6}, \frac{5}{18}] & \text{if } \x = (1, 6) \\
[\frac{1}{18}, \frac{1}{6}] & \text{if } \x = (2, 6) \\
[0, \frac{1}{9}] & \text{otherwise}
\end{cases} \]
Hence the normalizing constant $1 - \esem{P}(\mu_0)$ is in $[\frac{2}{9}, \frac{1}{3}]$, which yields the following normalized bounds.
They are not very tight but can be improved by increasing the unrolling depth.
\[ \Normalize(\esem{P}(\mu_0))(\x) \in \begin{cases}
[\frac{1}{6} \cdot 3, \frac{5}{18} \cdot \frac{9}{2}] = [\frac{1}{2}, \frac{5}{4}] & \text{if } \x = (1, 6) \\
[\frac{1}{18} \cdot 3, \frac{1}{6} \cdot \frac{9}{2}] = [\frac{1}{6}, \frac{3}{4}] & \text{if } \x = (2, 6) \\
[0, \frac{1}{9} \cdot \frac{9}{2}] = [0, \frac{1}{2}] & \text{otherwise}
\end{cases} \]
\end{example}

\section{Geometric Bound Semantics}
\label{sec:geo-bounds}

A big problem with the residual mass semantics is that we can bound the total measure and individual probabilities, but not moments and tail distributions of distributions with infinite support.
This problem cannot be fully resolved in general because moments do not always exist.

\begin{example}
\label{ex:power-of-two}
The moments of the distribution of the following program are infinite:
\begin{align*}
&X := 0; Y := 1; \\
&\Whilst{\Flip(1/2)}{ \\
&\qquad \Whilst{Y > 0}{X \passign 1; Y \massign 1}; \\
&\qquad \Whilst{X > 0}{X \massign 1; Y \passign 2}; }
\end{align*}
The loop body multiplies $Y$ by 2 using the auxiliary variable $X$.
This program is almost surely terminating and with result probabilities $\Prob[X = 0, Y = 2^m] = 2^{-m}$ for $m > 0$ and zero elsewhere.
Hence the $k$-th moment of $Y$ is $\ExpVal[Y^k] = \sum_{m=1}^\infty \Prob[X = 0, Y = 2^m] \cdot 2^{km} \ge \sum_{m=1}^{\infty} 1 = \infty$ for $k \ge 1$.
\end{example}

However, in many cases, more precise upper bounds can be found that also yield bounds on moments and tails.
There are two key ideas that make this \emph{geometric bound semantics} possible: contraction invariants and eventually geometric distributions (EGDs).
Like the residual mass semantics, it builds on the lower bound semantics to obtain bounds on the normalized posterior.

\subsection{Contraction Invariants}
\label{sec:contraction-invariants}

If a loop $L = \Whilst{E}{B}$ terminates almost surely on an initial distribution $\mu$, all the probability mass that entered the loop must eventually exit the loop.
Inspired by this observation, we consider the following assumption:
\begin{equation}
\esem{B}(\mu|_E) \mle c \cdot \mu \text{ with } 0 \le c < 1 \label{eq:contraction-assumption}
\end{equation}
In words: the probability distribution at the start of the next loop iteration decreases uniformly by a factor of $c$.
This assumption looks very strong and is often violated in practice, but if we assume for a minute that it holds, what can we derive?
Recall the fixpoint equation of loops:
\begin{align*}
\esem{\Whilst{E}{B}}(\mu) &= \mu|_\lightning + \mu|_{\lnot E} + \esem{\Whilst{E}{B}}(\esem{B}(\mu|_E)) \\
&\mle \mu|_\lightning + \mu|_{\lnot E} + c \cdot \esem{\Whilst{E}{B}}(\mu) &\text{by \cref{eq:contraction-assumption} and linearity}
\end{align*}
By rearranging, we find the upper bound $\esem{\Whilst{E}{B}}(\mu) \mle \frac{\mu|_\lightning + \mu|_{\lnot E}}{1 - c}$.

Unfortunately, the initial distribution $\mu$ will usually not satisfy \cref{eq:contraction-assumption} directly.
But we may be able to increase $\mu$ to $\nu \mge \mu$ such that $\nu$ satisfies $\esem{B}(\nu|_E) \mle c \cdot \nu$ with $0 \le c < 1$.
We then call $\nu$ a \defn{contraction invariant} or \defn{$c$-contraction invariant} of $\mu$.%
\footnote{The name is inspired by contraction mappings in mathematics where they describe functions $f: X \to Y$ of metric spaces with $d(f(x), f(y)) \le c \cdot d(x,y)$ with $c < 1$.}
Surprisingly, such a contraction invariant $\nu$ can often be found in practice (see \cref{sec:eval-applicability}), and we can then derive the following upper bound on the distribution after the loop: $\esem{\Whilst{E}{B}}(\mu) \mle \esem{\Whilst{E}{B}}(\nu) \mle \frac{\nu|_\lightning + \nu|_{\lnot E}}{1 - c}$.

\subsection{Eventually Geometric Distributions (EGDs)}
\label{sec:egds}

How can we find such a contraction invariant $\nu$?
We clearly have to restrict the candidate set for $\nu$ in some way.
Specifically, we consider a class of distributions that generalize geometric distributions.
Recall that a $\Geometric(\beta)$ distribution has probability masses $p(i) = \beta (1 - \beta)^i \propto (1 - \beta)^i$ with a decay rate of $1 - \beta$.
For the multivariate setting, we consider products of independent geometric distributions, but with a twist: probabilities of small values are allowed to differ for additional flexibility.
This section makes extensive use of vector and tensor notations as described in \cref{sec:conventions}.

\begin{definition}
Let $\P \in \nnr^{[d_1 + 1] \times \cdots \times [d_n + 1]}$ be an $n$-dimensional tensor and $\balpha \in [0,1)^n$.
The $n$-dimensional \defn{eventually geometric distribution (EGD)} with \defn{initial block} $\P$ and \defn{decay rates} $\balpha$, written $\egd{\P}{\balpha} \in \Meas(\NN^n)$ is defined by the following mass function (for $\ii = (i_1,\dots,i_n) \in \NN^n$):
\[ \egd{\P}{\balpha}(\ii) = \P_{\min(\ii, |\P| - \one_n)} \cdot \balpha^{\max(\ii - |\P| + \one_n, \zero_n)} = \P_{\min(i_1, d_1), \dots, \min(i_n, d_n)} \alpha_1^{\max(i_1 - d_1, 0)} \cdots \alpha_n^{\max(i_n - d_n, 0)} \]
\end{definition}

Note that an EGD is not necessarily a probability measure.
One can think of $\egd{\P}{\balpha}$ as a scaled product of independent $\Geometric(1 - \alpha_i)$ distributions, except it may differ in a finite ``prefix'' of size $d_j$ in each dimension $j$.
Outside this initial block, the probability masses are extended like in a geometric distribution with decay rate $\alpha_j$ in each dimension $j$.
In other words: \emph{eventually}, for values $i_j > d_j$, the $j$-th component of $\egd{\P}{\balpha}$ behaves like a $\Geometric(1 - \alpha_j)$ distribution; hence the name.
EGDs were originally motivated by the shape of their generating function \citep[Appendix C.1]{Zaiser24}.
They can be seen as a subclass of multivariate discrete phase-type distributions \citep{Campillo18}.
(Discrete phase-type distributions \citep{Neuts75} describe the absorption time in a Markov chain with one absorbing state \citep[Section 1.2.6]{BladtN17}.)

\begin{example}
\label{example:egd}
Consider the two-dimensional EGD given by $\egd{\P}{(\alpha_1, \alpha_2)}$ with $\P \in \nnr^{[2] \times [2]}$, i.e. $\P$ is a matrix
$\P = \begin{pmatrix}
  \P_{0,0} & \P_{0,1} \\ \P_{1,0} & \P_{1,1}
\end{pmatrix}$.
Its probability masses are given in the following table:
\begin{center}
  \small
  \begin{tabular}{r|lllll}
    $\egd{\P}{(\alpha_1, \alpha_2)}(x_1, x_2)$ & $x_2 = 0$ & $x_2 = 1$ & $x_2 = 2$ &\dots & $x_2 = j$ \\
    \hline
    $x_1 = 0$ & $\P_{0,0}$ & $\P_{0,1}$ & $\P_{0,1} \cdot \alpha_2$ & \dots & $\P_{0,1} \cdot \alpha_2^{j - 1}$ \\
    $x_1 = 1$ & $\P_{1,0}$ & $\P_{1,1}$ & $\P_{1,1} \cdot \alpha_2$ & \dots & $\P_{1,1} \cdot \alpha_2^{j-1}$ \\
    $x_1 = 2$ & $\P_{1,0} \cdot \alpha_1$ & $\P_{1,1} \cdot \alpha_1$ & $\P_{1,1}  \cdot\alpha_1 \alpha_2$ & \dots & $\P_{1,1} \cdot \alpha_1 \cdot \alpha_2^{j-1}$ \\
    $\vdots$ & $\vdots$ & $\vdots$ & $\vdots$ & $\ddots$ & $\vdots$ \\
    $x_1 = i$ & $\P_{1,0} \cdot \alpha_1^{i-1}$ & $\P_{1,1} \cdot \alpha_1^{i-1}$ & $\P_{1,1} \cdot \alpha_1^{i-1} \cdot \alpha_2$ & \dots & $\P_{1,1} \cdot \alpha_1^{i-1} \cdot \alpha_2^{j-1}$
  \end{tabular}
\end{center}
Here we can see that after the initial block of size $2 \times 2$, the probability masses are extended with decay rate $\alpha_1$ in the first dimension and $\alpha_2$ in the second dimension.
\end{example}

Why did we pick EGDs as the shape of our bounds? There are several reasons.
\begin{enumerate}
  \item \emph{Interpretability:} EGDs can easily be understood as geometric distributions, where the probability masses for small values (up to some threshold) have been modified.
  The relationship with the geometric distribution also makes it easier to compute its moments and the tail asymptotics can be read directly off the parameter $\balpha$ (see \cref{thm:egd-stats} below).
  \item \emph{Expressiveness:} the reason we allow the ``start'' of the distribution to deviate from geometric distributions is necessary for flexibility.
  If only exact geometric distributions were allowed, typical program operations like increasing or decreasing a variable (i.e. shifting the geometric distribution) could not be represented precisely enough and the whole approach would fail.
  \item \emph{Tractability:} a sufficient condition for the order $\mle$ is easy to check for EGDs, as we will see, because the probability masses follow a simple pattern.
  In fact, this condition is equivalent to the satisfiability of a system of polynomial inequalities.
  If we had based our semantics on an extension of, say, negative binomial distributions (of which geometric distributions are a special case), deciding the order would be harder because binomial coefficients start appearing in the probability masses.
\end{enumerate}

As evidence for interpretability and tractability, we show how to marginalize EGDs and how to compute their moments.
This is also needed to extract bounds on the moments from EGD bounds.
\begin{restatable}[Marginalizing EGDs]{lemma}{EgdMarginalize}
\label{lem:egd-marginalize}
Marginalizing out the $k$-th dimension from $\egd{\P}{\balpha}$ yields $\egd{\Q}{\bbeta}$ with $\bbeta = (\alpha_1, \dots, \alpha_{k - 1}, \alpha_{k + 1}, \dots \alpha_n)$ and $\Q = \sum_{j=0}^{|\P|_k-2} \P_{k:j} + \frac{\P_{k: |\P|_k - 1}}{1 - \alpha_k}$.
\end{restatable}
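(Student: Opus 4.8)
The plan is to compute the marginal directly from the definition of the EGD mass function, exploiting the fact that the ``geometric tail'' of an EGD factorizes across dimensions. Marginalizing out the $k$-th coordinate is, at the level of mass functions, just summing $\egd{\P}{\balpha}(\ii)$ over $i_k \in \NN$ with the remaining indices held fixed. Writing $d_l := |\P|_l - 1$, the definition expands to
\[ \egd{\P}{\balpha}(\ii) = \P_{\min(i_1,d_1),\dots,\min(i_n,d_n)} \cdot \prod_{l=1}^{n} \alpha_l^{\max(i_l - d_l,\, 0)}, \]
and the first step is to note that the factor $\prod_{l \ne k} \alpha_l^{\max(i_l - d_l, 0)}$ does not depend on $i_k$, so it can be pulled outside the sum over $i_k$. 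What remains inside is $\sum_{i_k = 0}^{\infty} \P_{k:\min(i_k, d_k)} \cdot \alpha_k^{\max(i_k - d_k,\, 0)}$, read off at the fixed remaining coordinates.

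The second step is to split this sum at the block boundary $i_k = d_k$. For $i_k \in \{0,\dots,d_k - 1\}$ we have $\min(i_k,d_k) = i_k$ and $\max(i_k - d_k, 0) = 0$, so those terms contribute $\sum_{j=0}^{d_k - 1} \P_{k:j}$; for $i_k \ge d_k$ we have $\min(i_k,d_k) = d_k$ and $\max(i_k - d_k, 0) = i_k - d_k$, so those terms form the geometric series $\P_{k:d_k} \cdot \sum_{m=0}^{\infty} \alpha_k^{m} = \frac{\P_{k:d_k}}{1 - \alpha_k}$, which converges precisely because $\alpha_k \in [0,1)$. Since $d_k - 1 = |\P|_k - 2$ and $d_k = |\P|_k - 1$, the two contributions sum to exactly the tensor $\Q$ in the statement.

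The third step is to recognize the result as $\egd{\Q}{\bbeta}$ evaluated at the remaining indices. Slicing, finite summation, and scaling all preserve the sizes along dimensions other than $k$, so $|\Q|_l = |\P|_l$ for $l \ne k$ and $\Q$ has the correct shape to serve as an $(n-1)$-dimensional initial block. Moreover, the factor $\prod_{l \ne k} \alpha_l^{\max(i_l - d_l, 0)}$ pulled out in the first step is exactly the geometric-tail factor of $\egd{\Q}{\bbeta}$ at those same indices, with $\bbeta = (\alpha_1,\dots,\alpha_{k-1},\alpha_{k+1},\dots,\alpha_n)$. Matching this against the definition of $\egd{\Q}{\bbeta}$ finishes the proof.

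I do not expect a genuine obstacle here: the argument is essentially careful bookkeeping with the tensor-slice and $\min/\max$ notation, and the only analytic ingredient is the trivial convergence of $\sum_{m \ge 0} \alpha_k^m$. The one point worth spelling out explicitly is the degenerate case $|\P|_k = 1$ (i.e. $d_k = 0$), where the finite sum over $j$ is empty and $\Q$ collapses to $\frac{\P_{k:0}}{1 - \alpha_k}$; the computation above already covers it uniformly. The main practical risk is purely notational: keeping the indices of the reduced tensor $\Q$ aligned with those of $\P$ throughout.
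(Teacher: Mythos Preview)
Your proposal is correct and follows essentially the same approach as the paper's proof: expand the EGD mass function, factor out the $\prod_{l \ne k} \alpha_l^{\max(i_l - d_l,0)}$ term that does not depend on $i_k$, split the remaining sum over $i_k$ at the block boundary $d_k = |\P|_k - 1$, evaluate the geometric tail as $\frac{\P_{k:d_k}}{1-\alpha_k}$, and recognize the result as $\egd{\Q}{\bbeta}$. Your explicit handling of the degenerate case $|\P|_k = 1$ is a nice touch that the paper leaves implicit.
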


\begin{restatable}[Moments and tails of EGDs]{theorem}{EgdMoments}
\label{thm:egd-stats}
Let $\egd{\P}{\alpha}$ be a one-dimensional EGD with $\P \in \nnr^{[d+1]}$.
Then its tail asymptotics is $\egd{\P}{\alpha}(n) = O(\alpha^n)$ and its $k$-th moment is
\[ \ExpVal_{X \sim \egd{\P}{\alpha}}[X^k] = \sum_{j = 0}^{d - 1} \P_j \cdot j^k + \frac{\P_d}{1 - \alpha} \sum_{i = 0}^k \binom{k}{i} d^{k - i} \ExpVal_{Y \sim \Geometric(1 - \alpha)}[Y^i] \]
and can thus be computed from the $k$-th moment of a geometric distribution.
In particular, the expected value is $\ExpVal_{X \sim \egd{\P}{\alpha}}[X] = \sum_{j = 0}^{d - 1} \P_j \cdot j + \frac{\P_d}{1 - \alpha}\left(d + \frac{\alpha}{1 - \alpha}\right)$.
\end{restatable}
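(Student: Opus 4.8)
The plan is to split the sum defining the moment at the boundary $d$ of the initial block, and treat the geometric tail by reducing it to a shifted geometric distribution. Write $X \sim \egd{\P}{\alpha}$ with $\P \in \nnr^{[d+1]}$, so by definition $\egd{\P}{\alpha}(n) = \P_n$ for $n < d$ and $\egd{\P}{\alpha}(n) = \P_d \cdot \alpha^{n - d}$ for $n \ge d$. First I would establish the tail asymptotics: for $n \ge d$ we have $\egd{\P}{\alpha}(n) = \P_d \alpha^{n-d} = (\P_d \alpha^{-d}) \cdot \alpha^n$, so $\egd{\P}{\alpha}(n) = O(\alpha^n)$ is immediate (with $\Theta(\alpha^n)$ if $\P_d > 0$). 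Since $\alpha \in [0,1)$, this also shows all moments are finite, which justifies the manipulations below.

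For the moment formula, write
\[
\ExpVal[X^k] = \sum_{n=0}^{\infty} \egd{\P}{\alpha}(n)\, n^k = \sum_{n=0}^{d-1} \P_n\, n^k + \sum_{n=d}^{\infty} \P_d\, \alpha^{n-d}\, n^k.
\]
The first sum is already the finite prefix term appearing in the statement. For the second sum, I would substitute $n = d + m$ with $m \ge 0$, giving $\P_d \sum_{m=0}^\infty \alpha^m (d+m)^k$. Now I recognise $\sum_{m=0}^\infty \alpha^m (d+m)^k$ in terms of a $\Geometric(1-\alpha)$ random variable $Y$: since $\Geometric(1-\alpha)(m) = \alpha^m (1-\alpha)$, we have $\sum_{m=0}^\infty \alpha^m (d+m)^k = \frac{1}{1-\alpha}\ExpVal_{Y \sim \Geometric(1-\alpha)}[(d+Y)^k]$. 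Expanding $(d+Y)^k = \sum_{i=0}^k \binom{k}{i} d^{k-i} Y^i$ by the binomial theorem and using linearity of expectation yields exactly $\frac{1}{1-\alpha}\sum_{i=0}^k \binom{k}{i} d^{k-i} \ExpVal[Y^i]$, which combined with the prefix term gives the claimed formula.

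For the special case $k = 1$, I would just instantiate: the binomial sum has two terms, $\binom{1}{0} d^1 \ExpVal[Y^0] + \binom{1}{1} d^0 \ExpVal[Y^1] = d + \ExpVal[Y]$, and the mean of $\Geometric(1-\alpha)$ is $\frac{\alpha}{1-\alpha}$, giving $\frac{\P_d}{1-\alpha}\bigl(d + \frac{\alpha}{1-\alpha}\bigr)$ plus the prefix $\sum_{j=0}^{d-1}\P_j \cdot j$, as stated. The only mild subtlety — and the closest thing to an obstacle — is justifying the interchange of summation (splitting the series and reindexing), but this is routine because every term is nonnegative and the series converges absolutely thanks to $\alpha < 1$; Tonelli/Fubini for series handles it. The identification $\sum_m \alpha^m (d+m)^k = \frac{1}{1-\alpha}\ExpVal_Y[(d+Y)^k]$ is the one place to be careful about the normalising factor $1-\alpha$ in the geometric pmf, but it is a one-line check.
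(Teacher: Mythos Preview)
Your proposal is correct and follows essentially the same approach as the paper: split the moment sum at the boundary $d$, reindex the tail as a shifted geometric expectation $\frac{\P_d}{1-\alpha}\ExpVal_{Y \sim \Geometric(1-\alpha)}[(d+Y)^k]$, and apply the binomial theorem. Your additional remarks on absolute convergence and the explicit $k=1$ instantiation go slightly beyond the paper's terse derivation, but the argument is the same.
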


\paragraph{Expansion of EGDs}
When performing binary operations (e.g. comparison or addition) on EGDs, it will often be convenient to assume that their initial blocks have the same size.
In fact, one can always expand the size of the initial block of an EGD without changing the distribution.
For example, the distribution from \cref{example:egd} can equivalently be represented as:
\begin{align*}
  \egd{\begin{pmatrix}\P_{0,0} & \P_{0,1} \\ \P_{1,0} & \P_{1,1}\end{pmatrix}}{\balpha} &= \egd{\begin{pmatrix}
    \P_{0,0} & \P_{0,1} \\ \P_{1,0} & \P_{1,1} \\ \P_{1,0} \alpha_1 & \P_{1,1} \alpha_1
  \end{pmatrix}}{\balpha} = \egd{\begin{pmatrix}
    \P_{0,0} & \P_{0,1} & \P_{0,1}\alpha_2 & \P_{0,1} \alpha_2^2 \\ \P_{1,0} & \P_{1,1} & \P_{1,1} \alpha_2 & \P_{1,1} \alpha_2^2
  \end{pmatrix}}{\balpha}
\end{align*}
More generally, given an $\egd{\P}{\balpha}$, an \defn{expansion} $\egd{\Q}{\balpha}$ with $|\P| \le |\Q|$ (meaning $|\P|_i \le |\Q|_i$ for all $i = 1, \dots, n$) is obtained by adding rows and columns with the appropriate factors of $\alpha_i$, as follows:
\begin{align*}
  \Q_{\ii} &= \P_{\min(\ii, |\P| - \one_n)} \balpha^{\max(\ii - |\P| + \one_n, \zero_n)} = \P_{\min(i_1, |\P|_1 - 1), \dots, \min(i_n, |\P|_n - 1)} \alpha_1^{\max(i_1 - |\P|_1 + 1, 0)} \cdots \alpha_n^{\max(i_n - |\P|_n + 1, 0)}
\end{align*}

\paragraph{Order of EGDs}
To decide $\egd{\P}{\balpha} \mle \egd{\Q}{\bbeta}$ for $\P$ and $\Q$ of the same size $|\P| = |\Q|$, one might hope that this would be equivalent to $\P \le \Q$ and $\balpha \le \bbeta$ (both elementwise), because such a simple conjunction of inequalities would be easy to check.
To extend this idea to EGDs $\egd{\P}{\balpha}$, $\egd{\Q}{\bbeta}$ of different sizes, we first have to expand them to the same size $\max(|\P|, |\Q|)$.
Inlining the definition of expansion, we obtain the following order.

\begin{definition}[Order of EGDs]
The order $\egd{\P}{\balpha} \egdle \egd{\Q}{\bbeta}$ is defined to hold if and only if for all $\ii < \max(|\P|, |\Q|)$ (using the notation explained in \cref{sec:conventions}):
\[ \balpha \le \bbeta \; \land \; \P_{\min(\ii, |\P| - \one_n)} \balpha^{\max(\ii - |\P| + \one_n, \zero_n)} \le \Q_{\min(\ii, |\Q| - \one_n)} \bbeta^{\max(\ii - |\Q| + \one_n, \zero_n)} \]
\end{definition}

\begin{example}
\label{example:egd-order}
Consider the $\egd{\P}{\balpha}$ from \cref{example:egd} with $|\P| = (2, 2)$ and the two-dimensional $\egd{\Q}{\bbeta}$ with $|\Q| = (1, 4)$ and $\Q = \begin{pmatrix} \Q_{0,0} & \Q_{0,1} & \Q_{0, 2} & \Q_{0,3} \end{pmatrix}$.
Then $\egd{\P}{\balpha} \egdle \egd{\Q}{\bbeta}$ is equivalent to the following system of inequalities:
\begin{align*}
\alpha_1 &\le \beta_1 & \P_{0,0} &\le \Q_{0,0} & \P_{0,1} &\le \Q_{0,1} & \P_{0,1} \cdot \alpha_2 &\le \Q_{0,2} & \P_{0,1} \cdot \alpha_2^2 &\le \Q_{0,3} \\
\alpha_2 &\le \beta_2 & \P_{1,0} &\le \Q_{0,0} \cdot \beta_1 & \P_{1,1} &\le \Q_{0,1} \cdot \beta_1 & \P_{1,1} \cdot \alpha_2 &\le \Q_{0,2} \cdot \beta_1 & \P_{1,1} \cdot \alpha_2^2 &\le \Q_{0,3} \cdot \beta_1
\end{align*}
\end{example}

Unfortunately, this order $\egdle$ is not exactly the same as $\mle$ because zero coefficients at the edge of $\P$ can make $\balpha$ irrelevant, e.g. $\egd{0}{\alpha} = \zero$ for any $\alpha \in [0,1)$.
Thus $\zero = \egd{0}{\alpha} \mle \egd{1}{0} = \Dirac(0)$ even though $\egd{0}{\alpha} \not\egdle \egd{1}{0}$ for $\alpha > 0$.
However, such counterexamples are rare in practice, and $\egdle$ does imply $\mle$, as the following lemma shows.
As a consequence, working with the simpler order $\egdle$ instead of $\mle$ is sufficient to establish bounds.

\begin{restatable}{lemma}{EgdOrder}
If $\egd{\P}{\balpha} \egdle \egd{\Q}{\bbeta}$, then $\egd{\P}{\balpha} \mle \egd{\Q}{\bbeta}$.
\end{restatable}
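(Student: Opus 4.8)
The plan is to show that the defining inequalities of $\egdle$ imply $\egd{\P}{\balpha}(\ii) \le \egd{\Q}{\bbeta}(\ii)$ for every index $\ii \in \NN^n$, since $\mle$ on discrete measures is exactly the pointwise order of mass functions. First I would invoke the expansion fact (stated in the ``Expansion of EGDs'' paragraph): both EGDs can be re-represented, without changing the distribution, with common initial-block size $\bm d := \max(|\P|, |\Q|)$. Write $\P'$ and $\Q'$ for the expanded tensors, so $\egd{\P}{\balpha} = \egd{\P'}{\balpha}$, $\egd{\Q}{\bbeta} = \egd{\Q'}{\bbeta}$, and $|\P'| = |\Q'| = \bm d$. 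Unfolding the definition of expansion, $\P'_{\ii} = \P_{\min(\ii, |\P| - \one_n)}\,\balpha^{\max(\ii - |\P| + \one_n, \zero_n)}$ for $\ii < \bm d$, and similarly for $\Q'$. Hence the hypothesis $\egd{\P}{\balpha} \egdle \egd{\Q}{\bbeta}$ says precisely that $\balpha \le \bbeta$ and $\P'_{\ii} \le \Q'_{\ii}$ for all $\ii < \bm d$.

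Next I would handle indices $\ii \ge \bm d$ (componentwise in the partial-order sense; more precisely, arbitrary $\ii \in \NN^n$, splitting each coordinate according to whether $i_k < d_k$ or $i_k \ge d_k$). For such $\ii$, the mass function of the expanded EGD is $\egd{\P'}{\balpha}(\ii) = \P'_{\min(\ii, \bm d - \one_n)} \cdot \balpha^{\max(\ii - \bm d + \one_n, \zero_n)}$. Set $\bm m := \min(\ii, \bm d - \one_n)$, which satisfies $\bm m < \bm d$, and $\bm e := \max(\ii - \bm d + \one_n, \zero_n) \ge \zero_n$. Then
\[
\egd{\P'}{\balpha}(\ii) = \P'_{\bm m}\,\balpha^{\bm e} \le \Q'_{\bm m}\,\balpha^{\bm e} \le \Q'_{\bm m}\,\bbeta^{\bm e} = \egd{\Q'}{\bbeta}(\ii),
\]
where the first inequality uses $\P'_{\bm m} \le \Q'_{\bm m}$ (valid since $\bm m < \bm d$), and the second uses $\balpha \le \bbeta$ together with the fact that all entries of $\Q'$ and all $\balpha, \bbeta$ are nonnegative, so $\balpha^{\bm e} \le \bbeta^{\bm e}$ by monotonicity of $\prod_k t_k^{e_k}$ in each $t_k \ge 0$. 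Since this covers every $\ii \in \NN^n$, we get $\egd{\P}{\balpha}(\ii) \le \egd{\Q}{\bbeta}(\ii)$ pointwise, hence $\egd{\P}{\balpha} \mle \egd{\Q}{\bbeta}$.

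The only genuinely delicate point is bookkeeping: making sure that the single expression $\egd{\P'}{\balpha}(\ii) = \P'_{\bm m}\balpha^{\bm e}$ correctly absorbs both the ``inside the initial block'' case (where $\bm e$ has zero in that coordinate and the value is read directly off $\P'$) and the ``geometric tail'' case (where $\P'_{\bm m}$ reads the last slice and $\balpha^{\bm e}$ supplies the decay), uniformly in each coordinate. Once the notation $\bm m, \bm e$ is fixed, the argument is the two-line chain above; I do not expect any real obstacle beyond carefully aligning the $\min/\max$ expressions with the definitions of expansion and of the EGD mass function.
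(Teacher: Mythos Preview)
Your proposal is correct and follows essentially the same approach as the paper's proof: expand both EGDs to the common size $\max(|\P|,|\Q|)$, observe that the $\egdle$ hypothesis becomes precisely $\balpha \le \bbeta$ and $\P'_{\ii} \le \Q'_{\ii}$ on the expanded blocks, and then conclude the pointwise mass-function inequality via the chain $\P'_{\bm m}\balpha^{\bm e} \le \Q'_{\bm m}\balpha^{\bm e} \le \Q'_{\bm m}\bbeta^{\bm e}$. The paper's version is slightly terser (it collapses your two-step chain into one displayed inequality), but the structure and the key ingredients are identical.
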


\subsection{Semantics}
\label{sec:geo-bound-semantics}

Using the ideas from the previous sections, we define a compositional semantics $\semgeo{P}$ for upper bounds on the unnormalized distribution of $P$.
It is called \defn{geometric bound semantics} because it operates on eventually geometric distribution bounds due to their desirable properties mentioned in \cref{sec:egds}.
It also turns out that EGDs are closed under many operations we require: restricting to events, marginalizing, and adding or subtracting constants from variables.

The semantics $\semgeo{P}$ operates on distributions on $\NN^n$, not $\estates$.
We do not track the failure state $\lightning$ in this semantics, because the geometric bound semantics is only applicable to almost surely terminating programs (see \cref{thm:necessary-runtime}), so there is no need to distinguish between observation failure and nontermination.
Like in the residual mass semantics, we can bound the normalizing constant with the help of the lower bound semantics (see \cref{thm:soundness-geo-bounds}).

\paragraph{Relational semantics}
Since there may be more than one upper bound, $\semgeo{P}$ is not a function, but a binary \emph{relation} on EGDs.
The idea is that $(\egd{\P}{\balpha}, \egd{\Q}{\bbeta}) \in {\semgeo{P}}$, also written $\egd{\P}{\balpha} \semgeo{P} \egd{\Q}{\bbeta}$, ensures that $\sem{P}(\egd{\P}{\balpha}) \mle \egd{\Q}{\bbeta}$.
If this $\egd{\Q}{\bbeta}$ is unique for a given $\egd{\P}{\balpha}$, we will use function notation: ${\semgeo{P}}(\egd{\P}{\balpha}) = \egd{\Q}{\bbeta}$.
Like $\egdle$, the relation $\semgeo{P}$ depends on the \emph{representations} $(\P, \balpha), (\Q, \bbeta)$ of the involved EGDs $\egd{\P}{\balpha}$ and $\egd{\Q}{\bbeta}$, not just their measures.

\begin{figure}
\centering
\small
\alternaterowcolors
\begin{tabularx}{0.8\textwidth}{lX}
\toprule
Event $E$ &Restriction $\evgeo{\egd{\P}{\balpha}}{E}$ of $\egd{\P}{\balpha}$ to the event $E$ \\
\midrule
$X_k = a$ &
  $\egd{\Q}{\balpha[k \mapsto 0]}$ \newline
  where $|\Q| = |\P|[k \mapsto \max(|\P|_k, a + 2)]$ \newline
  with $\Q_{k:j} = \P_{k: \min(j, |\P|_k - 1)} \cdot \alpha_k^{\max(0, j - |\P|_k + 1)} \cdot [j = a]$ for $j = 0, \dots, |\Q|_k - 1$ \\[0.2em]
$\Flip(\rho)$ &$\egd{\rho \cdot \P}{\balpha}$ \\[0.2em]
$\lnot E_1$ &
  $\egd{\Q}{\balpha}$ \newline
  where $\evgeo{\egd{\P}{\balpha}}{E_1} = \egd{\R}{\bgamma}$ \newline
  $|\Q| = |\R|$ and $\Q_{\ii} = \P_{\min(\ii, |\P| - \one_n)} \cdot \balpha^{\max(\ii - |\P| + \one_n, \zero_n)} - \R_{\ii} \quad \forall \ii \le |\Q|$ \\[0.2em]
$E_1 \land E_2$ &$\evgeo{\left(\evgeo{\egd{\P}{\balpha}}{E_1}\right)}{E_2}$ \\
\bottomrule
\end{tabularx}\vspace{-0.5em}
\caption{Geometric bound semantics for events\vspace{-1em}}
\label{fig:egd-event-semantics}
\end{figure}

\paragraph{Event semantics}
The event semantics $\evgeo{\egd{\P}{\balpha}}{E}$ computes an EGD representation of the standard event semantics $\egd{\P}{\balpha}|_E$ (\cref{fig:egd-event-semantics}).
The restriction $\egd{\Q}{\bbeta} := \evgeo{\egd{\P}{\balpha}}{E}$ is generally computed by expanding the initial block $\P$ (if necessary) and then zeroing its entries not corresponding to the event $E$.
An important property is that $\egd{\Q}{\bbeta} = \egd{\Q}{\balpha}$.

For the event $X_k = a$, we have to set the coefficients of $\P_{k:j}$ to zero for $j \ne a$.
For this, we first have to expand $\P$ to $\Q$ with $|\Q|_k = a + 2$ if necessary.
Then we set $\Q_{k:j} = \zero$ for $j \ne a$ to ensure that all coefficients different from $a$ are zero, and also set $\alpha_k = 0$.
For instance, the EGD from \cref{example:egd} restricted to the event $X_2 = 2$ is
\[ \evgeo{\egd{\begin{pmatrix}
  \P_{0,0} & \P_{0,1} \\
  \P_{1,0} & \P_{1,1}
\end{pmatrix}}{\balpha}}{X_2 = 2} = \egd{\begin{pmatrix}
  0 & 0 & \P_{0,1} \cdot \alpha_2 & 0 \\
  0 & 0 & \P_{1,1} \cdot \alpha_2 & 0
\end{pmatrix}}{(\alpha_1, 0)} \]
Next, the event $\Flip(\rho)$ is independent of the current distribution, so we can simply multiply the initial block $\P$ by the probability $\rho$.
For the complement $\lnot E_1$, we first compute the restriction $\evgeo{\egd{\P}{\balpha}}{E_1} = \egd{\R}{\bgamma}$.
Since $\egd{\R}{\bgamma} = \egd{\R}{\balpha}$ by the property mentioned above, we can compute its complement $\egd{\Q}{\balpha} := \egd{\P}{\balpha} - \egd{\R}{\balpha}$ by first expanding $\egd{\P}{\balpha}$ to $\egd{\P'}{\balpha}$ with $\P'$ of the same size as $\R$, and then subtracting $\Q := \P' - \R$.
So the result is $\egd{\Q}{\balpha}$.
Continuing the above example with the event $\lnot(X_2 = 2)$, we get:
\[ \evgeo{\egd{\P}{\balpha}}{\lnot(X_2 = 2)} = \egd{\P}{\balpha} - \evgeo{\egd{\P}{\balpha}}{X_2 = 2} = \egd{\begin{pmatrix}
  \P_{0,0} & \P_{0,1} & 0 & \P_{0,1} \cdot \alpha_2^2 \\
  \P_{1,0} & \P_{1,1} & 0 & \P_{1,1} \cdot \alpha_2^2
\end{pmatrix}}{\balpha} \]
Finally, the restriction to an intersection $E_1 \land E_2$ is computed by restricting to $E_1$ and then to $E_2$.

\begin{figure}
\small
\alternaterowcolors
\begin{tabularx}{\textwidth}{lX}
\toprule
Statement $P$ &Constraints for $\egd{\P}{\balpha} \semgeo{P} \egd{\Q}{\bbeta}$ \\
\midrule
$\Skip$ &$\bbeta = \balpha \; \land \; \Q = \P$ \\[0.2em]
$P_1; P_2$ &$ \exists \R, \bgamma \ldotp \; \egd{\P}{\balpha} \semgeo{P_1} \egd{\R}{\bgamma} \; \land \; \egd{\R}{\bgamma} \semgeo{P_2} \egd{\Q}{\bbeta}$ \\[0.2em]
$X_k := 0$ &
  $\displaystyle \bbeta = \balpha[k \mapsto 0] \; \land \; \Q_{k: 0} = \frac{\P_{k: |\P|_k - 1}}{1 - \alpha_k} + \sum_{j=0}^{|\P|_k - 2} \P_{k: j} \; \land \; \Q_{k:1} = \zero$ \newline
  where $|\Q| = |\P|[k \mapsto 2]$ \\[0.2em]
$X_k \passign a$ &
  $\displaystyle \bbeta = \balpha \; \land \; \bigwedge_{j = 0}^{|\Q|_k - 1} \Q_{k: j} = \begin{dcases}
    0 &\text{if } j < a \\
    \P_{k: j - a} &\text{else} \\
  \end{dcases}$ \newline
  where $|\Q| = |\P|[k \mapsto |\P|_k + a]$ \\[0.2em]
$X_k \massign 1$ &
  $\displaystyle \bbeta = \balpha \; \land \; \bigwedge_{j = 0}^{|\Q|_k - 1} \Q_{k: j} = \begin{dcases}
    \P_{k: 0} + \P_{k: \min(|\P|_k - 1, 1)} \cdot \alpha^{\max(2 - |\P|_k, 0)} &\text{if } j = 0 \\
    \P_{k: \min(|\P|_k - 1, j + 1)} \cdot \alpha^{\max(j + 2 - |\P|_k, 0)} &\text{else}
  \end{dcases}$ \newline
  where $|\Q| = |\P|[k \mapsto \max(|\P|_k - 1, 2)]$ \\[0.2em]
$\Ite{E}{P_1}{P_2}$ &
  $\exists \R, \S, \bgamma, \bdelta \ldotp \; \evgeo{\egd{\P}{\balpha}}{E} \semgeo{P_1} \egd{\R}{\bgamma} \; \land \; \evgeo{\egd{\P}{\balpha}}{\lnot E} \semgeo{P_2} \egd{\S}{\bdelta}$ \newline
  ${} \land \; \left(\egd{\R}{\bgamma}, \egd{\S}{\bdelta}, \egd{\Q}{\bbeta}\right) \in \JoinRel$ (see \cref{def:join-relation}) \\[0.2em]
$\Whilst{E}{P}$ &
  $\exists c, \R, \S, \bgamma, \bdelta \ldotp \; 0 \le c < 1 \;  \land \; \egd{\P}{\balpha} \egdle \egd{\R}{\bgamma}$ \newline
  ${} \land \; \evgeo{\egd{\R}{\bgamma}}{E} \semgeo{P} \egd{\S}{\bdelta} \; \land \; \egd{\S}{\bdelta} \egdle \egd{c \cdot \R}{\bgamma}$ \newline
  ${} \land \; \evgeo{\egd{\frac{\R}{1-c}}{\bgamma}}{\lnot E} = \egd{\Q}{\bbeta}$ \\[0.2em]
$\Fail$ &$\bbeta = \zero_n \; \land \; \Q = \zero_{[1] \times \dots \times [1]}$ \\
\bottomrule
\end{tabularx}\vspace{-0.5em}
\caption{Geometric bound semantics for statements\vspace{-1em}}
\label{fig:egd-statement-semantics}
\end{figure}

\paragraph{Statement semantics}
The statement semantics is a binary relation $\semgeo{P}$ on EGDs.
It defines $\egd{\P}{\balpha} \semgeo{P} \egd{\Q}{\bbeta}$ by induction on the structure of $P$ via constraints on $\P, \balpha, \Q, \bbeta$ (\cref{fig:egd-statement-semantics}) and ensures $\sem{P}(\egd{\P}{\balpha}) \mle \egd{\Q}{\bbeta}$.
The semantics of $\Skip$ is the identity relation and $P_1;P_2$ is relational composition.
Since $\Fail$ corresponds to observing a zero-probability event, its right-hand side is a zero distribution.
For $X_k := 0$, we essentially have to marginalize out the $k$-th dimension (see \cref{lem:egd-marginalize}) and then put all the probability mass on $X_k = 0$.
For instance, applying $\semgeo{X_2 := 0}$ to the $\egd{\P}{\balpha}$ from \cref{example:egd}, yields the unique right-hand side
\[ \egd{\Q}{\bbeta} = \egd{\begin{pmatrix}
\P_{0,0} + \frac{\P_{0,1}}{1 - \alpha_2} & 0 \\
\P_{1,0} + \frac{\P_{1,1}}{1 - \alpha_2} & 0
\end{pmatrix}}{(\alpha_1, 0)} \]
For $X_k \passign c$, we shift the coefficients to the right by $c$ in the $k$-th dimension, and fill up with zeros.
For instance, applying $\semgeo{X_2 \passign 2}$ to the above EGD yields the unique right-hand side
\[ \egd{\Q}{\bbeta} = \egd{\begin{pmatrix}
0 & 0 &\P_{0,0} &\P_{0,1} \\
0 & 0 &\P_{1,0} &\P_{1,1}
\end{pmatrix}}{\balpha} \]
For $X_k \massign 1$, we shift the coefficients to the left by 1 in the $k$-th dimension, except for $\P_{k:0}$, which stays at index $0$, so we get the sum of $\P_{k:0}$ and $\P_{k:1}$ at index $0$.
This special case requires $|\Q|_k \ge 2$, so we may first have to expand $\P$ to ensure $|\P|_k \ge 3$, which is done implicitly in \cref{fig:egd-statement-semantics}.
As an example, applying $\semgeo{X_2 \massign 1}$ to the above EGD yields the unique right-hand side
\[ \egd{\Q}{\bbeta} = \egd{\begin{pmatrix}
\P_{0,0} + \P_{0,1} & \P_{0,1} \cdot \alpha_2 \\
\P_{1,0} + \P_{1,1} & \P_{1,1} \cdot \alpha_2
\end{pmatrix}}{\balpha} \]

\paragraph{Conditionals}
The semantics of $\Ite{E}{P_1}{P_2}$ is more complex.
We first require the existence of bounds on both branches ($\egd{\R}{\bgamma}$ and $\egd{\S}{\bdelta}$), whose constraints are given by: $\evgeo{\egd{\P}{\balpha}}{E} \semgeo{P_1} \egd{\R}{\bgamma}$ and $\evgeo{\egd{\P}{\balpha}}{\lnot E} \semgeo{P_2} \egd{\S}{\bdelta}$.
Finally, we would like to sum the bounds on the branches to obtain a bound on the whole conditional.
However the sum of two EGDs may not be an EGD.
Instead, we define a \emph{join} $\egd{\Q}{\bbeta}$ as an upper bound on the sum $\egd{\R}{\bgamma} + \egd{\S}{\bdelta}$ of a certain shape.
\begin{definition}[Join relation]
\label{def:join-relation}
We say $\egd{\Q}{\bbeta}$ is a \defn{join} of $\egd{\R}{\bgamma}$ and $\egd{\S}{\bdelta}$, and write $(\egd{\R}{\bgamma}, \egd{\S}{\bdelta}, \egd{\Q}{\bbeta}) \in \JoinRel$, if $\bgamma \le \bbeta, \bdelta \le \bbeta$ and there are expansions $\egd{\R'}{\bgamma}$ of $\egd{\R}{\bgamma}$ and $\egd{\S'}{\bdelta}$ of $\egd{\S}{\bdelta}$, of the same size, such that $\Q = \R' + \S'$.
We also define the \defn{strict join} relation $\JoinRel^* \subset \JoinRel$ that strengthens the condition $\bbeta \ge \max(\bgamma, \bdelta)$ to $\bbeta = \max(\bgamma, \bdelta)$.
\end{definition}
As an example, the minimal-size join of $\egd{\P}{\balpha}$ and $\egd{\Q}{\bbeta}$ from \cref{example:egd-order} is given by
\[
\egd{\begin{pmatrix}
\P_{0,0} + \Q_{0,0} & \P_{0,1} + \Q_{0,1} & \P_{0,1} \cdot \alpha_2 + \Q_{0,2} & \P_{0,1} \cdot \alpha_2^2 + \Q_{0,3} \\
\P_{1,0} + \Q_{0,0} \cdot \beta_1 & \P_{1,1} + \Q_{0,1} \cdot \beta_1 & \P_{1,1} \cdot \alpha_2 + \Q_{0,2} \cdot \beta_1 & \P_{1,1} \cdot \alpha_2^2 + \Q_{0,3} \cdot \beta_1
\end{pmatrix}}{(\gamma_1, \gamma_2)}
\]
with $\alpha_1, \beta_1 \le \gamma_1 < 1$ and $\alpha_2, \beta_2 \le \gamma_2 < 1$.
It is a strict join if $\gamma_1 = \max(\alpha_1, \beta_1)$ and $\gamma_2 = \max(\alpha_2, \beta_2)$.
We use normal joins in the semantics because strict joins would introduce maxima in the constraints, making them harder to solve.
Strict joins are useful for theoretical analysis, however.

\paragraph{Loops}
Bounding a loop $\Whilst{E}{B}$ requires the existence of a contraction invariant $\egd{\R}{\bgamma}$ and a contraction factor $c \in [0, 1)$ (see \cref{sec:contraction-invariants}), satisfying the following conditions.
First, the initial distribution has to be bounded by the invariant: $\egd{\P}{\balpha} \egdle \egd{\R}{\bgamma}$.
Second, the invariant has to decrease by a factor of $c$ in each loop iteration, which is encoded as: there exists an $\egd{\S}{\bdelta}$ such that $\evgeo{\egd{\R}{\bgamma}}{E} \semgeo{B} \egd{\S}{\bdelta}$ and $\egd{\S}{\bdelta} \egdle \egd{c \cdot \R}{\bgamma}$.
Then $\egd{\Q}{\bbeta} := \evgeo{\egd{\frac{\R}{1-c}}{\bgamma}}{\lnot E}$ is an upper bound on $\sem{\Whilst{E}{B}}(\egd{\P}{\balpha})$, as discussed in \cref{sec:contraction-invariants}.

\paragraph{Nondeterminism}
Note that there are two places in the semantics where choices have to be made: conditionals and loops.
This nondeterminism is the reason why the semantics $\semgeo{-}$ is a relation instead of a function.
For $\Ite{E}{P_1}{P_2}$, the choice is in the join operation: concretely, the size of the expansion of the two distributions.
For $\Whilst{E}{P}$, the choice is in the contraction invariant $\egd{\R}{\bgamma}$ and the factor $c$.
First of all, the dimensions $|\R|$ have to be chosen.
How we do this in practice is discussed in \cref{sec:impl}.
Once the size of $\R$ is chosen, the conditions on $\R$ reduce to polynomial inequality constraints (a decidable problem).
Of course, we typically want to find a ``good'' solution to these constraints by optimizing some objective (see \cref{sec:impl-optimization}).

\subsection{Examples}
\label{sec:geo-bound-examples}

\begin{example}[Simple counter]
\label{example:geo-counter}
Consider a simple program representing a geometric distribution:
\[ \Whilst{\Flip(1/2)}{X_1 \passign 1 } \]
The starting state of the program is described by $\egd{1}{0}$.
Assume a $c$-contraction invariant $\egd{p}{\alpha}$ exists with $p \in \nnr \cong \nnr^{[1]}$ and $\alpha \in [0,1)$.
Then the constraint $\egd{1}{0} \egdle \egd{p}{\alpha}$ yields the inequalities $p \ge 1, \alpha \ge \zero$.

Let the loop body be $B := (X_1 \passign 1)$.
We find $\evgeo{\egd{p}{\alpha}}{\Flip(1/2)} \semgeo{B} \egd{\begin{pmatrix}0 &\frac{p}{2}\end{pmatrix}}{\alpha}$.
The constraint for the contraction invariant is $\egd{\begin{pmatrix}0 &\frac{p}{2}\end{pmatrix}}{\alpha} \egdle \egd{c \cdot p}{\alpha} = \egd{\begin{pmatrix}c p &c p \alpha\end{pmatrix}}{\alpha}$ and amounts to $p \ge 1$ (from above) and $\frac{1}{2} \le c \cdot \alpha$ with $c < 1$.
The bound on the whole loop is then given by $\frac{1}{1-c}\evgeo{\egd{p}{\alpha}}{\lnot\Flip(1/2)} = \egd{\frac{p}{2(1-c)}}{\alpha}$.
To get a low upper bound, it is best to set $p = 1$.
How to choose $\alpha$ and $c$ is not clear because decreasing $\alpha$ will increase $\frac{1}{1-c}$ and vice versa.

To optimize the asymptotics of the tail probabilities $\Prob[X_1 = n]$ as $n \to \infty$, we want to choose $\alpha$ as small as possible, i.e. very close to $\frac{1}{2}$, accepting that this will make $\frac{1}{1-c} \ge \frac{2\alpha}{2\alpha - 1}$ large.
This yields the bound $\sem{P}(\egd{1}{0}) \egdle \egd{\begin{pmatrix}\frac{1 + 2\epsilon}{4\epsilon}\end{pmatrix}}{\frac{1}{2} + \epsilon}$, i.e. $\Prob[X_1 = n] \le \left(\frac{1}{4\epsilon} + \frac{1}{2}\right) \left(\frac{1}{2} + \epsilon\right)^n$, see \apxref{apx:full-examples}.

To optimize the bound on the expected value of $X_1$, we want to minimize the bound $\frac{\alpha}{2(1-c)(1-\alpha)^2} \ge \ExpVal[X_1]$ from \cref{thm:egd-stats}.
This is achieved under the constraints $0 \le \alpha, c < 1$ and $\frac12 \le \alpha c$ for $\alpha = \frac{\sqrt5 - 1}{2} \approx 0.618$ and $c = \frac{\sqrt5 + 1}{4} \approx 0.809$.
At this point, the bound on the expected value is $\ExpVal[X_1] \le \frac{11 + 5 \sqrt5}{2} \approx 11.09$.
This is quite a bit more than the true value of 1, but it is a finite bound that can be found fully automatically.
It can be improved by unrolling the loop a few times before applying the above procedure.
This way, our implementation finds much better bounds (see \cref{table:moment-tail-bounds}).
\end{example}

\begin{example}[Asymmetric random walk]
\label{example:asym-rw}
Consider the program representing a biased random walk on $\NN$, starting at 1, and stopping when reaching 0:
\[ X_1 := 1; X_2 := 0; \Whilst{X_1 > 0}{X_2 \passign 1; \ProbBranch{X_1 \passign 1}{r}{X_1 \massign 1}} \]
where $X_1$ is the current position, $X_2$ is the number of steps taken, and the bias $r < \frac12$ is the probability of going right.
Denote the loop body by $B := (X_2 \passign 1; \ProbBranch{X_1 \passign 1}{r}{X_1 \massign 1})$.
We find
\begin{align*}
  \egd{1}{(0, 0)} \semgeo{X_1 := 1; X_2 := 0} \egd{\begin{pmatrix}0 \\ 1\end{pmatrix}}{(0,0)}
\end{align*}
as the distribution before the loop.
Assume a $c$-contraction invariant $\egd{\P}{\balpha}$ with $\P \in \nnr^{[2] \times [1]}$ exists.
Then one loop iteration transforms it as follows (details in \apxref{apx:full-examples}):
\begin{align*}
  \egd{\begin{pmatrix}\P_{0,0} \\ \P_{1,0}\end{pmatrix}}{\balpha} \semgeo{B} \egd{\begin{pmatrix}0 & (1-r)\P_{1,0} \\ 0 & (1-r)\alpha_1 \P_{1,0} \\ 0 & (1-r)\alpha_1^2 \P_{1,0} + r \P_{1,0} \end{pmatrix}}{\balpha}
\end{align*}
This yields the following constraints for the loop invariant:
\begin{align*}
  \egd{\begin{pmatrix}0 \\ 1\end{pmatrix}}{(0,0)} &\egdle \egd{\begin{pmatrix}\P_{0,0} \\ \P_{1,0}\end{pmatrix}}{\balpha} \\
  \egd{\begin{pmatrix}0 & (1-r)\P_{1,0} \\ 0 & (1-r)\alpha_1 \P_{1,0} \\ 0 & (1-r)\alpha_1^2 \P_{1,0} + r \P_{1,0} \end{pmatrix}}{\balpha} &\egdle \egd{\begin{pmatrix}c \P_{0,0} \\ c \P_{1,0}\end{pmatrix}}{\balpha} = \egd{\begin{pmatrix}c \P_{0,0} & c \alpha_2 \P_{0,0} \\ c \P_{1,0} & c \alpha_2 \P_{1,0} \\ c \alpha_1 \P_{1,0} & c \alpha_1 \alpha_2 \P_{1,0} \end{pmatrix}}{\balpha}
\end{align*}
which reduce to the following polynomial constraints:
\[
  0 \le \P_{0,0} \quad
  1 \le \P_{1,0} \quad
  (1-r)\P_{1,0} \le c\alpha_2 \P_{0,0}
  \quad (1-r)\alpha_1 \P_{1,0} \le c \alpha_2 \P_{1,0}
  \quad (1-r)\alpha_1^2 \P_{1,0} + r \P_{1,0} \le c \alpha_1\alpha_2 \P_{1,0}
\]
besides the obvious ones (every variable is nonnegative and $\alpha_1, \alpha_2, c \in [0, 1)$).
The projection of the solution set to $\alpha_1, \alpha_2$ is shown in \cref{fig:asym-rw-feasible} for $r = \frac14$.
The most interesting constraint is the last one, which has the solutions $\alpha_1 \in \left[ \tfrac{c\alpha_2 - \sqrt{c^2\alpha_2^2 - 4r(1-r)}}{2(1-r)} , \tfrac{c\alpha_2 + \sqrt{c^2\alpha_2^2 - 4r(1-r)}}{2(1-r)} \right]$.
It can be shown (see \apxref{apx:full-examples}) that such an $\alpha_1 < 1$ exists if and only if $r < \frac12$, which makes sense because for $r \ge \frac12$, the program has infinite expected running time (see \cref{thm:necessary-runtime}).
Then all constraints are in fact satisfiable and we get a bound on the distribution of the program:
\begin{align*}
  \egd{1}{(0,0)} &\semgeo{P} \evgeo{\egd{\begin{pmatrix}\frac{\P_{0,0}}{1-c} \\ \frac{\P_{1,0}}{1-c}\end{pmatrix}}{\balpha}}{\lnot(X_1 > 0)} = \egd{\begin{pmatrix}\frac{\P_{0,0}}{1-c} \\ 0\end{pmatrix}}{(0, \alpha_2)}
\end{align*}
The asymptotic bound is $\Prob[X_2 = n] = \frac{\P_{0,0}}{1-c} \alpha_2^n$, so it's best for $\alpha_2$ as small as possible.
Since $\alpha_2 \ge \frac{\sqrt{4r(1-r)}}{c} > 2\sqrt{r(1-r)}$, the best possible geometric tail bound for $\Pr[X_2 = n]$ is $O((2\sqrt{r(1-r)} + \epsilon)^n)$.
\end{example}

\begin{wrapfigure}{r}{0.33\textwidth}
\vspace{-1em}
\centering
\includegraphics[width=0.33\textwidth]{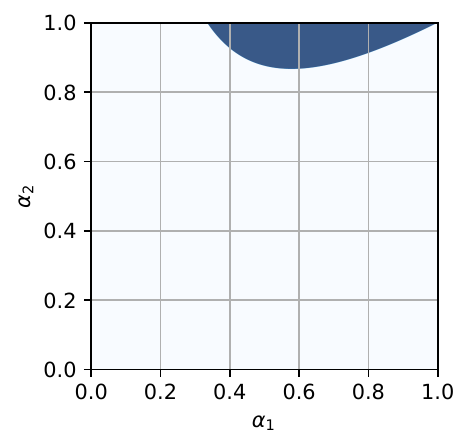}%
\vspace{-1em}%
\caption{Solution set for \cref{example:asym-rw}, projected onto $\alpha_1, \alpha_2$ for $r = \frac14$}
\label{fig:asym-rw-feasible}
\vspace{-2em}
\end{wrapfigure}

Interestingly, the exact asymptotic is $\Theta((2\sqrt{r(1-r)})^n \cdot n^{-3/2})$ (see \apxref{apx:geom-bound-sem}), so the best possible geometric bound is $O((2\sqrt{r(1-r)})^n)$ and our method can get arbitrarily close to it.
We could find a bound on $\ExpVal[X_2]$ in the same way as in the previous example, but this would be tedious to do manually.
Our implementation (again using loop unrolling before applying the above reasoning) finds the bounds shown in \cref{table:moment-tail-bounds} (for $r = \frac{1}{4}$).

\subsection{Properties}
\label{sec:geo-bound-properties}

\subsubsection{Decidability}
For any $\egd{\P}{\balpha}$ occurring in the semantics, each entry of $\P$ and $\balpha$ is a rational function of entries of initial blocks, decay rates, and contraction factors.
(In fact, it is linear in initial block entries.)
Hence the $\egdle$ constraints reduce to polynomial inequality constraints if the sizes of the EGDs are known.
If we want to find an upper bound, these constraints will contain unknowns, which we have to solve for.
Since the existential first-order theory of the reals is decidable, solving these constraints is indeed possible.
\begin{restatable}{theorem}{DecidabilityGeoBounds}
Given a program $P$ and an $\egd{\P}{\balpha}$, it is decidable whether there is an EGD $\egd{\Q}{\bbeta}$ such that $\egd{\P}{\balpha} \semgeo{P} \egd{\Q}{\bbeta}$, assuming the sizes of the intermediate EGDs for joins and contraction invariants in the semantics are fixed.
\end{restatable}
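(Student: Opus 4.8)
The plan is to prove this by structural induction on $P$, following the clauses of the geometric bound semantics in \cref{fig:egd-statement-semantics} (and the event semantics of \cref{fig:egd-event-semantics}), and to show that — once the sizes of all intermediate EGDs are fixed — the relation $\egd{\P}{\balpha} \semgeo{P} \egd{\Q}{\bbeta}$ is definable by an existentially quantified conjunction of polynomial (in)equalities over $\RR$, i.e. a sentence of the existential first-order theory of the reals, which is decidable (Tarski–Seidenberg, or in practice cylindrical algebraic decomposition). Deciding whether \emph{some} $\egd{\Q}{\bbeta}$ with $\egd{\P}{\balpha} \semgeo{P} \egd{\Q}{\bbeta}$ exists then amounts to the same query with the entries of $\Q$ and $\bbeta$ also existentially quantified; this is legitimate because, as the bookkeeping below shows, the size $|\Q|$ is a computable function of $|\P|$ and the fixed intermediate sizes, so the number of real unknowns is known in advance.

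\textbf{Key steps.} (1) \emph{Fix sizes.} Observe that the size of every EGD in a derivation is determined: for the assignment and event clauses it is $|\P|$ adjusted by constants read off the clause and the syntactic form of the event; for conditionals it is the (assumed fixed) size of the join; and for the loop clause it is the (assumed fixed) size of the contraction invariant $\R$, propagated through the body by the induction hypothesis and then through the final $\evgeo{\cdot}{\lnot E}$. Hence a derivation mentions only finitely many EGDs, each of known, bounded size. (2) \emph{Polynomiality of entries.} As noted just before the theorem, every tensor entry occurring is affine in the entries of the fixed-size initial blocks, while the decay rates and the contraction factor $c$ enter only through monomials $\balpha^{\ii}$ with exponents bounded by the fixed sizes; so all defining conditions are polynomial. (3) \emph{Translate each clause.} The event restrictions are explicit tensor equations; the order $\egdle$, after expanding to the common size $\max(|\P|,|\Q|)$, unfolds into a finite conjunction of polynomial inequalities (as spelled out in \cref{example:egd-order}); the join relation $\JoinRel$ is a system of tensor equalities between expansions together with $\bgamma \le \bbeta$ and $\bdelta \le \bbeta$ (\cref{def:join-relation}); and the intermediate quantities $\R,\S,\bgamma,\bdelta,c$ are bound by ordinary existential real quantifiers. (4) \emph{Clear denominators.} The clauses for $X_k := 0$, $X_k \massign 1$ and loops involve factors $\tfrac{1}{1-\alpha_k}$ and $\tfrac{1}{1-c}$; since the semantics simultaneously asserts $\alpha_k, c \in [0,1)$, these denominators are strictly positive, so each such equation or inequality multiplies out to an equivalent polynomial one, keeping the strict bounds. (5) \emph{Assemble.} The collected conditions form an existential formula over $\RR$; invoke a decision procedure for it.

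\textbf{Main obstacle.} The delicate part is step (1): ensuring the loop clause really only references EGDs of computable size. The loop constraint runs the body $B$ on $\evgeo{\egd{\R}{\bgamma}}{E}$, so the induction hypothesis must be applied not to a fixed instance but \emph{uniformly}: it must yield, as a formula with the unknown entries of $\R$, $\bgamma$ and $c$ free, the polynomial constraints characterising an arbitrary bound $\egd{\S}{\bdelta}$ of $B$ on that input, \emph{together with} the guarantee that $|\S|$ is a computable function of $|\R|$ and $B$'s (fixed) internal sizes. Hence the right induction hypothesis is the strengthened statement: ``for every $B$, with the sizes of all of $B$'s internal joins and contraction invariants fixed, the relation $\egd{\cdot}{\cdot}\semgeo{B}\egd{\cdot}{\cdot}$ is defined by a computable finite polynomial system over $\RR$ whose variables (and their number) depend only on the size of the input EGD.'' With this formulation the induction goes through clause by clause, and the remaining work in steps (2)–(5) is routine once the sizes are under control.
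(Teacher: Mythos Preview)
Your proposal is correct and follows essentially the same approach as the paper: structural induction on $P$ to show that the relation $\semgeo{P}$, once intermediate sizes are fixed, is expressible as an existential formula in the first-order theory of the reals, whose decidability then yields the result. Your explicit treatment of clearing the $\tfrac{1}{1-\alpha_k}$ and $\tfrac{1}{1-c}$ denominators and your strengthened induction hypothesis (that the output size is computable from the input size and the fixed internal sizes, uniformly in the unknown entries) make explicit what the paper handles more tersely by saying the entries are ``rational functions'' that ``can clearly be transformed into the desired shape,'' but the substance is the same.
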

This is yet another reason to prefer EGDs over more general classes of distributions:
while it may be possible to check $\mle$ for slightly more complicated classes of distributions with concrete values, solving such a system of $\mle$-inequalities with unknown values will be much harder.
Despite the decidability results, solving our constraints is still not easy because the existential theory of the reals is NP-hard and algorithms only work for small instances in practice.

One might hope that the constraints arising from the geometric bound semantics have nice properties.
For example, one might expect the solution set $(\Q, \bbeta)$ to be convex and $\bbeta$ to be right-closed, i.e. that for any solution $\egd{\Q}{\bbeta}$ and any $\bgamma \ge \bbeta$, there is a solution $\egd{\Q'}{\bgamma}$, because this weakens the bound.
However, we found simple counterexamples to both properties (see \apxref{ex:double-geo}).
We address the problem of practical constraint solving in \cref{sec:impl-solving}.

\subsubsection{Soundness}

We prove that the geometric bound semantics is correct with respect to the standard semantics.
\begin{restatable}[Soundness]{theorem}{SoundnessGeoBounds}
\label{thm:soundness-geo-bounds}
The rules for the EGD event semantics (\cref{fig:egd-event-semantics}) agree with the standard event semantics: $\evgeo{\egd{\P}{\balpha}}{E} = \egd{\P}{\balpha}|_E$.
The rules for the EGD statement semantics $\semgeo{-}$ (\cref{fig:egd-statement-semantics}) are sound: if $\egd{\P}{\balpha} \semgeo{P} \egd{\Q}{\bbeta}$ then $\sem{P}(\egd{\P}{\balpha}) \mle \egd{\Q}{\bbeta}$.
Furthermore, if $\egd{\P}{\balpha}$ is a probability distribution, then we can bound the normalized distribution:
\[ \frac{\semlo{P}(\egd{\P}{\balpha})|_{\NN^n}}{\egd{\Q}{\bbeta}(\NN^n)} \mle \Normalize(\esem{P}(\egd{\P}{\balpha})) \mle \frac{\egd{\Q}{\bbeta}}{\semlo{P}(\egd{\P}{\balpha})(\NN^n)} \]
\end{restatable}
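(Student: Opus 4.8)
The plan is to prove the three claims together by structural induction, in the order in which they are stated.

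\textbf{Event semantics.} First I show $\evgeo{\egd{\P}{\balpha}}{E} = \egd{\P}{\balpha}|_E$ by induction on $E$, carrying along the auxiliary invariant noted after \cref{fig:egd-event-semantics} that the computed representation $\egd{\Q}{\bbeta}$ denotes the same measure as $\egd{\Q}{\balpha}$ (the decay rates differ only in dimensions that have been collapsed to a single value, and there the corresponding edge slice of $\Q$ is zero, so the rate is irrelevant). For $X_k = a$ this is a direct check of the slice-zeroing formula; for $\Flip(\rho)$ it is scaling by $\rho$, which commutes with the EGD mass function; for $E_1 \land E_2$ it is composition. The case $\lnot E_1$ is the only one with content: using the inductive hypothesis and the invariant, $\evgeo{\egd{\P}{\balpha}}{E_1} = \egd{\R}{\bgamma} = \egd{\R}{\balpha}$, and since restriction never increases mass we have $\R \le \P'$ elementwise, where $\P'$ is $\P$ expanded to the size of $\R$; hence $\Q := \P' - \R \ge 0$ is a valid initial block with the \emph{same} rates $\balpha$, and $\egd{\Q}{\balpha} = \egd{\P}{\balpha}|_{\NN^n} - \egd{\P}{\balpha}|_{E_1}$ by linearity of the mass function and of restriction. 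Throughout I use three immediate facts about the defining formula: expanding an EGD does not change the measure; $\egd{\cdot}{\balpha}$ is additive in the initial block for fixed $\balpha$ and fixed size; and $\egd{\P}{\cdot}$ is monotone in the decay rates for $\P \ge 0$.

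\textbf{Statement semantics.} Next I prove $\egd{\P}{\balpha} \semgeo{P} \egd{\Q}{\bbeta} \Rightarrow \sem{P}(\egd{\P}{\balpha}) \mle \egd{\Q}{\bbeta}$ by induction on $P$. The cases $\Skip$ and $\Fail$ are immediate ($\sem{\Fail}$ is the zero measure, matching the rule). For $P_1;P_2$ I use compositionality of the simplified semantics, $\sem{P_1;P_2} = \sem{P_2} \circ \sem{P_1}$, which follows from a small lemma that the failure state is ``sticky'' under $\esem{-}$, i.e. $\esem{P}(\mu|_\lightning) = \mu|_\lightning$ (provable by structural induction), together with linearity and monotonicity of $\esem{P_2}$ (\cref{lem:stdsem-properties}) and the two inductive hypotheses. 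For the deterministic updates $X_k := 0$, $X_k \passign a$, $X_k \massign 1$ I verify that the tabulated initial-block transformations (\cref{fig:egd-statement-semantics}) compute exactly the corresponding measure transformation of \cref{fig:std-stmt-semantics}: for $X_k := 0$ this is marginalizing out dimension $k$ (\cref{lem:egd-marginalize}) and reinserting all mass at $0$; the other two are index shifts, the clamp in $X_k \massign 1$ folding slices $0$ and $1$ together. For $\Ite{E}{P_1}{P_2}$ I combine the event equality just proved, the two inductive hypotheses $\sem{P_1}(\egd{\P}{\balpha}|_E) \mle \egd{\R}{\bgamma}$ and $\sem{P_2}(\egd{\P}{\balpha}|_{\lnot E}) \mle \egd{\S}{\bdelta}$, linearity of $\sem{-}$ on $\Meas(\NN^n)$ (so $\sem{\Ite{E}{P_1}{P_2}}(\mu) = \sem{P_1}(\mu|_E) + \sem{P_2}(\mu|_{\lnot E})$), and the fact that any join $\egd{\Q}{\bbeta}$ with $(\egd{\R}{\bgamma}, \egd{\S}{\bdelta}, \egd{\Q}{\bbeta}) \in \JoinRel$ (\cref{def:join-relation}) satisfies $\egd{\R}{\bgamma} + \egd{\S}{\bdelta} \mle \egd{\Q}{\bbeta}$: expand $\R, \S$ to a common size without changing the measure, raise the decay rates to $\bbeta$ (which only increases mass), and use additivity in the initial block.

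\textbf{Loops --- the main obstacle.} Given a contraction invariant $\egd{\R}{\bgamma}$ and $c \in [0,1)$ with $\egd{\P}{\balpha} \egdle \egd{\R}{\bgamma}$, hence $\egd{\P}{\balpha} \mle \egd{\R}{\bgamma}$ (since $\egdle$ implies $\mle$), and $\evgeo{\egd{\R}{\bgamma}}{E} \semgeo{B} \egd{\S}{\bdelta} \egdle \egd{c \cdot \R}{\bgamma}$, the event equality and the inductive hypothesis for the body $B$ give the genuine contraction $\sem{B}(\egd{\R}{\bgamma}|_E) \mle c \cdot \egd{\R}{\bgamma}$. Writing $W := \sem{\Whilst{E}{B}}$, which (again via stickiness of $\lightning$) is the least solution on $\Meas(\NN^n)$ of $W(\mu) = \mu|_{\lnot E} + W(\sem{B}(\mu|_E))$, with approximants $W_0 = \zero$ (the zero transformer) and $W_{k+1}(\mu) = \mu|_{\lnot E} + W_k(\sem{B}(\mu|_E))$, I show by induction on $k$ that $W_k(\egd{\R}{\bgamma}) \mle \bigl(\sum_{i < k} c^i\bigr)\,\egd{\R}{\bgamma}|_{\lnot E}$, using only linearity and monotonicity of each $W_k$ plus the contraction. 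Taking the supremum over $k$ gives $W(\egd{\R}{\bgamma}) \mle \frac{1}{1-c}\,\egd{\R}{\bgamma}|_{\lnot E} = \egd{\frac{\R}{1-c}}{\bgamma}|_{\lnot E} = \egd{\Q}{\bbeta}$, and monotonicity of $W$ with $\egd{\P}{\balpha} \mle \egd{\R}{\bgamma}$ closes the case. This is the technically delicate step: one has to set up the simplified loop fixpoint correctly, recognise that the single-step contraction already suffices for the approximant estimate, and identify $\egd{\Q}{\bbeta}$ (via \cref{fig:egd-statement-semantics} and the event equality) with $\frac{1}{1-c}\egd{\R}{\bgamma}|_{\lnot E}$.

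\textbf{Normalized bounds.} From statement soundness, $\sem{P}(\egd{\P}{\balpha}) = \esem{P}(\egd{\P}{\balpha})|_{\NN^n} \mle \egd{\Q}{\bbeta}$, and $\semlo{P}(\egd{\P}{\balpha}) \mle \esem{P}(\egd{\P}{\balpha})$ by \cref{thm:soundness-lower-bounds}. Since the normalizing constant equals $1 - \esem{P}(\egd{\P}{\balpha})(\lightning) = \esem{P}(\egd{\P}{\balpha})(\NN^n) + (1 - \esem{P}(\egd{\P}{\balpha})(\estates)) \ge \sem{P}(\egd{\P}{\balpha})(\NN^n) \ge \semlo{P}(\egd{\P}{\balpha})(\NN^n)$, dividing the numerator inequality $\esem{P}(\egd{\P}{\balpha})|_{\NN^n} \mle \egd{\Q}{\bbeta}$ by it yields the stated upper bound directly. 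For the lower bound one additionally needs the normalizing constant to be at most $\egd{\Q}{\bbeta}(\NN^n)$, which holds as soon as $\esem{P}(\egd{\P}{\balpha})(\estates) = 1$, i.e. $P$ terminates almost surely on $\egd{\P}{\balpha}$. I would establish this as a byproduct of the induction: $\Skip$, the deterministic updates and $\Fail$ are almost surely terminating ($\esem{\Fail}$ moves all mass to $\lightning$, but still $\esem{\Fail}(\mu)(\estates) = \mu(\estates)$), the compound constructs preserve almost-sure termination, and for a loop the residual mass after $k$ iterations is bounded by $c^k \cdot \egd{\R}{\bgamma}(E) \to 0$ by iterating the contraction; almost-sure termination then descends from $\egd{\R}{\bgamma}$ to the smaller $\egd{\P}{\balpha}$ by linearity of $\esem{-}$ and \cref{lem:stdsem-properties}. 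With $\esem{P}(\egd{\P}{\balpha})(\estates) = 1$ the normalizing constant is exactly $\sem{P}(\egd{\P}{\balpha})(\NN^n) \le \egd{\Q}{\bbeta}(\NN^n)$, and the lower bound follows (on the usual understanding that the normalizing constant is positive, so the quotients are well defined).
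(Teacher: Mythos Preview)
Your proposal is correct and follows essentially the same structural-induction skeleton as the paper's proof: the event semantics is handled with the same auxiliary invariant $\egd{\Q}{\bbeta}=\egd{\Q}{\balpha}$, the loop-free statements are dispatched the same way, and the conditional is handled via the same join argument.

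Two points of difference are worth recording. For loops, the paper uses the fixed-point equation of $W:=\sem{\Whilst{E}{B}}$ directly: from the contraction it derives $W(\egd{\R}{\bgamma}) \mle \egd{\R}{\bgamma}|_{\lnot E} + c\,W(\egd{\R}{\bgamma})$ and then rearranges (valid since $\egd{\R}{\bgamma}$ is finite), whereas you go through the Kleene approximants $W_k$ and a geometric-series induction. Both are standard and equivalent; your version avoids the implicit finiteness check in the rearrangement step. For the normalized bounds, the paper simply asserts ``by almost sure termination'' that $1-\esem{P}(\mu)(\lightning)=\sem{P}(\mu)(\NN^n)$, with a forward reference to \cref{thm:necessary-runtime}, and then sandwiches this normalizing constant between $\semlo{P}(\mu)(\NN^n)$ and $\egd{\Q}{\bbeta}(\NN^n)$. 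You instead establish almost-sure termination inline, by carrying it through the induction and arguing for loops that the residual mass in iteration $k$ is $O(c^k\,\egd{\R}{\bgamma}(\NN^n))$; this is self-contained and slightly more rigorous than the paper's treatment at this point.
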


\subsubsection{Loop-free Fragment}
It is instructive to look at when $\semgeo{-}$ is precise, i.e. when the right-hand side of the relation is unique and equals the $\sem{-}$ semantics.
It turns out that this is the case for programs without loops, which is easy to see from the soundness proof.

\begin{restatable}[Precision for loop-free fragment]{theorem}{PrecisionLoopFree}
\label{thm:precision-loop-free}
Let $P$ be a loop-free program.
Assume that the strict $\JoinRel^*$ relation is used in the $\semgeo{-}$ semantics of $\Ite{-}{-}{-}$.
Then for all $\egd{\P}{\balpha}$, there is a unique $\egd{\Q}{\bbeta}$ such that $\egd{\P}{\balpha} \semgeo{P} \egd{\Q}{\bbeta}$ and $\sem{P}(\egd{\P}{\balpha}) = \egd{\Q}{\bbeta}$.
Furthermore, each $\beta_i \in \{0, \alpha_i\}$.
\end{restatable}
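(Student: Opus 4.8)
The plan is to prove the theorem by structural induction on the loop-free program $P$, carrying the statement itself as the induction hypothesis: for every representation $\egd{\P}{\balpha}$ there is a $\egd{\Q}{\bbeta}$, unique as a measure (the constraints will in fact determine $\bbeta$ and $\Q$ outright, up to a harmless choice of the join's initial-block size), with $\egd{\P}{\balpha}\semgeo{P}\egd{\Q}{\bbeta}$, this $\egd{\Q}{\bbeta}$ equals $\sem{P}(\egd{\P}{\balpha})$, and each $\beta_i\in\{0,\alpha_i\}$. Two preliminaries feed every case. First, the event semantics is exact, $\evgeo{\egd{\P}{\balpha}}{E}=\egd{\P}{\balpha}|_E$ (this is part of \cref{thm:soundness-geo-bounds}), and an easy induction on $E$ over the rules of \cref{fig:egd-event-semantics} shows that the decay vector it returns has each component in $\{0,\alpha_i\}$: the rule for $X_k=a$ resets $\alpha_k$ to $0$, the rules for $\Flip(\rho)$ and $\lnot E_1$ leave $\balpha$ untouched, and $\land$ composes. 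Second, on measures supported on $\NN^n$ one has $\sem{P_1;P_2}=\sem{P_2}\circ\sem{P_1}$, which follows from $\esem{P_1;P_2}=\esem{P_2}\circ\esem{P_1}$ and the fact that the failure mass $\mu|_\lightning$ is never moved off $\lightning$ by $\esem{-}$ (so $\esem{P}(\mu|_\lightning)=\mu|_\lightning$ for every $P$, including $\Fail$).

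Then I would go through the statement rules of \cref{fig:egd-statement-semantics}. The cases $\Skip$ and $\Fail$ are immediate (with $\bbeta=\balpha$ and $\bbeta=\zero_n$ respectively). For $P_1;P_2$ and for the branch compositions inside a conditional, the induction hypothesis applies directly (using the first preliminary to see the inputs handed to $P_1,P_2$ already have decay components in $\{0,\alpha_i\}$), and the inclusion $\{0,\gamma_i\}\subseteq\{0,\alpha_i\}$ whenever $\gamma_i\in\{0,\alpha_i\}$ propagates the decay condition through composition. For the three assignment rules I would just verify on mass functions that the $\egd{\Q}{\bbeta}$ prescribed by the constraints coincides with the concrete transformer: $X_k:=0$ is marginalising out coordinate $k$ (\cref{lem:egd-marginalize}) followed by placing each fibre's total mass on value $0$, which forces $\beta_k=0$; $X_k\passign a$ and $X_k\massign 1$ are index shifts that leave $\bbeta=\balpha$. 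In all of these the constraints determine $\Q$ and $\bbeta$ uniquely, so uniqueness comes for free.

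The case that needs real care is the conditional with the \emph{strict} join $\JoinRel^*$ --- and this is where the hypothesis of the theorem is used. By induction the branch results $\egd{\R}{\bgamma},\egd{\S}{\bdelta}$ are unique, equal $\sem{P_1}(\egd{\P}{\balpha}|_E)$ and $\sem{P_2}(\egd{\P}{\balpha}|_{\lnot E})$, and satisfy $\gamma_i,\delta_i\in\{0,\alpha_i\}$ for all $i$; since $\sem{\Ite{E}{P_1}{P_2}}(\egd{\P}{\balpha})=\egd{\R}{\bgamma}+\egd{\S}{\bdelta}$ by the standard semantics together with exactness of the event restriction, it suffices to show the strict join reproduces this sum exactly as an EGD with decay $\bbeta=\max(\bgamma,\bdelta)$ (the componentwise maximum of two elements of $\{0,\alpha_i\}$ is again in $\{0,\alpha_i\}$, giving the last claim). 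The crux is that whenever $\gamma_i\neq\delta_i$, one of the two --- say $\gamma_i$ --- must be $0$, and a decay rate of $0$ in coordinate $i$ means, by definition of an EGD, that $\egd{\R}{\bgamma}$ has bounded support in that coordinate; hence expanding $\egd{\R}{\bgamma}$ far enough in coordinate $i$ annihilates its edge entries there, so that equal-size expansions $\R',\S'$ add to a block $\Q=\R'+\S'$ whose geometric continuation at rate $\beta_i=\delta_i$ recovers $\egd{\R}{\bgamma}+\egd{\S}{\bdelta}$ on the nose (in coordinates with $\gamma_i=\delta_i$ the common minimal size already works). Any sufficiently large choice of block size gives the same measure, which is the sense in which $\egd{\Q}{\bbeta}$ is ``unique''; with the non-strict $\JoinRel$ one could instead take $\bbeta>\max(\bgamma,\bdelta)$ and genuinely lose precision, so the strictness assumption is essential. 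I expect this interplay between unequal decay rates and the needed expansion --- and arguing carefully that no mass is lost or gained --- to be the only non-routine part of the proof.
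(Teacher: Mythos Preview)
Your structural induction, the appeal to exactness of the event semantics from \cref{thm:soundness-geo-bounds}, and the direct verification for the atomic statements all match the paper's proof. The gap is in the conditional case, specifically the uniqueness claim.

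You argue that a \emph{sufficiently large} expansion in $\JoinRel^*$ reproduces the exact sum $\egd{\R}{\bgamma}+\egd{\S}{\bdelta}$, and you call $\egd{\Q}{\bbeta}$ ``unique'' in that sense. But the relation $\semgeo{P}$ admits \emph{every} expansion size, and the theorem asserts that all resulting $\egd{\Q}{\bbeta}$ coincide. If some small expansion left a nonzero edge slice $\R'_{i:|\Q|_i-1}$ in a coordinate with $\gamma_i=0<\beta_i=\delta_i$, then $\egd{\R'+\S'}{\bbeta}$ would strictly exceed the true sum and uniqueness would fail. The paper rules this out by strengthening the induction hypothesis to $\egd{\Q}{\bbeta}=\egd{\Q}{\balpha}$: whenever the semantics has set a decay component to $0$, the last slice of the initial block in that coordinate is already $0$. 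One checks this for each rule that zeros a decay rate (the event $X_k=a$, the statements $X_k:=0$ and $\Fail$) and verifies it is preserved by the remaining rules. With this invariant in hand, the branch outputs already satisfy $\egd{\R}{\bgamma}=\egd{\R}{\bbeta}$ and $\egd{\S}{\bdelta}=\egd{\S}{\bbeta}$ for $\bbeta=\max(\bgamma,\bdelta)$---and likewise for \emph{any} expansion of them---so every strict join is exact, not just the large ones. Your ``expand far enough to annihilate the edge'' is the same zeroing mechanism applied post hoc; what is missing is carrying it through the induction so that no further expansion is needed.
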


\subsubsection{Sufficient Conditions for the Existence of Bounds}
When does the constraint problem arising from our semantics actually have a solution?
This is hard to characterize in general because the feasible region can be complex, as discussed above.
Furthermore, a useful purely syntactic criterion is unlikely to exist because even for the simple asymmetric random walk $\Whilst{X_1 > 0}{\Ite{\Flip(\rho)}{X_1 \passign 1}{X_1 \massign 1}}$, a bound can be found if and only if $\rho < \frac12$.

Nevertheless, we have succeeded in identifying a sufficient criterion for the existence of an EGD loop bound:
roughly speaking, if the difference between new and old values of the variables is the same in each loop iteration, and if there is a linear ranking supermartingale \citep{ChatterjeeFNH16}, then an EGD bound exists.
The precise conditions are stated in the following theorem.

\begin{restatable}{theorem}{SufficientConditions}
\label{thm:sufficient-cond}
Suppose $L = \Whilst{E}{B}$ satisfies the following properties:
\begin{enumerate}
  \item $B$ only contains increment and decrement statements ($X_k \passign 1$, $X_k \massign 1$), probabilistic branching ($\Ite{\Flip(\dots)}{\dots}{\dots}$), and $\Fail$,
  \item $E$ occurring guarantees that $X_1 \ge a_1 \land \dots \land X_n \ge a_n$ where $a_i \in \NN$ is the maximum number of decrements of $X_i$ in any program path through $B$, and
  \item there is a conical (linear with nonnegative coefficients) combination $\sum_{i=1}^n \lambda_i X_i$ whose expected value decreases after every loop iteration for any initial assignment $\x$ that can enter the loop:
  $\ExpVal_{\X \sim \sem{B}(\Dirac(\x)|_{E})}\left[\sum_{i=1}^n \lambda_i X_i\right] < \sum_{i=1}^n \lambda_i x_i \quad \text{for all } \x \in \NN^n \text{ with } x_i \ge a_i$
\end{enumerate}
Then, for any initial $\egd{\P}{\balpha}$, there is a solution $(\Q, \bbeta)$ to $\egd{\P}{\balpha} \semgeo{L} \egd{\Q}{\bbeta}$.
\end{restatable}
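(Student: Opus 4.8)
The plan is to \emph{construct} an explicit contraction invariant for $L$ --- a suitably scaled product of geometric distributions whose decay rates are read off from the linear ranking supermartingale $\sum_i \lambda_i X_i$ --- and then verify directly that it satisfies every conjunct of the $\semgeo{\Whilst{E}{B}}$ rule in \cref{fig:egd-statement-semantics}. Since a derivation using the strict join $\JoinRel^*$ is a fortiori a derivation using the normal join $\JoinRel$, it suffices to work with strict joins; this lets us invoke the precision theorem \cref{thm:precision-loop-free} for the loop-free body $B$, so that $\semgeo{B}$ on any EGD input has a \emph{unique} right-hand side equal to $\sem{B}$ applied to that input, with output decay rates in $\{0,\gamma_i\}$.

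The first step is a structural analysis of $B$. By assumption~(1), $B$ is a finite tree of probabilistic branches whose leaves are straight-line sequences of increments, decrements, and possibly a trailing $\Fail$. Write $\Pi$ for the set of non-failing leaves; each $\pi\in\Pi$ carries a probability $p_\pi$ (the product of the $\Flip$ biases along $\pi$) and a net displacement $\bdelta^\pi\in\ZZ^n$. The key claim is: for any measure $\nu$ on $\NN^n$ supported on $S:=\{\x : x_i\ge a_i \text{ for all } i\}$, we have $\sem{B}(\nu)=\sum_{\pi\in\Pi} p_\pi\cdot(\text{translation of }\nu\text{ by }\bdelta^\pi)$, the translation being well defined because support never drops below $0$. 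The delicate point --- and the \emph{main obstacle} --- is to show that the clamping in $X_k\massign 1$ never bites: by the definition of $a_k$ in assumption~(2), along any path, before executing its $j$-th decrement of $X_k$ (so $j\le a_k$) the running distribution is supported on $\{x_k\ge a_k-(j-1)\}\subseteq\{x_k\ge 1\}$, so the decrement is a genuine shift and no mass is lost to the floor. Establishing this requires a careful induction along each path tracking the support (using linearity and $\omega$-continuity of $\sem{\cdot}$ from \cref{lem:stdsem-properties}); assumption~(2) also ensures that for any EGD $\egd{\R}{\bgamma}$ the restriction $\egd{\R}{\bgamma}|_E$ is supported on $S$, so the claim applies to it.

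Next I would fix the parameters. Put $\lambda'_i:=\lambda_i+\epsilon$ for a small $\epsilon>0$, so all $\lambda'_i>0$ and, in the case where no path of $B$ contains $\Fail$, the expected decrease $\langle\bm\lambda',\bar\bdelta\rangle$ with $\bar\bdelta:=\sum_{\pi}p_\pi\bdelta^\pi$ is still strictly negative by assumption~(3). Consider $g(t):=\sum_{\pi\in\Pi}p_\pi\,e^{t\langle\bm\lambda',\bdelta^\pi\rangle}$; then $g(0)=\sum_{\pi\in\Pi}p_\pi\le 1$, with equality exactly when no path fails, in which case $g'(0)=\langle\bm\lambda',\bar\bdelta\rangle<0$. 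Either way there is a $t^*>0$ with $g(t^*)<1$, and by shrinking $t^*$ we may also guarantee $e^{-\lambda'_i t^*}\ge\alpha_i$ for all $i$. Now set $\bgamma:=(e^{-\lambda'_1 t^*},\dots,e^{-\lambda'_n t^*})\in(0,1)^n$, so that $\bgamma^{-\bdelta^\pi}=e^{t^*\langle\bm\lambda',\bdelta^\pi\rangle}$ and $\sum_{\pi\in\Pi}p_\pi\bgamma^{-\bdelta^\pi}=g(t^*)<1$; choose $c:=\tfrac12(1+g(t^*))\in[g(t^*),1)$. Finally take $\R$ to be the $1\times\cdots\times 1$ block with single entry a scalar $M>0$, so $\egd{\R}{\bgamma}(\ii)=M\bgamma^{\ii}$ is an unnormalized product of geometrics, and pick $M$ large enough that $\egd{\P}{\balpha}(\ii)\le M\bgamma^{\ii}$ for all $\ii<|\P|$.

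It then remains to check the conjuncts of the loop rule. Unfolding the definition, $\egd{\P}{\balpha}\egdle\egd{\R}{\bgamma}$ amounts to $\balpha\le\bgamma$ (ensured) plus $\egd{\P}{\balpha}(\ii)\le\egd{\R}{\bgamma}(\ii)$ for $\ii<\max(|\P|,|\R|)=|\P|$ (ensured by the choice of $M$). For the contraction step, $\evgeo{\egd{\R}{\bgamma}}{E}=\egd{\R}{\bgamma}|_E$ by event soundness (\cref{thm:soundness-geo-bounds}), and since $B$ is loop-free, \cref{thm:precision-loop-free} gives a unique $\egd{\S}{\bdelta}$ with $\evgeo{\egd{\R}{\bgamma}}{E}\semgeo{B}\egd{\S}{\bdelta}$ and $\egd{\S}{\bdelta}=\sem{B}(\egd{\R}{\bgamma}|_E)$, with $\delta_i\in\{0,\gamma_i\}$, hence $\bdelta\le\bgamma$. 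Writing $\egd{\R}{\bgamma}|_E(\ii)=q_E\,M\,\bgamma^{\ii}\,[\ii\in T_E]$ (a scalar $q_E\le 1$ from the $\Flip$ atoms of $E$ times an indicator $T_E$ from the $X_k=a$ atoms), the structural claim yields $\egd{\S}{\bdelta}(\ii)=q_E M\sum_{\pi\in\Pi}p_\pi\bgamma^{\ii-\bdelta^\pi}[\ii-\bdelta^\pi\in T_E]\le(\sum_{\pi}p_\pi\bgamma^{-\bdelta^\pi})M\bgamma^{\ii}=g(t^*)M\bgamma^{\ii}\le cM\bgamma^{\ii}=\egd{c\R}{\bgamma}(\ii)$; together with $\bdelta\le\bgamma$ this gives $\egd{\S}{\bdelta}\egdle\egd{c\R}{\bgamma}$. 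Taking $\egd{\Q}{\bbeta}:=\evgeo{\egd{\R/(1-c)}{\bgamma}}{\lnot E}$ then witnesses $\egd{\P}{\balpha}\semgeo{L}\egd{\Q}{\bbeta}$, completing the argument. Everything after the structural claim about $B$ is routine bookkeeping; that claim is where the real work lies.
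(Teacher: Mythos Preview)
Your proposal is correct and follows essentially the same approach as the paper: both construct a contraction invariant $\egd{\R}{\bgamma}$ with a scalar $(1,\dots,1)$-block, reduce the contraction constraint to $\sum_{\pi}p_\pi\bgamma^{-\bdelta^\pi}<1$ via the structural claim that $B$ acts as a convolution with the displacement distribution (using assumption~(2) to avoid clamping), and then find a suitable $\bgamma$ near $\one_n$ using the ranking supermartingale from assumption~(3). The only cosmetic difference is that the paper phrases the last step as a gradient argument ($\blambda^\top\nabla L(\one_n)>0$) whereas you parametrize explicitly along the curve $\bgamma(t)=e^{-t\blambda'}$ and argue $g(0)\le 1$ with $g'(0)<0$ in the no-fail case; these are the same computation, and the paper likewise perturbs $\blambda$ to be strictly positive.
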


Although the assumptions of this theorem may seem restrictive, they are sufficient to prove the existence of bounds for the random walk (\cref{example:asym-rw}), which is nontrivial to analyze.
In that case, $a_1 = 1, a_2 = 0$ and the expected value of $X_1$ decreases by $1 - 2r > 0$ in each loop iteration.

\subsubsection{Necessary Conditions for the Existence of Bounds}
We already mentioned that the existence of bounds is not guaranteed in general.
An obvious necessary condition is that the true program distribution actually has tails that decay exponentially fast, i.e. can be bounded by an EGD.
It turns out that the same must hold for the running time of the program as well.

\begin{restatable}[Necessary conditions on the running time]{theorem}{NecessaryRuntime}
\label{thm:necessary-runtime}
If $\egd{\P}{\balpha} \semgeo{P} \egd{\Q}{\bbeta}$ for an $\egd{\P}{\balpha}$ then the running time $T$ of $P$ on $\egd{\P}{\balpha}$ can be bounded by an EGD as well.
In particular, all its moments $\ExpVal[T^k]$ must be finite.
\end{restatable}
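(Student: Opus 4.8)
The plan is to reduce the statement to the moment and tail bounds for EGDs (\cref{thm:egd-stats}) by \emph{instrumenting} $P$ with a step counter and showing that the geometric bound semantics still succeeds on the instrumented program. Let $P'$ be $P$ with a fresh variable $X_{n+1}$, initialised to $0$, and the statement $X_{n+1} \passign 1$ inserted at the start of every loop body. Then the number of loop iterations performed by $P$ on $\egd{\P}{\balpha}$ is exactly the $X_{n+1}$-marginal of $\sem{P'}(\egd{\P'}{\balpha'})$, where $\egd{\P'}{\balpha'}$ lifts $\egd{\P}{\balpha}$ to $\NN^{n+1}$ by appending a size-$1$ counter block of decay rate $0$ (so $X_{n+1} = 0$ almost surely). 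If the relevant notion of running time counts individual statements rather than iterations, the two differ only by an affine change $T \mapsto aT + b$ with $a$ a positive integer, under which EGD-boundedness is preserved; failing (and hypothetical nonterminating) runs need a small separate remark, addressed below.

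The heart of the argument is the following. \textbf{Claim:} if $\egd{\P}{\balpha} \semgeo{P} \egd{\Q}{\bbeta}$, then $\egd{\P'}{\balpha'} \semgeo{P'} \egd{\Q'}{\bbeta'}$ for some $\egd{\Q'}{\bbeta'}$, where the derivation is the original one with a fresh counter dimension threaded through it, carrying a common decay rate $\gamma$ chosen in $(c_{\max}, 1)$, where $c_{\max} < 1$ is the largest contraction factor occurring in the original derivation. This is proved by induction on the derivation of $\egd{\P}{\balpha} \semgeo{P} \egd{\Q}{\bbeta}$. Since no event of $P$ mentions $X_{n+1}$ and $P'$ only ever executes $X_{n+1} \passign 1$ on it, the counter dimension stays ``clean'': event restrictions never touch it, $\semgeo{X_k := 0}$, $\semgeo{X_k \massign 1}$, $\semgeo{X_k \passign a}$ for $k \le n$ leave it alone, and the join in the conditional rule keeps its common decay rate $\gamma$. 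The only real work is the loop rule for $L = \Whilst{E}{B}$: given a $c$-contraction invariant $\egd{\R}{\bgamma}$ of the original loop (so $\egd{\P}{\balpha} \egdle \egd{\R}{\bgamma}$, $\evgeo{\egd{\R}{\bgamma}}{E} \semgeo{B} \egd{\S}{\bdelta}$, and $\egd{\S}{\bdelta} \egdle \egd{c\R}{\bgamma}$), take the invariant for the instrumented loop $\Whilst{E}{X_{n+1} \passign 1; B'}$ to be $\egd{\R}{\bgamma}$ extended with a counter dimension of decay rate $\gamma$, and replace the factor $c$ by $c' := c/\gamma < 1$. The first two conditions carry over verbatim because the counter dimension is inert inside $B'$; the point is the third, $\egd{\S'}{\bdelta'} \egdle \egd{c'\R'}{\bgamma'}$: the leading $X_{n+1} \passign 1$ shifts the counter-geometric factor $\gamma^{\,j}$ to $\gamma^{\,j-1}$ and zeroes the $j = 0$ slice, i.e.\ multiplies it by $1/\gamma$, and this extra $1/\gamma$ is exactly what replacing $c$ by $c' = c/\gamma$ absorbs. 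A single $\gamma$ works for all (possibly nested) loops simultaneously; it must, since a counter shared by a loop nested in another forces the two counter decay rates to coincide.

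Granting the claim, soundness of the geometric bound semantics (\cref{thm:soundness-geo-bounds}) gives $\sem{P'}(\egd{\P'}{\balpha'}) \mle \egd{\Q'}{\bbeta'}$. Marginalising out dimensions $1, \dots, n$ of $\egd{\Q'}{\bbeta'}$ by repeated use of \cref{lem:egd-marginalize} (valid because all decay rates are $< 1$) yields a one-dimensional EGD $\egd{\hat\Q}{\hat\beta}$ with $\hat\beta < 1$ dominating the distribution of $T$. By \cref{thm:egd-stats}, $\egd{\hat\Q}{\hat\beta}(n) = O(\hat\beta^{\,n})$ and $\ExpVal_{X \sim \egd{\hat\Q}{\hat\beta}}[X^k] < \infty$ for every $k$; hence $\Prob[T = n] = O(\hat\beta^{\,n})$, $\ExpVal[T^k] \le \ExpVal_{X \sim \egd{\hat\Q}{\hat\beta}}[X^k] < \infty$, and in particular $\Prob[T = \infty] = 0$, so $P$ terminates almost surely on $\egd{\P}{\balpha}$. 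For a single loop, the geometric tail on the iteration count (including runs that fail) can alternatively be read off directly from $B_E^{\,j}(\mu) \mle c^{\,j}\nu$ for a contraction invariant $\nu$, where $B_E(\sigma) := \esem{B}(\sigma|_E)$; this is the route to account for failing runs if they are deemed to contribute to $T$.

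The main obstacle I anticipate is the loop case of the claim: one has to check carefully that threading the counter increment through the derivation never breaks any link of the $\egdle$-chain that defines a contraction invariant, while tracking that the counter dimension remains clean (decay rate exactly $\gamma$, never restricted, marginalised, or reset, only ever shifted right by one per iteration) through arbitrary nesting of loops and conditionals, and that one global $\gamma$ makes every resulting $c' = c/\gamma$ strictly below $1$. A secondary point is reconciling our counter (loop-iteration count) with the paper's concrete definition of running time and its treatment of failing runs, and making sure the almost-sure-termination conclusion is genuinely earned by this construction rather than smuggled in.
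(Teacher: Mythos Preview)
Your approach is the paper's: instrument $P$ with a step counter $X_{n+1}$, thread the extra dimension through the existing derivation with a common decay rate $\gamma$, and invoke soundness plus \cref{thm:egd-stats}. (The paper increments after every statement rather than once per loop iteration, but you correctly note this is an affine reparametrisation.)

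The gap is in the nested-loop case, exactly where you suspected. Your claim that ``the first two conditions carry over verbatim because the counter dimension is inert inside $B'$'' fails when $B$ contains an inner loop: the inner loop's output under the instrumented semantics is not the lifted original output but $\frac{1-c_{\mathrm{inner}}}{1-c_{\mathrm{inner}}/\gamma} > 1$ times it (since its contraction factor changed from $c_{\mathrm{inner}}$ to $c_{\mathrm{inner}}/\gamma$). This blow-up propagates to the outer loop's contraction check. Concretely, for an outer loop whose body is a single inner loop, you would need $c'_{\mathrm{outer}} \ge \frac{c_{\mathrm{outer}}(1-c_{\mathrm{inner}})}{\gamma - c_{\mathrm{inner}}}$; setting $c'_{\mathrm{outer}} = c_{\mathrm{outer}}/\gamma$ forces $\gamma(1-c_{\mathrm{inner}}) \le \gamma - c_{\mathrm{inner}}$, i.e.\ $\gamma \ge 1$. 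So the formula $c' = c/\gamma$ is only valid when the body is loop-free under your per-iteration instrumentation.

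The fix, which is precisely what the paper does, is to strengthen the inductive hypothesis: for every $\gamma \in [\gamma^*,1)$ there is a factor $C(\gamma)$ with $C(\gamma) \to 1$ as $\gamma \to 1$ such that $\egd{[\P]}{(\balpha,\gamma)} \semgeo{P'} \egd{\Q'}{(\bbeta,\gamma)} \egdle \egd{C(\gamma)\,[\Q]}{(\bbeta,\gamma)}$. Then the outer contraction factor becomes $c' = c \cdot C_B(\gamma)$, which is $< 1$ for $\gamma$ close enough to $1$ (not merely above $c_{\max}$), and the loop's own $C(\gamma) = \frac{1-c}{1-c\cdot C_B(\gamma)}$ again tends to $1$. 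One global $\gamma$ does work, but it must be chosen after computing the required threshold $\gamma^*$ from the whole derivation, and the replacement factors are not uniformly $c/\gamma$.
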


\subsubsection{Loop Unrolling and Convergence}

Just like for the residual mass semantics, we would expect the geometric bound semantics to yield tighter and tighter bounds as we unroll loops further and further.
It turns out that for the $u$-fold unrolling $P^{(u)}$ of a program $P$, we can find upper bounds whose distance from the true distribution decreases exponentially in $u$.

\begin{restatable}[Convergence]{theorem}{ConvergenceUpperBounds}
\label{thm:convergence}
Let $P$ be a program containing potentially nested loops and $P^{(u)}$ its $u$-fold unrolling.
Suppose $\egd{\P}{\balpha} \semgeo{P} \egd{\Q}{\bbeta}$.
Then there exist $A \in \nnr, C \in [0,1)$ and $\Q^{(u)}$ such that $\egd{\P}{\balpha} \semgeo{P^{(u)}} \egd{\Q^{(u)}}{\bbeta} \egdle \egd{\Q}{\bbeta}$ and
\[ \egd{\Q^{(u)}}{\bbeta} - \sem{P}(\egd{\P}{\balpha}) \mle A \cdot C^u \cdot \egd{\Q}{\bbeta} \]
In particular, the distribution bound $\egd{\Q^{(u)}}{\bbeta}$ converges in total variation distance to the true distribution $\sem{P}(\egd{\P}{\balpha})$, as $u \to \infty$.
Similarly, the $k$-th moment bound $\ExpVal_{\X \sim \egd{\Q^{(u)}}{\bbeta}}[X_i^k]$ converges to the true moment $\ExpVal_{\X \sim \sem{P}(\egd{\P}{\balpha})}[X_i^k]$ for any $k$.
\end{restatable}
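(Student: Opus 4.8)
The plan is to prove this by structural induction on $P$, with the case of a single loop as the crux. Before that, I would set up two auxiliary lemmas about the relation $\semgeo{-}$ that the inductive cases rely on. First, \emph{homogeneity}: if $\egd{\P}{\balpha} \semgeo{P} \egd{\Q}{\bbeta}$ then $\egd{\lambda\P}{\balpha} \semgeo{P} \egd{\lambda\Q}{\bbeta}$ for all $\lambda \ge 0$, which holds because every constraint in \cref{fig:egd-statement-semantics} and every rule in \cref{fig:egd-event-semantics} is homogeneous in the initial-block entries. Second, \emph{structure-preserving monotonicity}: if $\egd{\P}{\balpha} \egdle \egd{\P'}{\balpha'}$ and $\egd{\P'}{\balpha'} \semgeo{P} \egd{\Q'}{\bbeta}$ via a fixed choice of join expansions and contraction invariants, then replaying the \emph{same} derivation on $\egd{\P}{\balpha}$ yields $\egd{\P}{\balpha} \semgeo{P} \egd{\Q}{\bbeta} \egdle \egd{\Q'}{\bbeta}$ with $\egd{\Q}{\bbeta}$ at least as tight an over-approximation of $\sem{P}(\egd{\P}{\balpha})$ as $\egd{\Q'}{\bbeta}$ was of $\sem{P}(\egd{\P'}{\balpha'})$; here I note that in the loop rule the right-hand side depends only on the invariant $\egd{\R}{\bgamma}$ and the factor $c$, not on the input, so the loop case of this lemma is automatic. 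I would also record the routine facts that $\egdle$ is transitive and is preserved by event restriction $\evgeo{\cdot}{E}$, by EGD addition and by nonnegative scaling, and that $\semgeo{B^{(u)}}$ is at least as tight as $\semgeo{B}$ (an EGD-level analogue of \cref{lem:unrolling-semantics}). With these in hand, $\Skip$, $\Fail$ and atomic statements are trivial ($P^{(u)} = P$, take $A = 0$), and $P_1;P_2$ and $\Ite{E}{P_1}{P_2}$ follow by composing/joining the bounds from the induction hypothesis, the errors adding up to $A C^u \egd{\Q}{\bbeta}$ with $C$ the maximum of the subprograms' constants.

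For the core case $L = \Whilst{E}{B}$, let $\egd{\P}{\balpha} \semgeo{L} \egd{\Q}{\bbeta}$ be witnessed by a $c$-contraction invariant $\egd{\R}{\bgamma}$, so $\egd{\P}{\balpha} \egdle \egd{\R}{\bgamma}$, $\evgeo{\egd{\R}{\bgamma}}{E} \semgeo{B} \egd{\S}{\bdelta} \egdle \egd{c\R}{\bgamma}$, and $\egd{\Q}{\bbeta} = \evgeo{\egd{\frac{\R}{1-c}}{\bgamma}}{\lnot E}$. Writing $\nu_0$ for $\egd{\P}{\balpha}$ and $\nu_{k+1} := \sem{B}(\nu_k|_E)$, the loop fixpoint equation gives $\sem{L}(\egd{\P}{\balpha}) = \sum_{k \ge 0} \nu_k|_{\lnot E}$, and soundness of $\semgeo{B}$ (\cref{thm:soundness-geo-bounds}) together with linearity and monotonicity of $\sem{B}$ yields $\nu_k \mle c^k \cdot \egd{\R}{\bgamma}$ by induction on $k$. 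Recall that $L^{(u)}$ is $u$ nested copies of $\Ite{E}{B^{(u)};\,\bullet}{\Skip}$ around $\Whilst{E}{B^{(u)}}$. I would build, by a forward induction on $v = 0, \dots, u$, EGD bounds $\egd{\P_v}{\balpha_v}$ on $\nu_v$ (obtained by applying $\semgeo{B^{(u)}}$ to $\evgeo{\egd{\P_{v-1}}{\balpha_{v-1}}}{E}$, via structure-preserving monotonicity against the homogeneously scaled $B$-step $\evgeo{\egd{c^{v-1}\R}{\bgamma}}{E} \semgeo{B} \egd{c^{v-1}\S}{\bdelta} \egdle \egd{c^v\R}{\bgamma}$), maintaining two facts: (a) $\egd{\P_v}{\balpha_v} \egdle \egd{c^v\R}{\bgamma}$, and (b) $\egd{\P_v}{\balpha_v} - \nu_v$ is bounded by the inner-loop error that the induction hypothesis for $B$ supplies (identically $0$ if $B$ is loop-free). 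By (a) at $v = u$ and homogeneity of the contraction condition, $\egd{c^u\R}{\bgamma}$ is a valid $c$-contraction invariant for the final loop $\Whilst{E}{B^{(u)}}$ applied to $\egd{\P_u}{\balpha_u}$, producing right-hand side $\egd{c^u\Q}{\bbeta}$. Peeling the $u$ conditionals back down by downward induction and joining the residues $\evgeo{\egd{\P_v}{\balpha_v}}{\lnot E}$ yields $\egd{\T_v}{\bbeta}$ with $\egd{\T_v}{\bbeta} \egdle \egd{c^v\Q}{\bbeta}$ — here I use (a), $\egdle$-monotonicity of $\evgeo{\cdot}{\lnot E}$, the representation-level identity $\evgeo{\egd{\R}{\bgamma}}{\lnot E} = (1-c)\egd{\Q}{\bbeta}$, and the telescoping $c^{v+1} + c^v(1-c) = c^v$ together with $\egdle$-compatibility of EGD addition. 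Setting $\egd{\Q^{(u)}}{\bbeta} := \egd{\T_0}{\bbeta}$ gives the required derivation $\egd{\P}{\balpha} \semgeo{L^{(u)}} \egd{\Q^{(u)}}{\bbeta} \egdle \egd{\Q}{\bbeta}$.

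For the error bound, note that at the measure level $\egd{\Q^{(u)}}{\bbeta}$ equals $\sum_{k=0}^{u-1} \nu_k|_{\lnot E} + c^u\egd{\Q}{\bbeta}$ up to the inner-loop error of (b); since $\sem{L}(\egd{\P}{\balpha}) = \sum_{k \ge 0} \nu_k|_{\lnot E}$, the gap is $c^u \egd{\Q}{\bbeta} - \sum_{k \ge u}\nu_k|_{\lnot E} \mle c^u\egd{\Q}{\bbeta}$, and folding in the inner-loop errors (which appear scaled by $c^k$ in iteration $k$, plus one more copy inside the final loop, each bounded by $A_B C_B^u$ times the corresponding bound) gives the overall estimate $\egd{\Q^{(u)}}{\bbeta} - \sem{L}(\egd{\P}{\balpha}) \mle A C^u \egd{\Q}{\bbeta}$ with $C := \max(c, C_B) < 1$ (finitely many loops) and $A$ a finite sum of constants; the factors $c^k$ sum to the $u$-independent $1/(1-c)$, so no polynomial-in-$u$ loss occurs. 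The two convergence corollaries are then immediate: by \cref{lem:unrolling-semantics} and \cref{thm:soundness-geo-bounds} we have $0 \mle \sem{P}(\egd{\P}{\balpha}) = \sem{P^{(u)}}(\egd{\P}{\balpha}) \mle \egd{\Q^{(u)}}{\bbeta} \mle \egd{\Q}{\bbeta}$, so the total-variation distance is $\egd{\Q^{(u)}}{\bbeta}(\NN^n) - \sem{P}(\egd{\P}{\balpha})(\NN^n) \le A C^u \egd{\Q}{\bbeta}(\NN^n) \to 0$, and each $k$-th moment obeys $0 \le \ExpVal_{\X \sim \egd{\Q^{(u)}}{\bbeta}}[X_i^k] - \ExpVal_{\X \sim \sem{P}(\egd{\P}{\balpha})}[X_i^k] \le A C^u\, \ExpVal_{\X \sim \egd{\Q}{\bbeta}}[X_i^k] \to 0$, the last moment being finite by \cref{thm:egd-stats}.

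I expect the main obstacle to be keeping invariants (a) and (b) simultaneously alive through the loop-free prefix and the final loop. The difficulty is that the bounds one gets for free are only $\mle$, whereas the conclusion $\egd{\Q^{(u)}}{\bbeta} \egdle \egd{\Q}{\bbeta}$ requires the strictly stronger $\egdle$, so one has to ensure that the \emph{representations} chosen in the $L^{(u)}$-derivation are exactly those produced by structure-preserving monotonicity — in particular through the join steps, where the choice of expansion size matters. Proving the monotonicity lemma in a form that preserves both $\egdle$ and the tightness of the over-approximation, and verifying $\egdle$-compatibility of all the EGD operations used, is where most of the work lies; once that machinery is in place, the nested-loop error accumulation and the convergence corollaries are routine.
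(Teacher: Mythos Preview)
Your overall structure matches the paper's: structural induction on $P$; the homogeneity and monotonicity lemmas you set up are precisely the paper's well-behavedness lemmas for $\JoinRel$ and $\semgeo{-}$; and the loop-case skeleton (forward iterates with invariant $\egd{\P_v}{\balpha_v} \egdle \egd{c^v\R}{\bgamma}$, the scaled invariant $\egd{c^u\R}{\bgamma}$ at the innermost loop, then downward induction through the $u$ nested conditionals) is the same construction the paper carries out with its intermediate $\P_v, \P_v^{(u)}, \Q_v^{(u)}$.

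The genuine gap is your treatment of \emph{join error}. Your claim that ``at the measure level $\egd{\Q^{(u)}}{\bbeta}$ equals $\sum_{k=0}^{u-1} \nu_k|_{\lnot E} + c^u\egd{\Q}{\bbeta}$ up to the inner-loop error of (b)'' is not correct: whenever the two arguments of a join have different decay rates in some coordinate (which is generic as soon as $\balpha \ne \bgamma$, or as soon as $B$ itself contains loops), every join \emph{strictly} overapproximates the measure sum, so each of the $u$ join steps in your downward induction contributes its own positive error term --- and these must individually be $O(C^u)$ for the final bound to hold. The same issue already appears in the $\Ite{E}{P_1}{P_2}$ case you dismiss in one line. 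The paper closes this with a dedicated \emph{join approximation lemma}: for any target slack $A' > 0$ one can choose sufficiently large expansions so that the resulting join $\egd{\P'}{\balpha'}$ satisfies $\egd{\P'}{\balpha'} - \egd{\R}{\bgamma} - \egd{\S}{\bdelta} \mle A' \cdot \egd{\Q}{\bbeta}$; the mechanism is that in a dimension where $\gamma_i < \beta_i$, expanding to size $m$ shrinks the relative tail contribution like $(\gamma_i/\beta_i)^m \to 0$. This lemma is then invoked at every join, with $A'$ chosen proportional to $C^u$. Your structure-preserving monotonicity delivers the $\egdle$ direction but not this tightness; you correctly flag at the end that ``the choice of expansion size matters'', but the proof needs the actual lemma and its quantitative use. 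A smaller secondary point: to use a single pair $A_B, C_B$ across all $v$ (the input to $B^{(u)}$ varies with $v$), the paper strengthens the induction hypothesis to require that $A$ and $C$ be \emph{monotone} in the input EGD, so that the common dominator $\egd{\R}{\bgamma}$ supplies uniform constants; you invoke $A_B, C_B$ as if input-independent without arranging this.
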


Note that this theorem does \emph{not} prove that the parameters $\bbeta$ of the geometric tails converge to the best possible ones.
In general, unrolling does not improve the decay rate at all, except for specific cases where the loop is finite and can be fully unrolled.
In fact, the derived tail bounds may differ arbitrarily from the true tails, as shown in the next example.

\begin{example}[Nonoptimal tail bounds]
\label{ex:imprecise-tails}
As an example of nonoptimal tail bounds, consider the following program where $\rho \in (0,1)$:
\begin{align*}
  &X_1 := 0; X_2 := 0; \\
  &\Whilst{\Flip(\rho)}{X_1 \passign 1; X_2 \passign 1}; \\
  &\Whilst{X_1 > 0 \land X_2 > 0}{X_1 \massign 1; X_2 \massign 1}
\end{align*}
It is clear that at the end of the program, $X_1 = 0$ and $X_2 = 0$ almost surely.
Throughout the program, the same operations are applied to them.
However, the geometric bound semantics cannot keep track of this correlation completely because the structure of EGDs can only express correlations in the finite initial block and not in the tails.
Hence the best tail bounds we get for $X_1$ and $X_2$ are of the form $O((\rho + \delta)^n)$ for any $\delta > 0$, whereas the true tails are zero.
So the tail bounds can be arbitrarily far from the true tails.
Note, however, that the existence of geometric tails is already useful knowledge: it tells us that the distribution is not heavy-tailed and that all moments are finite.
\end{example}

\section{Implementation}
\label{sec:impl}

We implemented both the residual mass semantics and the geometric bound semantics in a tool called \defn{Diabolo} (``\textbf{Di}screte \textbf{Di}stribution \textbf{A}nalysis via \textbf{Bo}unds, supporting \textbf{L}oops and \textbf{O}bservations'').
The code is available at \href{https://github.com/fzaiser/diabolo}{github.com/fzaiser/diabolo} and on Zenodo \citep{Zaiser24artifact}.

Diabolo takes a probabilistic program as input and outputs bounds on probability masses, moments, and tail asymptotics of the posterior distribution of a specified program variable.
The output and computational effort can be controlled with various options.
The most important one is the loop unrolling limit $u$, which is common to both semantics.
The geometric bound semantics has additional options: the dimension $d$ of the contraction invariants to be synthesized, an optimization objective, as well as the solver and optimizer to be used for the polynomial constraints.
The optimization objective is the quantity whose bound should be optimized: the probability mass, the expected value, or the asymptotic decay rates.
Depending on which of these is chosen, the best EGD bound can vary significantly (see \cref{sec:geo-bound-examples}).
Diabolo is implemented in Rust \citep{MatsakisK14} for performance and uses exact rational number computations instead of floating point to ensure the correctness of the results.
The distributions in the residual mass semantics are represented as arrays of probability masses.
We describe a few practical aspects of the implementation below, most of them pertaining to the geometric bound semantics.

\paragraph{Nondeterminism in the semantics}
As pointed out in \cref{sec:geo-bound-semantics}, there are two places in the geometric bound semantics where choices have to be made: branching and loops.
To generate the polynomial constraints arising from $\keyword{while}$ loops, we need to choose the size of the contraction invariants $\egd{\R}{\bgamma}$.
It is usually best to choose the dimensions of $\R$ to be as small as possible (often 1): this reduces both the number of constraints and constraint variables, which facilitates solving.
Some programs only have larger contraction invariants, but the dimension rarely needs to be greater than 2.
In terms of quality of the bounds, increasing the unrolling limit has a much greater effect (cf. \cref{thm:convergence}).
In the semantics of $\keyword{if}$ statements, the choice is in the size of the expansion of the EGDs in the $\JoinRel$ relation.
In Diabolo, we choose the smallest expansion for speed and memory usage reasons.
The size of the EGDs in a loop depends mostly on the size of the contraction invariant, so an increase there automatically reduces the imprecision in $\JoinRel$.

\paragraph{Approximating the support}
In the constraint problem arising from a $c$-contraction invariant $\egd{\R}{\bgamma}$, the constraint variables corresponding to the entries of $\R$ are ``easy'' because they only occur linearly, whereas the constraints typically contain higher-degree polynomials in $\bgamma$.
For this reason, it is desirable to determine the decay rates $\bgamma$ in advance, if possible.
If the program variable $X_i$ has finite support $\{a_i, a_i + 1, \dots, b_i\} \subset \NN$, we can set the decay rate $\gamma_i$ to zero (since the tails of $X_i$'s distribution are zero).
But to be able to represent its distribution with $\gamma_i = 0$, we have to choose at least $|\R|_i = b_i + 1$.
(In this case, the user-provided dimension $d$ for the contraction invariant is overridden.)
Furthermore, we can infer from the support that $\R_{i: j} = \zero$ for $j < a_i$.
To reap these benefits, we need to analyze the support of the random variables occurring in the program.
This is a standard application of abstract interpretation on the interval domain.
Overapproximating the support also has the benefit that we automatically know that probabilities outside of it are zero, which can improve the results of both semantics (see \apxref{apx:approx-support}).

\subsection{Constraint Solving}
\label{sec:impl-solving}

How can we solve the polynomial constraints arising from the geometric bound semantics?
As this is a decidable problem known as the \emph{existential theory of the reals}, we first tried to use SMT solvers, such as Z3 \citep{MouraB08} and CVC5 \citep{BarbosaBBKLMMMN22}, and the dedicated tool QEPCAD \citep{Brown03}.
Out of these, Z3 performed the best, but it was only able to solve the simplest problems and scaled badly.

\paragraph{Numerical solvers}
Given that even simple programs lead to dozens and sometimes hundreds of constraints, we resorted to numerical solutions instead.
The best solver for our purposes turned out to be IPOPT: a library for large-scale nonlinear optimization \citep{WachterB06}.
One issue with IPOPT are rounding errors, so it sometimes returns solutions that are not exactly feasible.
To address this, we tighten the constraints by a small margin before handing them over to IPOPT and check the feasibility of the returned solution with exact rational arithmetic.
Another disadvantage of IPOPT is that it cannot prove infeasibility -- only exact solvers like Z3 and QEPCAD can do so.
However, its impressive scalability makes up for these shortcomings.

We also developed a custom solver that transforms the constraint problem into an unconstrained optimization problem (details in \apxref{apx:impl-solvers}) and applies the ADAM optimizer \citep{KingmaB14}, a popular gradient descent method.
We also explored other off-the-shelf solvers for polynomial constraints: IBEX, a rigorous solver based on interval arithmetic, and constrained optimization methods in scipy, the Python scientific computing library; but neither of them performed well.
Diabolo includes the solvers IPOPT (the default), our ADAM-based solver, and Z3.

\paragraph{Nested loops}
Nested loops lead to cyclic $\le$-constraints on the decay rates, e.g. $\alpha \le \beta \land \beta \le \alpha$.
Numerical solvers struggle with such indirect equality constraints, so we added a preprocessing step to detect them and replace the equal variables with a single representative.

\paragraph{Reducing the cost of unrolling}
Unrolling loops is essential to obtain tight bounds on posterior masses and moments (it does not affect tail bounds).
On the other hand, unrolling increases the complexity of the constraint problem considerably, both in terms of number of variables and constraints.
A key observation from \cref{thm:convergence} can mitigate this problem: the variables that occur nonlinearly, i.e. decay rates and contraction factors, need not be changed as the unrolling count increases.
As a consequence, we first solve the constraint problem without unrolling and need only solve a \emph{linear} constraint problem for higher unrolling limits.
Since linear programming solvers are much faster and more robust than nonlinear solvers, this approach significantly reduces numerical issues and computation time.
In fact, without this technique, several benchmarks in \cref{sec:eval-quality} would not be solvable at all.

\subsection{Optimization}
\label{sec:impl-optimization}

Since the geometric bound semantics is nondeterministic, there are many EGD bounds for a given program.
Which one is the best depends on what quantity we want to optimize.
In Diabolo, the user can specify one of the following optimization objectives to minimize: the bound on the total probability mass, on the expected value, or on the tail asymptotics (i.e. the decay rate).

IPOPT and the ADAM-based solver can also be used for optimization.
Moreover, we apply the linear programming solver CBC by the COIN-OR project to optimize the linear variables (keeping the nonlinear ones fixed), which is very fast.
By default, Diabolo runs IPOPT, the ADAM-based optimizer, and the linear solver in this order, each improving the solution of the previous one.
The ADAM-based optimizer can be slow for larger programs, in which case the user may decide to skip it.
However, it often finds better tail bounds than IPOPT, which is why it is included by default.

\section{Empirical Evaluation}
\label{sec:eval}

In this section, we evaluate our two methods in practice to answer the following four questions:
\begin{enumerate}
  \item How often is the geometric bound semantics applicable in practice? (\cref{sec:eval-applicability})
  \item How tight are the bounds in practice? (\cref{sec:eval-quality})
  \item How do our methods perform compared to previous work? (\cref{sec:eval-prev})
  \item How do our two semantics perform compared to each other? (\cref{sec:eval-comparison})
\end{enumerate}
All benchmarks and code to reproduce the results are available \citep{Zaiser24artifact}.

\subsection{Applicability of the Geometric Bound Semantics}
\label{sec:eval-applicability}

In this section, we empirically investigate the \emph{existence} of geometric bounds, i.e. how often the constraints arising from the geometric bound semantics can be solved.
If some bound can be found at all, \cref{thm:convergence} ensures that it can be made arbitrarily tight by increasing the unrolling limit.
(This aspect of the \emph{quality} of bounds is studied in \cref{sec:eval-quality,sec:eval-comparison}.)
While \cref{thm:sufficient-cond} provides sufficient conditions for existence, we want to test the applicability of the geometric bound semantics in practice in a \emph{systematic} way.

\paragraph{Benchmark selection}
For this purpose, we collected benchmarks from the Github repositories of several probabilistic programming languages: Polar \citep{MoosbruggerSBK22}, Prodigy \citep{KlinkenbergBCHK24}, and PSI \citep{GehrMV16}.
Note that a benchmark being available in a tool's repository does not mean that the tool can solve it.
We searched all benchmarks from these repositories for the keyword \verb|while|, in order to find benchmarks with loops.
We manually filtered out benchmarks whose loops are actually bounded or that make essential use of continuous distributions (15), negative integers (4), comparisons of two variables (7), or multiplication (1), because our language cannot express these.
Some benchmarks using these features could still be translated in other ways to an equivalent program in our language.

We ended up with 43 benchmarks: 9 from Polar, 11 from Prodigy, 9 from PSI, and we added 14 of our own.
They include standard probabilistic loop examples (in particular, all examples from this paper, and variations thereof), nested loops, and real-world algorithms, such as probabilistic self-stabilization protocols \citep{Herman90,IsraeliJ90,BeauquierGJ99}.

\paragraph{Symbolic inputs}
Polar and Prodigy can handle symbolic inputs and symbolic parameters to some extent, which our techniques cannot.
One benchmark (\texttt{polar/fair\_biased\_coin}) used symbolic parameters, which we replaced with concrete values.
Several benchmarks from Polar and Prodigy have symbolic inputs, i.e. they are parametric in the initial values in $\NN$ of the variables.
Since our method cannot reason parametrically about the input, we instead put a $\Geometric(1/2)$ distribution on such inputs, to cover all possible values.
Of course, this yields less information than a method computing a symbolic result that is valid for all input values.
But if an EGD bound can be found for this input distribution, then an EGD bound can be found for each possible input value because any Dirac distribution on the input can be bounded by a scaled version of the geometric distribution: $\Dirac(m) \mle 2^m \cdot \Geometric(1/2)$.
Thus, for the question of the existence of bounds, we consider this a valid approach.
Note that reasoning about such a geometric distribution on the input is nontrivial (and much harder than reasoning about fixed input values).
For instance, Polar cannot always handle it even for benchmarks where it supports the version with symbolic input values.

\paragraph{Methodology}
We ran our tool Diabolo for the geometric bound semantics on all benchmarks with a timeout of 5 minutes.
The configuration options were mostly left at the defaults for most benchmarks (invariant size: 1, solver: IPOPT).
However, we disabled unrolling for all benchmarks since it only affects the quality of bounds, not their existence.
For 4 benchmarks, the invariant size had to be increased to 2 or 3 to find bounds.
For each benchmark, we recorded the time it took to compute the bounds, or any errors.

\paragraph{Results}
Diabolo was able to solve 37 out of the 43 benchmarks (86\%); 5 failed because no EGD bound exists (at least 3 of them have infinite expected running time); and 1 timed out due to the complexity of the constraint problem.
Among the solved benchmarks, the solution time was at most 3 seconds and typically much less.
This demonstrates that our geometric bound semantics is also of practical relevance, in addition to its nice theoretical properties.
Detailed results are available in \apxref{table:benchmarks-applicability}, along with statistics about the input program and the constraint problem arising from the geometric bound semantics.

\subsection{Quality of the Geometric Bounds}
\label{sec:eval-quality}

\begin{table}
\centering
\scriptsize
\caption{Diabolo's bounds on the expected value and the tail asymptotics for the benchmarks from \cref{sec:eval-applicability} where bounds exist.
Results are rounded to 4 significant digits.
(\#U: unrolling limit; EV: expected value; ``?'': ground truth could not be determined; ``\dag'': could not be solved by an automatic tool before, in the sense of bounding the distribution's moments and tails without user intervention; timeout: 5 minutes exceeded.) \vspace{-0.5em}}
\label{table:quality-bounds}
\alternaterowcolors
\begin{tabular}{l|cccc|ccc}
\toprule
Benchmark & \#U & True EV & EV bound & Time & True tail & Tail bound & Time \\
\midrule
polar/c4B\_t303 \dag & 30 & 0.1787... & [0.1787, 0.1788] & 0.24 s & ? & $O(0.5001^n)$ & 0.12 s \\
polar/coupon\_collector2 & 30 & 2 & [1.999, 2.001] & 0.16 s & $\Theta(0.5^n)$ & $O(0.5071^n)$ & 0.11 s \\
polar/fair\_biased\_coin & 30 & 0.5 & [0.4999, 0.5001] & 0.03 s & $0$ & $0$ & 0.02 s \\
polar/las\_vegas\_search & 200 & 20 & [19.98, 26.79] & 2.29 s & $\Theta(0.9523...^n)$ & $O(0.9524^n)$ & 0.72 s \\
polar/linear01 \dag & 30 & 0.375 & [0.3749, 0.3751] & 0.03 s & $0$ & $0$ & 0.03 s \\
polar/simple\_loop & 30 & 1.3 & [1.299, 1.301] & 0.02 s & $0$ & $0$ & 0.02 s \\
prodigy/bit\_flip\_conditioning & 30 & 1.254... & [1.254, 1.255] & 0.28 s & ? & $O(0.6254^n)$ & 0.19 s \\
prodigy/brp\_obs \dag & 30 & 4.989...e-10 & [4.989e-10, 1.489e-09] & 0.61 s & $0$ & $0$ & 3.83 s \\
prodigy/condand \dag & 30 & 0.75 & [0.7499, 0.7501] & 0.15 s & ? & $O(0.5001^n)$ & 0.07 s \\
prodigy/dep\_bern \dag & 30 & 0.5 & [0.4999, 0.5124] & 1.00 s & $\Theta(0.3333...^n)$ & $O(0.34^n)$ & 0.17 s \\
prodigy/endless\_conditioning & 30 & undef & undef & 0.15 s & $0$ & $0$ & 0.13 s \\
prodigy/geometric & 30 & 2 & [1.999, 2.007] & 1.09 s & $\Theta(0.5^n)$ & $O(0.5066^n)$ & 0.60 s \\
prodigy/ky\_die & 30 & 3.5 & [3.499, 3.501] & 0.18 s & $0$ & $0$ & 0.08 s \\
prodigy/n\_geometric \dag & 30 & 1 & [0.998, 1.068] & 0.08 s & $\Theta(0.6666...^n)$ & $O(0.673^n)$ & 0.07 s \\
prodigy/trivial\_iid \dag & 30 & 3.5 & [3.499, 3.503] & 2.60 s & $\Theta(0.8318...^n)$ & $O(0.836^n)$ & 0.16 s \\
psi/beauquier-etal3 & 30 & ? & \xmark{} (timeout) & t/o & ? & $O(0.7952^n)$ & 17.64 s \\
psi/cav-example7 & 80 & 10.41... & [10.41, 10.51] & 1.71 s & ? & $O(0.8927^n)$ & 0.12 s \\
psi/dieCond (Ex. \ref{ex:die-paradox}) \dag & 40 & 1.5 & [1.499, 1.501] & 0.15 s & $\Theta(0.3333...^n)$ & $O(0.3395^n)$ & 0.13 s \\
psi/ex3 & 30 & 0.6666... & [0.6666, 0.6667] & 0.04 s & $0$ & $0$ & 0.06 s \\
psi/ex4 & 30 & 0.6666... & [0.6666, 0.6667] & 0.23 s & $0$ & $0$ & 0.27 s \\
psi/fourcards & 30 & 0.2642... & [0.264, 0.2648] & 0.46 s & $0$ & $0$ & 0.34 s \\
psi/herman3 & 30 & 1.333... & [1.333, 1.334] & 62.08 s & ? & $O(0.5002^n)$ & 0.48 s \\
psi/israeli-jalfon3 & 30 & 0.6666... & [0.6666, 0.6668] & 1.71 s & ? & $O(0.2501^n)$ & 0.15 s \\
psi/israeli-jalfon5 & 30 & ? & \xmark{} (timeout) & t/o & ? & $O(0.6583^n)$ & 7.13 s \\
ours/1d-asym-rw (Ex. \ref{example:asym-rw}) \dag & 70 & 2 & [1.999, 2.542] & 1.47 s & $\Theta(\frac{0.8660...^n}{n^{1.5}})$ & $O(0.8682^n)$ & 0.12 s \\
ours/2d-asym-rw \dag & 70 & ? & [7.394, 100.4] & 79.60 s & ? & $O(0.9385^n)$ & 4.88 s \\
ours/3d-asym-rw \dag & 30 & ? & [9.443, 6.85e+05] & 289.94 s & ? & $O(0.9812^n)$ & 6.35 s \\
ours/asym-rw-conditioning \dag & 70 & 2.444... & [2.316, 2.588] & 4.85 s & $0$ & $0$ & 0.05 s \\
ours/coupon-collector5 & 80 & 11.41... & [11.41, 11.56] & 47.93 s & $\Theta(0.8^n)$ & $O(0.8002^n)$ & 15.56 s \\
ours/double-geo \ifarxiv(Ex. \ref{ex:double-geo})\fi \dag & 30 & 2 & [1.999, 2.01] & 0.32 s & $\Theta(0.7071...^n)$ & $O(0.711^n)$ & 0.12 s \\
ours/geometric (Ex. \ref{example:geo-counter}) \dag & 30 & 1 & [0.9999, 1.006] & 0.12 s & $\Theta(0.5^n)$ & $O(0.5037^n)$ & 0.10 s \\
ours/grid \dag & 30 & 2.75 & [2.749, 2.766] & 2.39 s & ? & $O(0.5285^n)$ & 0.30 s \\
ours/imprecise\_tails (Ex. \ref{ex:imprecise-tails}) \dag & 30 & 0 & [0, 0.007054] & 2.36 s & $0$ & $O(0.5001^n)$ & 0.18 s \\
ours/israeli-jalfon4 & 30 & ? & [1.457, 1.458] & 121.04 s & ? & $O(0.5011^n)$ & 0.44 s \\
ours/nested \dag & 5 & ? & [0.9152, 10.57] & 12.59 s & ? & $O(0.4998^n)$ & 0.40 s \\
ours/sub-geom \dag & 30 & 0.6666... & [0.6666, 0.6667] & 0.06 s & ? & $O(0.5001^n)$ & 0.10 s \\
ours/sum-geos & 80 & 8 & [7.998, 8.001] & 2.91 s & $\Theta(0.875^n)$ & $O(0.8751^n)$ & 0.30 s \\
\bottomrule
\end{tabular}
\end{table}

\paragraph{Methodology}
Beyond the mere \emph{existence}, we are also interested in the \emph{quality} of the bounds.
To this end, we ran Diabolo again on each benchmark where a bound could be found, once optimizing the bound on the expected value, once optimizing the tail bound.
The configuration options were left at the default settings with an unrolling limit of 30 for most benchmarks.
For expectation bounds, we modified the settings for 23 benchmarks, to adjust the unrolling limit (reported in \cref{table:quality-bounds}), invariant size to 2, 3, or 4 (if smaller values failed or yielded bad bounds), and to skip the ADAM-based optimizer (due to it being slow for some benchmarks).
For tail bounds, we set the unrolling limit to 1 and modified the default settings for 15 benchmarks to adjust the invariant size to 2, 3, or 4 and to skip the ADAM-based optimizer (due to it being slow for some benchmarks).

\paragraph{Results}
The results are shown in \cref{table:quality-bounds}.
We mark with ``\dag'' all examples that, to our knowledge, could not be solved automatically before, in the sense of bounding the distribution's moments and tails without user intervention.
Note that no other existing tool is able to obtain \emph{exponential} tail bounds like ours on any benchmark (unless the tails are zero).
One can see that all bounds are nontrivial and the upper and lower bounds on the expected value are usually close together.
However, it seems to be harder to find good bounds for benchmarks where the tails of the distribution decay slowly, such as the asymmetric random walks (ours/*-asym-rw).
Most of the tail bounds are also very close to the theoretical optimum where the exact tail bound could be manually determined.
This demonstrates that our geometric bound semantics generally yields useful bounds.

\subsection{Comparison with Previous Work}
\label{sec:eval-prev}

\paragraph{Comparison with GuBPI}
Our residual mass semantics shares many characteristics with GuBPI (see \cref{table:compare-related}).
However, GuBPI is designed for a more general setting with continuous sampling and soft conditioning.
As a consequence, when applied to discrete probabilistic programs with only hard conditioning, there is a lot of overhead.
To demonstrate this, we ran both tools on \cref{example:geo-counter,example:asym-rw,ex:die-paradox}, configured to produce the same bounds.
The results (\cref{table:compare-gubpi}) show that Diabolo's residual mass semantics is several orders of magnitude faster than GuBPI.

\Citet{WangYFLO24} subsequently improved on GuBPI's results, but we were unable to compare with their system because they use proprietary software.
However, they report performance improvements of at most a factor of 15 over GuBPI, which is insufficient to bridge the factor of over 100 for \cref{example:geo-counter} between GuBPI and our residual mass semantics, let alone the factor of over $10^5$ for \cref{ex:die-paradox}.
This demonstrates that the residual mass semantics is more effective than existing tools for discrete probabilistic programs.

\begin{table}
\centering
\footnotesize
\caption{Comparison of the running time of GuBPI \citep{BeutnerOZ22} and our residual mass semantics (as implemented in Diabolo) to produce the same bounds\vspace{-0.5em}}
\label{table:compare-gubpi}
\alternaterowcolors
\begin{tabular}{lrrr}
\toprule
Benchmark & GuBPI & Residual mass semantics & Speedup factor \\
\midrule
Simple counter (\cref{example:geo-counter}) &0.7 s &0.006 s & $1.2 \cdot 10^2$ \\
Asymmetric random walk (\cref{example:asym-rw}) &90 s &0.002 s & $4.5 \cdot 10^4$ \\
Mossel's die paradox (\cref{ex:die-paradox}) &156 s &0.0008 s & $2 \cdot 10^5$ \\
\bottomrule
\end{tabular}
\vspace{-0.5em}
\end{table}

\begin{table}
\centering
\footnotesize
\caption{Comparison of Diabolo with Polar \citep{MoosbruggerSBK22} on the benchmarks from \cref{table:quality-bounds} where Polar can compute the expected value (EV).
Diabolo's results are the same as in \cref{table:quality-bounds}, but listed again for an easier comparison.
(t/o: timeout of 5 minutes exceeded.)
\vspace{-0.5em}}
\label{table:compare-polar}
\rowcolors{2}{gray!20}{}
\begin{tabular}{l|cc|cc}
\toprule
\multirow{2}{*}{Benchmark} & \multicolumn{2}{c|}{Polar} & \multicolumn{2}{c}{Diabolo (ours)} \\
& Exact EV & Time & EV bound & Time \\
\midrule
polar/coupon\_collector2 & 2 & 0.26 s & [1.999, 2.001] & 0.16 s \\
polar/fair\_biased\_coin & 1/2 & 0.41 s & [0.4999, 0.5001] & 0.03 s \\
polar/las\_vegas\_search & \xmark{} & t/o & [19.98, 26.79] & 2.29 s \\
polar/simple\_loop & 13/10 & 0.20 s & [1.299, 1.301] & 0.02 s \\
prodigy/geometric & 2 & 0.46 s & [1.999, 2.007] & 1.09 s \\
prodigy/ky\_die & \xmark{} & t/o & [3.499, 3.501] & 0.18 s \\
psi/beauquier-etal3 & \xmark{} & t/o & \xmark{} & t/o \\
psi/cav-example7 & \xmark{} & t/o & [10.41, 10.51] & 1.71 s \\
psi/ex3 & 2/3 & 0.58 s & [0.6666, 0.6667] & 0.04 s \\
psi/ex4 & 2/3 & 0.21 s & [0.6666, 0.6667] & 0.23 s \\
psi/fourcards & \xmark{} & t/o & [0.264, 0.2648] & 0.46 s \\
psi/herman3 & 4/3 & 19.53 s & [1.333, 1.334] & 62.08 s \\
psi/israeli-jalfon3 & 2/3 & 19.40 s & [0.6666, 0.6668] & 1.71 s \\
psi/israeli-jalfon5 & \xmark{} & t/o & \xmark{} & t/o \\
ours/coupon-collector5 & 137/12 & 74.86 s & [11.41, 11.56] & 47.93 s \\
ours/geometric & 1 & 0.59 s & [0.9999, 1.006] & 0.12 s \\
ours/israeli-jalfon4 & \xmark{} & t/o & [1.457, 1.458] & 121.04 s \\
ours/sum-geos & 8 & 0.19 s & [7.998, 8.001] & 2.91 s \\
\bottomrule
\end{tabular}
\vspace{-1em}
\end{table}

\paragraph{Comparison with Polar}
We compare the geometric bound semantics (as implemented in Diabolo) with Polar \citep{MoosbruggerSBK22}, the only other tool that can be used to bound moments.
In fact, Polar can compute moments exactly, but does not bound the tail asymptotics.\footnote{
One could derive the tail asymptotic bound $O(n^{-k})$ from the $k$-th moment via Markov's inequality.
But this asymptotic bound is very weak as it applies to any distribution with a finite $k$-th moment.}
We only included benchmarks from \cref{table:quality-bounds} where Polar can compute the expected value (EV), at least in theory.
(We had to extend Polar slightly to support a geometric distribution as the initial distribution.)
Polar was run with a timeout of 5 minutes, like Diabolo.
The results (\cref{table:compare-polar}) demonstrate that Diabolo is often faster, and sometimes much faster, than Polar and applicable to more benchmarks.

\paragraph{Comparison of tail bounds}
There are several existing program analyses that can bound the tail distribution of the running time of a probabilistic program (see \cref{sec:related-work}), but only one \citep{ChatterjeeFNH16} achieves exponential bounds $O(c^{n})$ for $c < 1$ like our bounds.
This is particularly interesting as their assumptions (bounded differences and the existence of a linear ranking supermartingale) are remarkably similar to the conditions of our \cref{thm:sufficient-cond}.
Since the code for the experiments by \citet{ChatterjeeFNH16} is not available, we do a manual comparison on \cref{example:geo-counter,example:asym-rw}.
Our geometric bound semantics yields the bounds $O((\frac12 + \epsilon)^n)$ and $O((2\sqrt{r(1-r)} + \epsilon)^n)$, respectively, whereas their method yields $O((\exp(-1/2))^{n})$ and $O((\sqrt{\exp(-(1 - 2r)^2)})^n)$, where $r$ is the bias of the random walk.
It is not hard to see that our bounds are tighter (details in \apxref{apx:eval}).

\subsection{Comparison Between Our Two Semantics}
\label{sec:eval-comparison}

\begin{table}
\centering
\footnotesize
\caption{Comparison of the quality of moment and tail bounds for various examples between the residual mass and geometric bound semantics\vspace{-0.5em}}
\label{table:moment-tail-bounds}
\begin{tabularx}{\textwidth}{X|lllll}
\toprule
Example &Method &Expected value &2nd moment &Tail &Time \\
\midrule
\multirow{3}{\hsize}{Simple counter \\ ({\cref{example:geo-counter}})} &exact &1 &3 &$\Theta(0.5^n)$ & \\
 &res. mass sem. &$\ge 1 - 5 \cdot 10^{-14}$ &$\ge 3 - 3 \cdot 10^{-12}$ &n/a &0.003 s \\
 &geom. bound sem. &$\le 1.000037$ &$\le 3.00017$ &$O(0.5037^n)$ &0.092 s \\
\midrule
\multirow{3}{\hsize}{Random walk \\ (\cref{example:asym-rw})} &exact &2 &10 &$\Theta(n^{-3/2}0.8660...^n)$ & \\
 &res. mass sem. &$\ge 1.99997$ &$\ge 9.998$ &n/a &0.067 s \\
 &geom. bound sem. &$\le 2.55$ &$\le 21.8$ &$O(0.8682^n)$ &1.47 s \\
\midrule
\multirow{3}{\hsize}{Introductory ``Die paradox'' \\ (\cref{ex:die-paradox})} &exact &1.5 &3 &$\Theta(0.3333...^n)$ & \\
 &res. mass sem. &$\ge 1.5 - 2\cdot 10^{-16}$ &$\ge 3 - 4 \cdot 10^{-16}$ &n/a &0.005 s \\
 &geom. bound sem. &$\le 1.5 + 8 \cdot 10^{-7}$ &$\le 3 + 3 \cdot 10^{-6}$ &$O(0.3395^n)$ &0.097 s \\
\midrule
\multirow{3}{\hsize}{Coupon collector \\ (5 coupons)} &exact &11.41666... &155.513888... &$\Theta(0.8^n)$ & \\
 &res. mass sem. &$\ge 11.41665$ &$\ge 155.5131$ &n/a &0.88 s \\
 &geom. bound sem. &$\le 11.56$ &$\le 158.5$ &$O(0.8002^n)$ &22.6 s \\
\midrule
\multirow{3}{\hsize}{Herman's self-stabilization \\ (3 processes)} &exact &1.333... &4.222... &? & \\
 &res. mass sem. &$\ge 1.3333332$ &$\ge 4.222220$ &n/a &0.509 s \\
 &geom. bound sem. &$\le 1.3339$ &$\le 4.226$ &$O(0.5002^n)$ &34.8 s \\
\bottomrule
\end{tabularx}
\vspace{-0.5em}
\end{table}

\begin{figure}
\begin{subfigure}{0.32\textwidth}
  \centering
  \includegraphics[width=\textwidth]{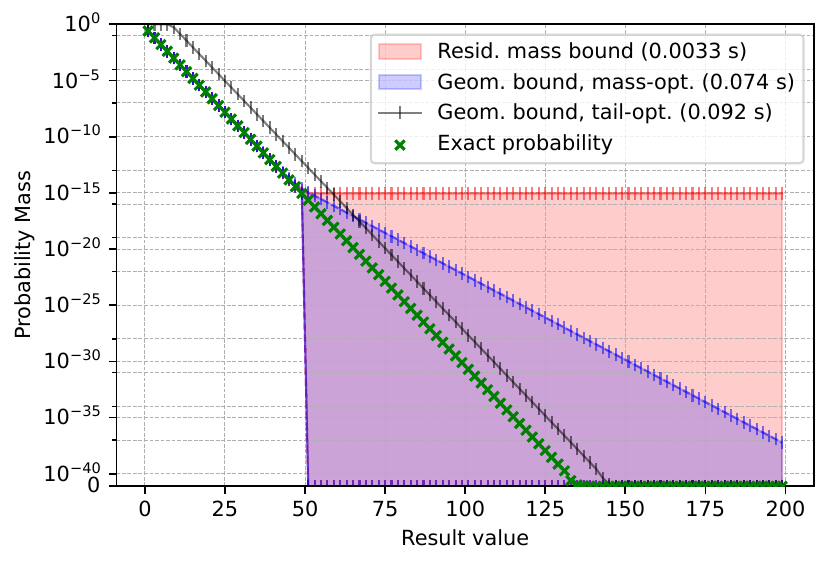}\vspace{-0.5em}
  \caption{Geom. counter (\cref{example:geo-counter})}
  \label{fig:rest-vs-geom-geo}
\end{subfigure}
\hfill
\begin{subfigure}{0.32\textwidth}
  \centering
  \includegraphics[width=\textwidth]{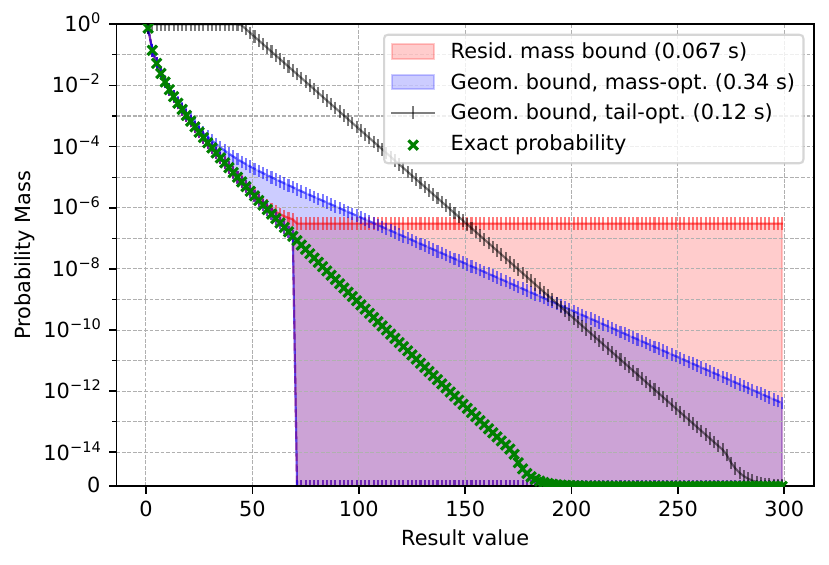}\vspace{-0.5em}
  \caption{Asym. rand. walk (\cref{example:asym-rw})}
  \label{fig:rest-vs-geom-asym-rw}
\end{subfigure}
\hfill
\begin{subfigure}{0.32\textwidth}
  \centering
  \includegraphics[width=\textwidth]{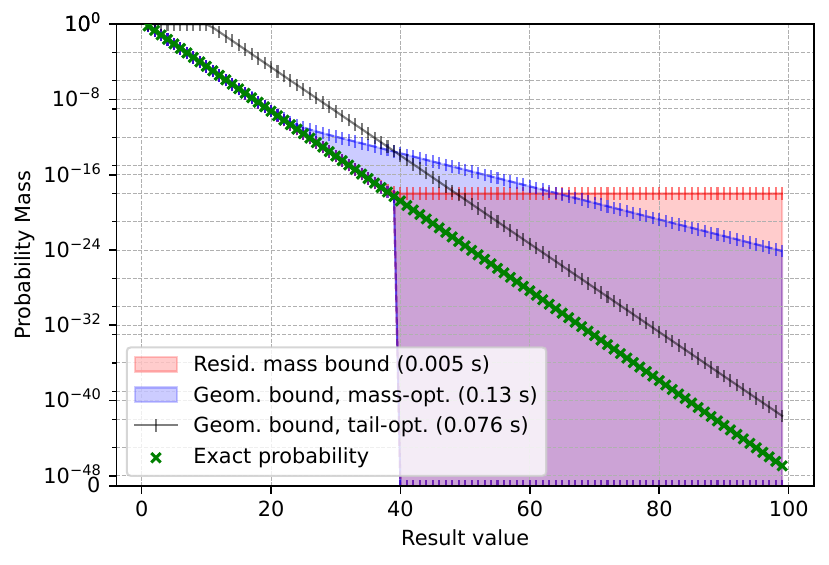}\vspace{-0.5em}
  \caption{Die paradox (\cref{ex:die-paradox})}
  \label{fig:rest-vs-geom-die-paradox}
\end{subfigure}

\hfill
\begin{subfigure}{0.49\textwidth}
  \centering
  \includegraphics[width=0.64\textwidth]{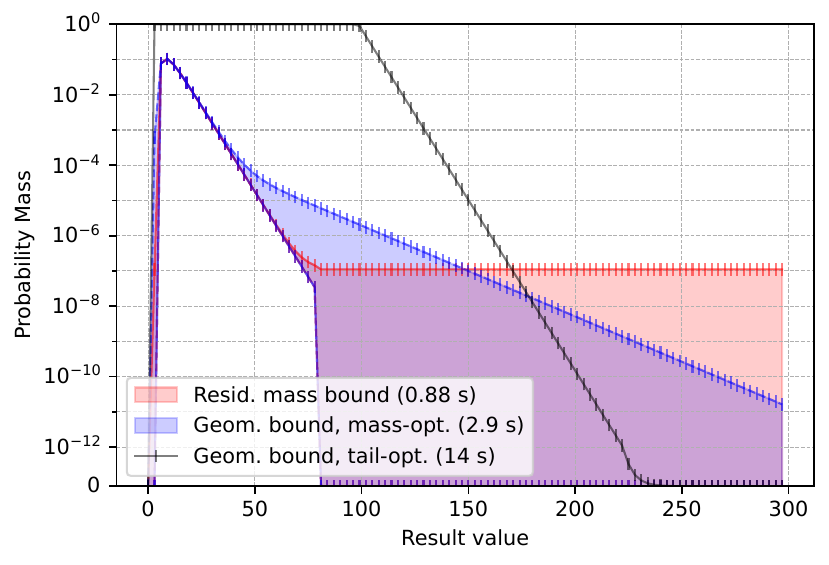}\vspace{-0.5em}
  \caption{Coupon collector problem with 5 coupons}
  \label{fig:rest-vs-geom-coupon-collector}
\end{subfigure}
\hfill
\begin{subfigure}{0.49\textwidth}
  \centering
  \includegraphics[width=0.64\textwidth]{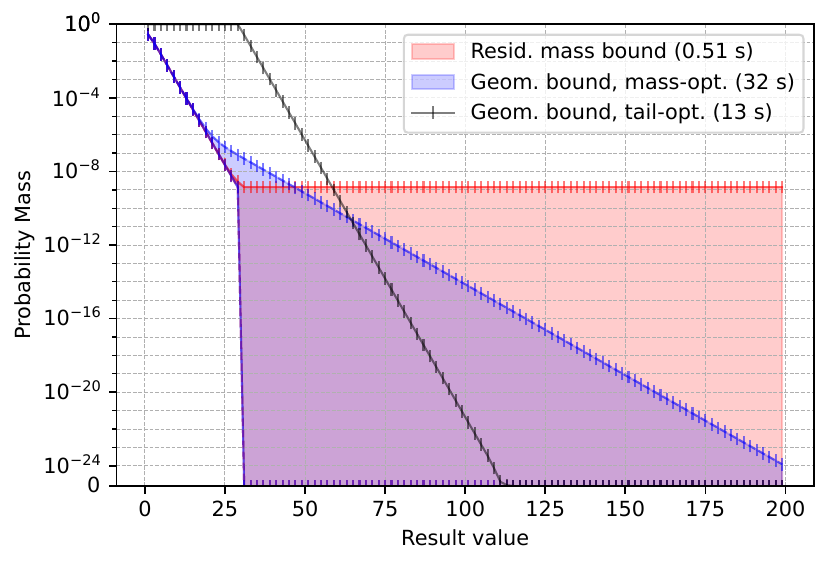}\vspace{-0.5em}
  \caption{Herman's self-stabilization with 3 processes}
  \label{fig:rest-vs-geom-herman3}
\end{subfigure}
\hfill{}\vspace{-0.5em}
\caption{Comparison of the residual mass and geometric bound semantics.
Note that the probability masses (y-axis) are on a logarithmic scale, except for the lowest part, which is linear so as to include 0.}
\label{fig:rest-vs-geom}
\vspace{-1em}
\end{figure}

We compare the \emph{quality} of the bounds from the two semantics on 5 examples: the simple counter (\cref{example:geo-counter}), the asymmetric random walk (\cref{example:asym-rw}), the introductory example (\cref{ex:die-paradox}), the coupon collector problem with 5 coupons (ours/coupon-collector5), and Herman's self-stabilization protocol with 3 processes (adapted from psi/herman3).
They were run with a loop unrolling limit of 50, 70, 40, 80, and 30 respectively, except for the geometric tail bounds where the limit was set to 1.
We report bounds on the first two (raw) moments and the tails (\cref{table:moment-tail-bounds}).
Note that the residual mass semantics (not just the lower bound semantics) is needed for lower bounds on moments because they require upper bounds on the normalizing constant.
The optimization objective for the EGD bounds was the expected value for the moment bounds and the decay rate for the tail bound.
The residual mass semantics is faster and finds good lower bounds on the moments, but cannot find upper bounds or tail bounds.
The geometric bound semantics can find upper bounds on both, but takes much more time than the residual mass semantics.

The bounds on the probability masses can be found in \cref{fig:rest-vs-geom}.
We report EGD bounds for two optimization objectives: the total probability mass (``mass-optimized'') and the decay rate (``tail-optimized'').
The residual mass semantics is faster and yields tighter bounds for the probability masses on small values, but the bound is flat, i.e. the difference between upper and lower bounds is constant.
In contrast, the geometric bound semantics finds upper bounds that keep decreasing like a geometric distribution even for large numbers beyond the unrolling limit.
On the other hand, its upper bound is somewhat worse for small values than the residual mass semantics.
This is because the geometric bounds arise from the contraction invariant, which requires a uniform decrease of the distribution.
So the geometric bounds need to lose precision either for large values (when minimizing the total mass bound), or small values (when minimizing the tail bounds).
In fact, the tail-optimized bound is much worse for small values, but eventually becomes almost parallel to the exact solution, since their asymptotics are almost equal up to a constant factor.

\section{Related Work}
\label{sec:related-work}

A summary of the most relevant work is given in \cref{table:compare-related}: guaranteed bounds \citep{BeutnerOZ22,WangYFLO24}, exact Bayesian inference with loops \citep{KlinkenbergBCHK24}, and exact loop analysis with moments \citep{MoosbruggerSBK22}.
This section presents a more detailed account.

\paragraph{Guaranteed bounds}
The two most directly related pieces of work are \citet{BeutnerOZ22} and \citet{WangYFLO24}.
\citet{BeutnerOZ22} compute guaranteed bounds on posterior probabilities: they partition the trace space (the space of all sampled values during program execution) into boxes or, more generally, polytopes, in such a way that they can get upper and lower bounds on the likelihood in each partition.
They also present an interval type system to overapproximate recursion.
For discrete probabilistic programs, the effective bounds are comparable to our residual mass semantics (\cref{sec:eval-prev}), but their computation is slower and they are not proven to converge.

\citet{WangYFLO24} introduce a new approach to bounding fixed points and can even find nontrivial bounds for ``score-recursive'' programs, i.e. programs with loops where the likelihood can increase in each iteration, and may be unbounded.

Both methods focus on continuous distributions and obtain bounds on the posterior probability of the result $X$ of a program being in a certain interval, i.e. bounds of the form $\Prob[X \in [a, b]] \in {[p, q]} \subseteq [0, 1]$.
If $X$ has infinite support, this is not enough to bound the moments or tail probability asymptotics.
Our geometric bound semantics bounds the \emph{whole distribution} and therefore does not suffer from these restrictions.
On the other hand, our programming language is more restricted and we do not support continuous distributions.

\paragraph{Verified samplers}
Another approach to ensuring correctness of approximate inference is to verify the sampler itself.
The Zar system compiles discrete probabilistic programs with loops to provably correct samplers that can be used to form simple Monte Carlo estimates of conditional probability masses and moments and is fully verified in the Coq proof assistant \citep{BagnallSB23}.

\paragraph{Exact Bayesian inference}
Exact inference is intractable in general because it requires analytical solutions to infinite sums or integrals \citep{GehrMV16}.
Thus exact inference systems either have to restrict programs to a tractable subclass or may fail on some inputs.
In the former category are Dice \citep{HoltzenBM20}, which only supports finite discrete distributions, and SPPL \citep{SaadRM21}, which supports continuous distributions, but imposes restrictions on their use, and Genfer \citep{ZaiserMO23}, which allows some infinite-support distributions but restricts operations on variables.
In the latter category are the systems PSI \citep{GehrMV16} and Hakaru \citep{NarayananCRSZ16}, which rely on computer algebra to find a closed-form solution for the posterior.

We are aware of only one piece of work that tackles exact inference for loops \citep{KlinkenbergBCHK24}.
They build on the idea to use generating functions as an exact representation of distributions with infinite support \citep{KlinkenbergBKKM20}.
\Citet{ZaiserMO23} combine this idea with automatic differentiation to perform Bayesian inference on loop-free probabilistic programs and can even support continuous sampling (but only discrete observations).
Back in the fully discrete setting, \citet{KlinkenbergBCHK24} can perform exact Bayesian inference on a \emph{single probabilistic loop without nesting}, but only if a loop invariant template is provided, which specifies the shape of the loop invariant, but may contain holes that have to be filled with real numbers.
They can synthesize these numbers and verify the resulting invariant, from which it is easy to obtain the exact posterior distribution, parameterized in the program inputs.
However, the hard problem of finding the right shape of the loop invariant falls on the user.
In fact, even if a loop invariant exists, it cannot always be specified in their language \citep[Example 25]{KlinkenbergBCHK24}.
In contrast, our method is fully automatic, so it does not suffer from these issues.
As a trade-off, our method is not exact and only works on fixed program inputs.

\paragraph{Exact loop analysis with moments}
There has been some work on studying probabilistic loop behavior by synthesizing invariants for the moments of the random variables.
\citet{BartocciKS19} consider Prob-solvable loops, which do not have a stopping condition and whose body consists of polynomial assignments with random coefficients.
They obtain invariants for the moments of the program variables by solving a system of recurrence equations arising from the fixed structure of the loop body.
\citet{AmrollahiBKKMS22} extend this method to a larger class of loops, with fewer restrictions on variable dependencies.
\citet{MoosbruggerSBK22} support (restricted) if-statements and state-dependent distribution parameters.
They use the moments to bound tail probabilities and can reconstruct the probabilities masses of the distribution if it has finite support.
\citet{KofnovMSBB24} allow Prob-solvable loops with exponential and trigonometric updates in the loop body, by leveraging the characteristic function.

Our approach differs from the above-mentioned pieces of work in several regards.
On the one hand, these approaches yield exact moments and can handle continuous distributions and more primitive operations (e.g. multiplication).
On the other hand, they do not support branching on variables with infinite support, nested loops, or conditioning.
Our method can handle these constructs, and bounds the whole distribution, not just the moments.
As a consequence, we get bounds on probability masses and exponential bounds on the tails, which are better than the polynomial bounds that can be derived from the moments in previous works.

\paragraph{Cost analysis \& martingales}
Martingales are a common technique for cost analysis of probabilistic programs, i.e. computing bounds on the total cost of a program.
The cost can be running time, resource usage, or a utility/reward.
A martingale is a sequence of random variables $X_0, X_1, \dots$ such that $\ExpVal[X_i] < \infty$ and $\ExpVal[X_{i+1} \mid X_1, \dots, X_i] = X_i$ for all $i \in \NN$.
A supermartingale only requires $\ExpVal[X_{i+1} \mid X_1, \dots, X_i] \le X_i$.
A ranking supermartingale (RSM) requires $\ExpVal[X_{i+1} \mid X_1, \dots, X_i] \le X_i - \epsilon \cdot [X_n > 0]$ for some $\epsilon > 0$.
(There are small variations in the definitions.)

\citet{ChakarovS13} introduce the concept of a RSM to prove almost sure termination of probabilistic programs.
\citet{ChatterjeeFNH16} extend RSMs to probabilistic programs with nondeterminism and use them to derive expectation and tail bounds on the running time $T$.
Then the program's expected termination time is $\ExpVal[T] \le \frac{\ExpVal(X_0)}{\epsilon}$.
Assuming the martingale is difference-bounded (i.e. $|X_{i+1} - X_i| < C$ for all $i$), they derive exponential tail bounds on the running time ($\Prob[T = n] = O(c^n)$) from Azuma's inequality for supermartingales.

\citet{KuraUH19} also deal with termination probabilities, but extend the method to higher moments $\ExpVal[T^k]$ of the running time.
Via Markov's inequality, they obtain polynomial tail bounds on the termination time ($\Prob[T = n] = O(n^{-k}))$, which are asymptotically weaker than \citet{ChatterjeeFNH16}'s exponential bounds, but more generally applicable (no need for bounded differences).
\citet{WangHR21} improve on this by bounding \emph{central} moments, from which they obtain better (but still polynomial) tail bounds than \citet{KuraUH19}.
In subsequent work by \citet{NgoCH18}, \citet{WangHGCQS19}, and \citet{ChatterjeeGMZ24}, the focus shifts to more general costs than running time and relaxing requirements around bounded updates and nonnegativity of costs.

Martingale-based methods can handle continuous distributions and more primitive program operations (e.g. multiplication).
Their bounds are also typically parameterized in the program inputs, whereas our bounds only apply to fixed inputs.
Conversely, our method yields bounds not just on one program variable (e.g. the running time), but on the distribution of all variables; we support conditioning; and our bounds can be made arbitrarily tight (\cref{thm:convergence-residual-mass-bounds,thm:convergence}).

\section{Conclusion}
\label{sec:conclusion}

We advance the concept of guaranteed bounds for discrete probabilistic programs.
Such bounds are both more automatable and generally applicable than exact methods, as well as providing more guarantees than sampling-based methods.
Our residual mass semantics finds flat tail bounds, improving upon the state of the art in terms of simplicity, speed, and provable guarantees.
Our geometric bound semantics adds two novel concepts (contraction invariants and EGDs) to the toolbox of automated probabilistic program analysis, which currently relies predominantly on martingales, moment analysis, and generating functions.
Since this semantics overapproximates the \emph{whole distribution} (as opposed to individual probabilities or moments), it can bound probability masses, moments, and tails at the same time.
Both semantics have desirable theoretical properties, such as soundness and convergence, and our empirical studies demonstrate their applicability and effectiveness on a variety of benchmarks.

\subsection{Future Work}
\label{sec:future-work}

It should be possible to handle (demonic) nondeterminism, as this corresponds to finding a bound $\nu$ on the maximum of two EGD bounds $\mu_1, \mu_2$, which can be specified using inequalities $\mu_1 \mle \nu$ and $\mu_2 \mle \nu$.
One could also try to generalize the shape of upper bounds from EGDs to a more expressive subclass of discrete phase-type distributions.
Furthermore, it would be interesting to illuminate the theoretical connections with martingale analysis and the quality of the derivable bounds, which was investigated empirically in \cref{sec:eval-prev}.

\paragraph{Negative numbers}
Our language is restricted to variables in $\NN$.
The residual mass semantics can easily be extended to $\ZZ$ since it only requires an exact semantics for the loop-free fragment, which is straightforward for distributions with finite support.
We believe that the geometric bound semantics can also be extended to $\ZZ$, but this is more involved.
Even for the simple case of a single variable, one EGD is not enough, but we need two: one for the negative part and one for the nonnegative part.
In the case of $n$ variables, we need one $\egd{\P^{\oo}}{\balpha^{\oo}}$ for each orthant $\oo \in \{ +, - \}^n$.
We do not foresee any serious conceptual difficulties extending the geometric bound semantics to operate on such a family of orthant EGDs, but the notation and proofs become quite cumbersome because $\P$ takes \emph{two} multi-indices, not just one.
In the interest of clarity and conceptual simplicity, we decided not to pursue this idea further.

\paragraph{Continuous distributions}
Variables with values in $\RR$ that can be sampled from continuous distributions are much harder to support.
The residual mass semantics can build on the exact semantics for the loop-free fragment by \citet{ZaiserMO23} to achieve partial support for continuous distributions and variables in $\nnr$ \citep[Chapter 5]{Zaiser24}.
For the geometric bound semantics, it may be possible to support continuous distributions by extending the exponential distribution to ``eventually exponential distributions'' (analogously to EGDs extending geometric distributions) which would consist of an initial block using, e.g. a polynomial bound on the probability density function on a compact interval $[-a, a]$ of $\RR$, and exponential distributions on $(-\infty, -a]$ and $[a, \infty)$.
Whether and how this could work in detail is unclear and requires further investigation.

\clearpage

\section*{Data Availability Statement}

The artifact for this paper (consisting of the Diabolo tool and benchmarks) is archived on Zenodo \citep{Zaiser24artifact}.
The latest version is available on Github: \url{https://github.com/fzaiser/diabolo}.

\begin{acks}
We would like to thank Maria Craciun, Eva Darulova, Joost-Pieter Katoen, Lutz Klinkenberg, Sam Staton, Dominik Wagner, Peixin Wang, Đorđe Žikelić, and the anonymous reviewers for their helpful comments and discussions.

This research was supported by the Engineering and Physical Sciences Research Council (studentship 2285273, grant EP/T006579) and the National Research Foundation, Singapore, under its RSS Scheme (NRF-RSS2022-009). 
\end{acks}

\bibliographystyle{ACM-Reference-Format}
\bibliography{references}

\ifarxiv

\clearpage

\appendix

\section{Supplementary Material for \Cref{sec:unrolling-bounds}}
\label{apx:unrolling-bounds}

\UnrollingSemantics*
\begin{proof}
The proof is a straightforward induction on the structure of the program $P$.
The only nontrivial case is $P = \Whilst{E}{P_1}$, where we use the fact that the unrolling operator $\Phi_{E,P_1}$ corresponds to unrolling the loop syntactically.
More specifically, we claim that $\esem{(\Whilst{E}{P_1})_v^{(u)}} = \esem{\Whilst{E}{P_1}}$ for all $v \le u$.
To see this, note that the unrolling operators agree: $\Phi_{E,P_1^{(u)}} = \Phi_{E,P_1}$ because by the induction hypothesis, we have
\[ \Phi_{E,P_1^{(u)}}(\psi)(\mu) = \mu_\lightning + \mu|_{\lnot E} + \psi(\esem{P_1^{(u)}}(\mu|_E)) = \mu_\lightning + \mu|_{\lnot E} + \psi(\esem{P_1}(\mu|_E)) = \Phi_{E,P_1}(\psi)(\mu) \]
Thus the case $v = u$ follows from
\[ \esem{(\Whilst{E}{P_1})_u^{(u)}} = \esem{\Whilst{E}{P_1^{(u)}}} = \lfp{\Phi_{E,P_1^{(u)}}} = \lfp{\Phi_{E,P_1}} = \esem{\Whilst{E}{P_1}} \]
For $v < u$, we find by induction (decreasing $v$):
\begin{align*}
  \esem{(\Whilst{E}{P_1})_{v}^{(u)}}(\mu) &= \esem{\Ite{E}{P_1^{(u)}; (\Whilst{E}{P_1})_{v+1}^{(u)}}{\Skip}}(\mu) \\
  &= \mu_\lightning + \esem{(\Whilst{E}{P_1})_{v+1}^{(u)}}(\esem{P_1^{(u)}}(\mu|_E)) + \mu|_{\lnot E} \\
  &= \Phi_{E, P_1^{(u)}}(\esem{(\Whilst{E}{P_1})_{v+1}^{(u)}})(\mu) \\
  &= \Phi_{E, P_1^{(u)}}(\esem{\Whilst{E}{P_1}})(\mu) \qquad \text{by ind. hyp.} \\
  &= \Phi_{E, P_1}(\esem{\Whilst{E}{P_1}})(\mu) \\
  &= \Phi_{E, P_1}(\lfp{\Phi_{E, P_1}})(\mu) \\
  &= \lfp{\Phi_{E, P_1}}(\mu) \\
  &= \esem{\Whilst{E}{P_1}}(\mu) \qedhere
\end{align*}
\end{proof}

\SoundnessLowerBounds*
\begin{proof}
The proof is by induction on the structure of the program $P$.
For $P = P_1; P_2$, we have
\[ \semlo{P}(\mu) = \semlo{P_2}(\semlo{P_1}(\mu)) \mle \esem{P_2}(\semlo{P_1}(\mu)) \mle \esem{P_2}(\esem{P_1}(\mu)) = \esem{P}(\mu) \]
by the induction hypothesis and the monotonicity of the standard semantics.
For conditionals $P = \Ite{E}{P_1}{P_2}$, we find
\[ \semlo{P}(\mu) = \mu_\lightning + \semlo{P_1}(\mu|_E) + \semlo{P_2}(\mu|_{\lnot E}) \mle \mu_\lightning + \esem{P_1}(\mu|_E) + \esem{P_2}(\mu|_{\lnot E}) = \esem{P}(\mu) \]
by the induction hypothesis.
For loops $P = \Whilst{E}{P_1}$, we have $\semlo{P}(\mu) = \zero \mle \esem{P}(\mu)$ by definition.
In all other cases, $\semlo{P}(\mu)$ and $\esem{P}(\mu)$ are equal by definition.
\end{proof}

\ConvergenceLowerBounds*
\begin{proof}
The proof is by induction on the structure of the program $P$.
Let $P^{(u)}_*$ denote the program $P^{(u)}$ with all loops replaced by $\Diverge := \Whilst{\Flip(1)}{\Skip}$.
Note that $P^{(u)}_*$ can be defined inductively just like $P^{(u)}$.
It is straightforward to show that $\semlo{P^{(u)}} = \esem{P_*^{(u)}}$ because $\esem{\Diverge} = \zero$.
We claim that $(\esem{P^{(u)}_*}(\mu))_{u \in \NN}$ is an $\omega$-chain (monotone sequence of measures) with supremum $\esem{P}(\mu)$.
This implies $\semlo{P^{(u)}}(\mu)(\estates) \to \esem{P}(\mu)(\estates) \le \mu(\estates) < \infty$ and the convergence is monotone.
Therefore the total variation distance between the two measures converges to 0 as well.

For simple statements with $P = P_*^{(u)}$, the claim is clear.
The remaining cases are the following.

\paragraph{Sequential composition ($P = P_1; P_2$)}
We have $\nu_u := \esem{P_*^{(u)}}(\mu) = \esem{{P_1}_*^{(u)}}(\esem{{P_2}_*^{(u)}}(\mu))$.
This is clearly monotone in $u$ and $\esem{P}(\mu)$ is an upper bound because
\[ \nu_u := \esem{P_*^{(u)}}(\mu) = \esem{{P_1}_*^{(u)}}(\esem{{P_2}_*^{(u)}}(\mu)) \mle \esem{{P_1}_*^{(u)}}(\esem{P_2}(\mu)) \mle \esem{P_1}(\esem{P_2}(\mu)) = \esem{P}(\mu) \]
To show that $\esem{P}(\mu)$ is the least upper bound, let $\nu'_{u,v} := \esem{{P_1}_*^{(u)}}(\esem{{P_2}_*^{(v)}}(\mu))$.
We claim that any $\nu$ is an upper bound on $\{ \nu'_{u,v} \mid u, v \in \NN \}$ if and only if it is an upper bound on $\nu_u$.
The forward direction is immediate.
For the reverse direction, let $\nu$ be an upper bound on $\nu_u$ for all $u \in \NN$.
Then $\nu$ is an upper bound on $\nu'_{u,v}$ for all $u, v \in \NN$ because $\nu'_{u,v} \mle \nu_{\max(u,v)} \mle \nu$.
By monotonicity, we have
\begin{align*}
  \sup_{u \in \NN} \esem{P_*^{(u)}}(\mu) &= \sup_{u \in \NN} \nu_u \\
  &= \sup_{u \in \NN, v \in \NN} \nu'_{u,v} \\
  &= \sup_{u \in \NN} \sup_{v \in \NN} \esem{{P_1}_*^{(u)}}(\esem{{P_2}_*^{(v)}}(\mu)) \\
  &= \sup_{u \in \NN} \esem{{P_1}_*^{(u)}}(\esem{P_2}(\mu)) \\
  &= \esem{P_1}(\esem{P_2}(\mu)) \\
  &= \esem{P}(\mu)
\end{align*}

\paragraph{Branching ($P = \Ite{E}{P_1}{P_2}$)}
We have $\esem{P_*^{(u)}}(\mu) = \mu_\lightning + \esem{{P_1}_*^{(u)}}(\mu|_E) + \esem{{P_2}_*^{(u)}}(\mu|_{\lnot E})$.
This is clearly monotone in $u$ by the induction hypothesis and by the linearity of suprema, we find that the supremum is in fact $\mu_\lightning + \esem{P_1}(\mu|_E) + \esem{P_2}(\mu|_{\lnot E}) = \esem{P}(\mu)$.

\paragraph{Loops ($P = \Whilst{E}{P_1}$)}
By induction hypothesis $\esem{{P_1}_*^{(u)}}(\mu)$ is an $\omega$-chain with supremum $\esem{P_1}(\mu)$.
We use the standard result that the least fixpoint $\lfp{\Phi_{E, P_1}}$ is the supremum of the $\omega$-chain $\Phi_{E, P_1}^n(\zero)$ in $n$.
We first claim that $\Phi_{E, {P_1}_*^{(u)}}(\psi)$ is an $\omega$-chain in $u$ for any $\omega$-continuous $\psi$, with supremum $\Phi_{E, P_1}(\psi)$.
This follows inductively from
\begin{align*}
\sup_{u \in \NN} \Phi_{E, {P_1}_*^{(u)}}(\psi)(\mu) &= \mu_\lightning + \mu|_{\lnot E} + \sup_{u \in \NN} \psi(\esem{{P_1}_*^{(u)}}(\mu|_E)) \\
&\mle \mu_\lightning + \mu|_{\lnot E} + \psi(\sup_{u \in \NN} \esem{{P_1}_*^{(u)}}(\mu|_E)) &&\text{by $\omega$-continuity} \\
&\mle \mu_\lightning + \mu|_{\lnot E} + \psi(\esem{P_1}(\mu|_E)) &&\text{by induction hypothesis} \\
&= \Phi_{E, P_1}(\psi)(\mu)
\end{align*}
We thus find:
\begin{align*}
\sup_{u \in \NN} \esem{{P}_*^{(u)}} &= \sup_{u \in \NN} \sup_{n \in \NN} \Phi_{E, {P_1}_*^{(u)}}^n(\zero) &&\text{characterization of least fixpoint} \\
&= \sup_{n \in \NN} \sup_{u \in \NN} \Phi_{E, {P_1}_*^{(u)}}^n(\zero) &&\text{swap suprema} \\
&= \sup_{n \in \NN} \Phi_{E, P_1}^n(\zero) &&\text{above claim} \\
&= \lfp{\Phi_{E, P_1}}(\zero) &&\text{characterization of least fixpoint} \\
&= \esem{P}(\mu) \qedhere
\end{align*}
\end{proof}

\SoundnessResidualMassBounds*
\begin{proof}
For the first part, we note that
\begin{align*}
    \esem{P}(\mu)(S) &= \semlo{P}(\mu)(S) + (\esem{P}(\mu) - \semlo{P}(\mu))(S) \\
    &\le \semlo{P}(\mu)(S) + (\esem{P}(\mu) - \semlo{P}(\mu))(\estates) \\
    &\le \semlo{P}(\mu)(S) + (\mu(\estates) - \semlo{P}(\mu)(\estates)) \\
    &= \semlo{P}(\mu)(S) + \semres{P}(\mu)
\end{align*}
by monotonicity of measures and \cref{lem:stdsem-properties}.
For the second part we use soundness of lower bounds and the residual mass bound and the inequality $\semlo{P}(\mu)(\lightning) \le \esem{P}(\mu)(\lightning) \le \semlo{P}(\mu)(\lightning) + \semres{P}(\mu)$:
\[ \frac{\semlo{P}(\mu)(S)}{1 - \semlo{P}(\mu)(\lightning)} \le \frac{\esem{P}(\mu)(S)}{1 - \esem{P}(\lightning)} = \Normalize(\esem{P}(\mu))(S) \le \frac{\semlo{P}(\mu)(S) + \semres{P}(\mu)}{1 - \semlo{P}(\mu)(\lightning) - \semres{P}(\mu)} \qedhere \]
\end{proof}

\ConvergenceResidualMassBounds*
\begin{proof}
The first part follows from:
\begin{align*}
\semres{P^{(u)}}(\mu) &= \mu(\estates) - \semlo{P^{(u)}}(\mu)(\estates) \\
&= \esem{P}(\mu)(\estates) - \semlo{P^{(u)}}(\mu)(\estates) &&\text{by almost sure termination} \\
&\to 0 &&\text{by \cref{thm:convergence-lower-bounds}}
\end{align*}
Together with $\semlo{P^{(u)}}(\mu) \to \esem{P}(\mu)$ (by \cref{thm:convergence-lower-bounds}), this immediately implies the second part:
\begin{align*}
\frac{\semlo{P^{(u)}}(\mu)(S)}{1 - \semlo{P^{(u)}}(\mu)(\lightning)} &\to \frac{\esem{P}(\mu)(S)}{1 - \esem{P}(\mu)(\lightning)} = \Normalize(\esem{P}(\mu))(S) \\
\frac{\semlo{P^{(u)}}(\mu)(S) + \semres{P^{(u)}}(\mu)}{1 - \semlo{P^{(u)}}(\mu)(\lightning) - \semres{P^{(u)}}(\mu)} &\to \frac{\esem{P}(\mu)(S)}{1 - \esem{P}(\mu)(\lightning)} = \Normalize(\esem{P}(\mu))(S) \qedhere
\end{align*}
\end{proof}

\section{Supplementary Material for \Cref{sec:geo-bounds}}
\label{apx:geom-bound-sem}

\paragraph{Remark on the measure vs representation of EGDs}
Strictly speaking, $\egdle$ and $\semgeo{P}$ operate on the \emph{representation} of EGDs, i.e. the parameters $\P$ and $\balpha$ of an EGD $\egd{\P}{\balpha}$.
So to be precise, we should write $(\P, \balpha) \egdle (\Q, \balpha)$ and $(\P, \balpha) \semgeo{P} (\Q, \balpha)$ instead of $\egd{\P}{\balpha} \egdle \egd{\Q}{\balpha}$ and $\egd{\P}{\balpha} \semgeo{P} \egd{\Q}{\balpha}$.
However, we went with the slightly imprecise notation because it provides the intuition that we ultimately use these relations to bound the measures.

\subsection{Detailed Example Derivations}
\label{apx:full-examples}

\paragraph{Simple geometric counter (\cref{example:geo-counter})}

We provide more details on the optimization step.
If we want to optimize the asymptotics of the tail probabilities $\Prob[X_1 = n]$ as $n \to \infty$, we want to choose $\alpha$ as small as possible, i.e. very close to $\frac{1}{2}$, accepting that this will make $\frac{1}{1-c} \ge \frac{2\alpha}{2\alpha - 1}$ very large.
Setting $\alpha = \frac{1}{2} + \epsilon$ yields $\frac{1}{2(1-c)} \ge \frac{\alpha}{2\alpha - 1} = \frac{1}{4\epsilon} + \frac{1}{2}$ and thus the bound
\[ \sem{P}(\egd{1}{0}) \mle \egd{\frac{1}{4\epsilon} + \frac{1}{2}}{\frac{1}{2} + \epsilon} \]
which implies $\Prob[X_1 = n] \le \left(\frac{1}{4\epsilon} + \frac{1}{2}\right) \left(\frac{1}{2} + \epsilon\right)^n$.

On the other hand, if we want to optimize the bound on the expected value of $X_1$, we want to minimize (by \cref{thm:egd-stats})
\[ \ExpVal_{\X \sim \sem{P}(\egd{1}{0})}[X_1] \le \ExpVal_{\X \sim \egd{\frac{1}{2(1-c)}}{\alpha}}[X_1] = \frac{\alpha}{2(1-c)(1-\alpha)^2} \]
This is minimized under the constraints $0 \le \alpha, c < 1$ and $\frac12 \le \alpha c$ for $\alpha = \frac{\sqrt5 - 1}{2} \approx 0.618$ and $c = \frac{\sqrt5 + 1}{4} \approx 0.809$.
At this point, the bound on the expected value is
\[ \ExpVal[X_1] \le \frac{11 + 5 \sqrt5}{2} \approx 11.09 \]

\paragraph{Asymmetric random walk example (\cref{example:asym-rw})}

Consider the program
\[ X_1 := 1; X_2 := 0; \Whilst{X_1 > 0}{X_2 \passign 1; \Ite{\Flip(r)}{X_1 \passign 1}}{X_1 \massign 1} \]
where the probability $r$ of going right is less than $\frac12$.
We find
\begin{align*}
  \semgeo{X_1 := 1; X_2 := 0}(\egd{1}{(0,0)}) &= \egd{\begin{pmatrix}0 \\ 1\end{pmatrix}}{(0,0)}
\end{align*}
as the distribution at the start of the loop.
Assume a $c$-contraction invariant $\egd{\P}{\balpha}$ with $\P \in \nnr^{[2] \times [1]}$ exists.
Then the bound on the distribution of the program is
\begin{align*}
  \sem{P}(\egd{1}{(0, 0)}) &\mle \evgeo{\egd{\frac{\P}{1-c}}{\balpha}}{\lnot(X_1 > 0)} = \egd{\begin{pmatrix}\tfrac{\P_{0,0}}{1-c} \\ 0\end{pmatrix}}{(0, \alpha_2)}
\end{align*}
One loop iteration transforms the contraction invariant as follows (using the strict join relation for the probabilistic branching):
\begin{align*}
&{\semgeo{X_2 \passign 1; \ProbBranch{X_1 \passign 1}{r}{X_1 \massign 1}}}\left(\evgeo{\egd{\begin{pmatrix}\P_{0,0} \\ \P_{1,0}\end{pmatrix}}{\balpha}}{X_1 > 0}\right) \\
&= {\semgeo{X_2 \passign 1; \ProbBranch{X_1 \passign 1}{r}{X_1 \massign 1}}}\left(\egd{\begin{pmatrix}0 \\ \P_{1,0}\end{pmatrix}}{\balpha}\right) \\
&= {\semgeo{\ProbBranch{X_1 \passign 1}{r}{X_1 \massign 1}}}\left(\semgeo{X_2 \passign 1}(\egd{\begin{pmatrix}0 \\ \P_{1,0}\end{pmatrix}}{\balpha})\right) \\
&= {\semgeo{\Ite{\Flip(r)}{X_1 \passign 1}{X_1 \massign 1}}}\left(\egd{\begin{pmatrix}0 & 0 \\ 0 & \P_{1,0}\end{pmatrix}}{\balpha}\right) \\
&= \egd{\begin{pmatrix}0 & (1-r)\P_{1,0} \\ 0 & (1-r)\alpha_1 \P_{1,0} \\ 0 & (1-r)\alpha_1^2 \P_{1,0} + r \P_{1,0} \end{pmatrix}}{\balpha}
\end{align*}
because
\begin{align*}
  {\semgeo{X_1 \passign 1}}\left(\evgeo{\egd{\begin{pmatrix}0 & 0 \\ 0 & \P_{1,0}\end{pmatrix}}{\balpha}}{\Flip(r)}\right)
  &= {\semgeo{X_1 \passign 1}}\left(\egd{\begin{pmatrix}0 & 0 \\ 0 & r \P_{1,0}\end{pmatrix}}{\balpha}\right) \\
  &= \egd{\begin{pmatrix}0 & 0 \\ 0 & 0 \\ 0 & r \P_{1,0}\end{pmatrix}}{\balpha} \\
  {\semgeo{X_1 \massign 1}}\left(\evgeo{\egd{\begin{pmatrix}0 & 0 \\ 0 & \P_{1,0}\end{pmatrix}}{\balpha}}{\lnot\Flip(r)}\right)
  &= {\semgeo{X_1 \massign 1}}\left(\egd{\begin{pmatrix}0 & 0 \\ 0 & (1-r)\P_{1,0}\end{pmatrix}}{\balpha}\right) \\
  &= \egd{\begin{pmatrix}0 & (1-r)\P_{1,0} \\ 0 & (1-r)\alpha_1 \P_{1,0}\end{pmatrix}}{\balpha} \\
  &= \egd{\begin{pmatrix}0 & (1-r)\P_{1,0} \\ 0 & (1-r)\alpha_1 \P_{1,0} \\ 0 & (1-r)\alpha_1^2 \P_{1,0}\end{pmatrix}}{\balpha}
\end{align*}
so their strict join is
\[ \egd{\begin{pmatrix}0 & 0 \\ 0 & 0 \\ 0 & r \P_{1,0}\end{pmatrix}}{\balpha} + \egd{\begin{pmatrix}0 & (1-r)\P_{1,0} \\ 0 & (1-r)\alpha_1 \P_{1,0} \\ 0 & (1-r)\alpha_1^2 \P_{1,0}\end{pmatrix}}{\balpha} = \egd{\begin{pmatrix}0 & (1-r)\P_{1,0} \\ 0 & (1-r)\alpha_1 \P_{1,0} \\ 0 & (1-r)\alpha_1^2 \P_{1,0} + r \P_{1,0} \end{pmatrix}}{\balpha} \]

Hence the contraction invariant has to satisfy the following requirements:
\begin{align*}
  \egd{\begin{pmatrix}0 \\ 1\end{pmatrix}}{(0, 0)} &\egdle \egd{\begin{pmatrix}\P_{0,0} \\ \P_{1,0}\end{pmatrix}}{\balpha} \\
  \egd{\begin{pmatrix}0 & (1-r)\P_{1,0} \\ 0 & (1-r)\alpha_1 \P_{1,0} \\ 0 & (1-r)\alpha_1^2 \P_{1,0} + r \P_{1,0} \end{pmatrix}}{\balpha} &\egdle \egd{\begin{pmatrix}c \P_{0,0} \\ c \P_{1,0}\end{pmatrix}}{\balpha} \\
  &= \egd{\begin{pmatrix}c \P_{0,0} & c \alpha_2 \P_{0,0} \\ c \P_{1,0} & c \alpha_2 \P_{1,0} \\ c \alpha_1 \P_{1,0} & c \alpha_1 \alpha_2 \P_{1,0} \end{pmatrix}}{\balpha}
\end{align*}
which reduce to the following polynomial constraints:
\[
  0 \le \P_{0,0} \quad
  1 \le \P_{1,0} \quad
  (1-r)\P_{1,0} \le c\alpha_2 \P_{0,0}
  \quad (1-r)\alpha_1 \P_{1,0} \le c \alpha_2 \P_{1,0}
  \quad (1-r)\alpha_1^2 \P_{1,0} + r \P_{1,0} \le c \alpha_1\alpha_2 \P_{1,0}
\]
besides the obvious ones (every variable is nonnegative and $\alpha_1, \alpha_2, c \in [0, 1)$).
The most interesting constraint is the last one, which is equivalent to
\[ (1-r) \cdot \alpha_1^2 - c\alpha_2 \cdot \alpha_1 + r \le 0 \]
Solving this for $\alpha_1$ yields:
\[ \alpha_1 \in \left[ \frac{c\alpha_2 - \sqrt{c^2\alpha_2^2 - 4r(1-r)}}{2(1-r)} , \frac{c\alpha_2 + \sqrt{c^2\alpha_2^2 - 4r(1-r)}}{2(1-r)} \right] \]
For $\alpha_1$ to exist, we need to have $1 > c\alpha_2 \ge \sqrt{4r(1-r)}$.
For $\alpha_1$ to be satisfiable is equivalent to the lower bound of the interval being less than 1, i.e. $c\alpha_2 - 2(1-r) < \sqrt{c^2\alpha_2^2 - 4r(1-r)}$.
This can only hold if $c\alpha_2 - 2(1-r) \le 0$ or
\begin{align*}
  &&(c\alpha_2 - 2(1-r))^2 &< c^2\alpha_2^2 - 4r(1-r) \\
  \iff &&-2c\alpha_2 + 4(1-r) &< -4r \\
  \iff &&c\alpha_2 &> 2
\end{align*}
a contradiction.
Hence $c\alpha_2 \le 2(1-r)$ must hold, and together with $c\alpha_2 \ge \sqrt{4r(1-r)}$ we obtain:
\begin{align*}
  \sqrt{4r(1-r)} \le 2(1-r) \iff 4r \le 4(1-r) \iff r \le \frac{1}{2}
\end{align*}
For $r = \frac12$, the constraint $\sqrt{4r(1-r)} \le c\alpha_2 < 1$ is violated.
So $r < \frac12$ is a necessary condition for the existence of a solution.
It turns out that this is sufficient: we can pick $c, \alpha_2 \in [0,1)$ with $c\alpha_2 = \sqrt{4r(1-r)}$.
Then $\alpha_1 = \frac{c\alpha_2}{2(1-r)} < 1$ is a valid solution.

In the satisfiable case $r < \frac12$, we get the following bound on the distribution of the program:
\begin{align*}
  \egd{1}{(0,0)} &\semgeo{P} \evgeo{\egd{\begin{pmatrix}\frac{\P_{0,0}}{1-c} \\ \frac{\P_{1,0}}{1-c}\end{pmatrix}}{\balpha}}{\lnot(X_1 > 0)} = \egd{\begin{pmatrix}\frac{\P_{0,0}}{1-c} \\ 0\end{pmatrix}}{(0, \alpha_2)}
\end{align*}
The asymptotic bound is $\Prob[X_2 = n] = \frac{\P_{0,0}}{1-c} \alpha_2^n$, so it is best for $\alpha_2$ as small as possible.
Since $\alpha_2 \ge \frac{\sqrt{4r(1-r)}}{c} > 2\sqrt{r(1-r)}$, the best possible geometric bound for $\Pr[X_2 = n]$ is $O((2\sqrt{r(1-r)} + \epsilon)^n)$.
We could find a bound on $\ExpVal[X_2]$ in the same way as in the previous example, but this would be very tedious to do manually.
Instead we will automate the semantics in \cref{sec:impl}.

Interestingly, our method's tail bound can get arbitrarily close (asymptotically) to the best possible geometric bound.
It is a known fact about random walks and hitting times that
\[ \Prob[X_2 = n] = \frac{1}{n}\binom{n}{(n-1)/2}\cdot r^{(n-1)/2} (1-r)^{(n+1)/2} \]
Using well-known approximations for Catalan numbers $\frac{1}{2n+1}\binom{2n+1}{n} \sim \frac{4^n}{n^{3/2}\sqrt{\pi}}$, we can compute their asymptotics for the exact probabilities (see \cref{example:asym-rw}) as follows:
\begin{align*}
  \Prob[X_2 = n] &= \frac{1}{n}\binom{n}{(n-1)/2}\cdot r^{(n-1)/2} (1-r)^{(n+1)/2} \\
  &\sim \frac{4^{(n-1)/2}}{((n-1)/2)^{3/2}\sqrt{\pi}} \cdot \frac{\sqrt{1-r}}{\sqrt{r}} \cdot (\sqrt{r(1-r)})^n \\
  &= (2\sqrt{r(1-r)})^n \cdot \frac{\sqrt{1-r}}{2 \sqrt{\pi r} ((n-1)/2)^{3/2}} \\
  &= \Theta((2\sqrt{r(1-r)})^n \cdot n^{-3/2})
\end{align*}
Thus the best possible geometric bound is $O((2\sqrt{r(1-r)})^n)$, and our method's bound can get arbitrarily close.

\subsection{Full Proofs}
\label{apx:geom-bound-proofs}

\subsubsection{Properties of EGDs}
We make explicit some properties of EGDs that were mentioned in the main text.

\begin{lemma}[Expansion preserves the measure]
\label{lem:egd-expansion}
The expansion $\egd{\Q}{\balpha}$ of $\egd{\P}{\balpha}$ with $|\P| \le |\Q|$ describes the same measure as $\egd{\P}{\balpha}$.
\end{lemma}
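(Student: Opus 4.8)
The plan is to show that $\egd{\Q}{\balpha}$ and $\egd{\P}{\balpha}$ assign the same mass to every $\ii \in \NN^n$, which immediately gives equality of the measures. Since the mass function of an EGD factors componentwise (the tensor index and the exponent of $\balpha$ both operate coordinatewise, and $\balpha^{\bm v}\cdot\balpha^{\bm w} = \balpha^{\bm v+\bm w}$), it suffices to verify the identity one dimension at a time. This reduces everything to the following one-dimensional fact: for integers $p \le q$, a rate $\alpha \in [0,1)$, and any $i \in \NN$, writing $j := \min(i, q-1)$, one has
\[ \min(j, p-1) = \min(i, p-1) \quad\text{and}\quad \max(j-p+1, 0) + \max(i-q+1, 0) = \max(i-p+1, 0). \]
Once these two arithmetic identities are in hand, unfolding the definition of the expansion inside $\egd{\Q}{\balpha}(\ii) = \Q_{\min(\ii,|\Q|-\one_n)}\cdot\balpha^{\max(\ii-|\Q|+\one_n,\zero_n)}$ and collecting powers of $\balpha$ yields exactly $\egd{\P}{\balpha}(\ii)$.

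First I would prove the two one-dimensional identities by a case split on whether $i \le q-1$ or $i \ge q$. In the first case $j = i$ and $i - q + 1 \le 0$, so both identities are immediate. In the second case $j = q-1 \ge p-1$, so $\min(j,p-1) = p-1 = \min(i,p-1)$ (using $i \ge q \ge p$), while $\max(j-p+1,0) = q-p$ and $\max(i-q+1,0) = i-q+1$, whose sum is $i-p+1 = \max(i-p+1,0)$ (again using $i \ge p$).

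Then I would lift this to $n$ dimensions by applying the identities coordinatewise with $p = |\P|_k$, $q = |\Q|_k$, $\alpha = \alpha_k$, $i = i_k$: the first identity shows $\min\bigl(\min(\ii,|\Q|-\one_n), |\P|-\one_n\bigr) = \min(\ii,|\P|-\one_n)$, so the entry of $\P$ selected through the expanded block of $\Q$ is the same one selected directly; the second identity shows $\max\bigl(\min(\ii,|\Q|-\one_n)-|\P|+\one_n,\zero_n\bigr) + \max(\ii-|\Q|+\one_n,\zero_n) = \max(\ii-|\P|+\one_n,\zero_n)$, so the accumulated power of $\balpha$ agrees. Substituting the definition $\Q_{\ii} = \P_{\min(\ii,|\P|-\one_n)}\balpha^{\max(\ii-|\P|+\one_n,\zero_n)}$ and applying these two facts gives $\egd{\Q}{\balpha}(\ii) = \P_{\min(\ii,|\P|-\one_n)}\cdot\balpha^{\max(\ii-|\P|+\one_n,\zero_n)} = \egd{\P}{\balpha}(\ii)$, as required.

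The argument involves only elementary case analysis on $\min$ and $\max$; the one mildly delicate point --- the main thing to get right --- is the bookkeeping in the second identity, i.e. that the "missing" factor $\alpha^{q-p}$ absorbed into the enlarged initial block of $\Q$ exactly compensates for the shortened geometric-tail exponent $i-q+1$ versus $i-p+1$. A cleaner alternative would be to first prove that enlarging the initial block by a single index in one dimension preserves the measure (the new slice $\Q_{k:|\P|_k}$ equals $\P_{k:|\P|_k-1}\cdot\alpha_k$, precisely the value the geometric extension prescribes) and then iterate; I would mention this as a remark but carry out the direct computation above.
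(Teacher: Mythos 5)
Your proof is correct and follows essentially the same route as the paper: both unfold the definition of the expansion in the mass function and reduce the claim to the identity $\max(\min(\ii,|\Q|-\one_n)-|\P|+\one_n,\zero_n)+\max(\ii-|\Q|+\one_n,\zero_n)=\max(\ii-|\P|+\one_n,\zero_n)$ (together with the corresponding $\min$ identity for the selected block entry). The only cosmetic difference is that you verify this coordinatewise by a case split on $i_k \le |\Q|_k - 1$ versus $i_k \ge |\Q|_k$, whereas the paper verifies it in vector form by direct $\min$/$\max$ algebra; both are complete and elementary.
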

\begin{proof}
We expand the definitions:
\begin{align*}
\egd{\Q}{\balpha}(\x) &= \Q_{\min(\x, |\Q| - \one_n)} \balpha^{\max(\zero_n, \x - |\Q| + \one_n)} \\
&= \P_{\min(\x, |\P| - \one_n)} \balpha^{\max(\zero_n, \min(\x, |\Q| - \one_n) - |\P| + \one_n)} \balpha^{\max(\zero_n, \x - |\Q| + \one_n)} \\
&= \P_{\min(\x, |\P| - \one_n)} \cdot \balpha^{\max(\zero_n, \x - |\P| + \one_n)}
\end{align*}
because
\begin{align*}
  &\max(\zero_n, \min(\x, |\Q| - \one_n) - |\P| + \one_n) + \max(\zero_n, \x - |\Q| + \one_n) \\
  &= \max(\zero_n, \min(\x - |\Q| + \one_n, \zero_n) + |\Q| - |\P|) + \max(\zero_n, \x - |\Q| + \one_n) \\
  &= \max(\max(\zero_n, \x - |\Q| + \one_n), \max(\zero_n, \x - |\Q| + \one_n) + \min(\x - |\Q| + \one_n, \zero_n) + |\Q| - |\P|) \\
  &= \max(\max(\zero_n, \x - |\Q| + \one_n), \x - |\P| + \one_n) \\
  &= \max(\zero_n, \x - |\P| + \one_n)
\end{align*}
since $|\P| \le |\Q|$ implies $\x - |\Q| + \one_n \le \x - |\P| + \one_n$.
\end{proof}

\EgdOrder*
\begin{proof}
Let $\egd{\P'}{\balpha}$ and $\egd{\Q'}{\bbeta}$ be expansions of sizes $|\P'| = |\Q'| = \max(|\P|, |\Q|)$.
By \cref{lem:egd-expansion}, we have $\egd{\P}{\balpha} \mle \egd{\Q}{\bbeta}$ if and only if $\egd{\P'}{\balpha} \mle \egd{\Q'}{\bbeta}$.

By the definition of expansions, $\P'_\ii \le \Q'_\ii$ is equivalent to the condition from the definition of $\egdle$:
\[ \P_{\min(\ii, |\P| - \one_n)} \balpha^{\max(\ii - |\P| + \one_n)} \le \Q_{\min(\ii, |\Q| - \one_n)} \bbeta^{\max(\ii - |\Q| + \one_n)} \]
Finally, the conditions $\P'_\ii \le \Q'_\ii$ for $\ii < |\P'| = |\Q'|$ and $\balpha \le \bbeta$ imply
\begin{align*}
  \egd{\P'}{\balpha}(\ii)
  = \P'_{\min(\ii, |\P'| - 1)} \balpha^{\max(\ii - |\P'| + \one_n)}
  \le \Q'_{\min(\ii, |\Q'| - 1)} \bbeta^{\max(\ii - |\Q'| + \one_n)}
  = \egd{\Q'}{\bbeta}(\ii)
\end{align*}
for all $\ii \in \NN^n$, as desired.
\end{proof}

\EgdMarginalize*
\begin{proof}
For a measure $\mu$ on $\NN^n$, marginalizing out the $k$-th dimension yields the measure $\mu'$ with
\[ \mu'(\x[\hat{k}]) := \sum_{x_k \in \NN} \mu(x_1, \dots, x_n) \]
where $\x[\hat{k}]$ denotes the omission of $x_k$ in $\x$.
For EGDs, we find, as desired (using $|\Q| = |\P|[\hat{k}]$):
\begin{align*}
  &\sum_{j \in \NN} \egd{\P}{\balpha}(\x[k \mapsto j]) \\
  &= \sum_{j \in \NN} \P_{\min(\x[k \mapsto j], |\P| - \one_n)} \balpha^{\max(\zero_n, \x[k \mapsto j] - |\P| + \one_n)} \\
  &= \sum_{j=0}^{\infty} \P_{\min(\x[k \mapsto j], |\P| - \one_n)} \alpha_k^{\max(0, j - |\P|_k + 1)} \balpha[\hat{k}]^{\max(\zero_{n-1}, \x[\hat k] - |\Q| + \one_{n-1})} \\
  &= \left(\sum_{j=0}^{|\P|_k-2} \P_{\min(\x[k \mapsto j], |\P| - \one_n)} + \sum_{j'=0}^\infty \P_{\min(\x[k \mapsto |\P|_k - 1], |\P| - \one_n)} \alpha_k^{j'}\right) \balpha[\hat{k}]^{\max(\zero_{n-1}, \x[\hat k] - |\Q| + \one_{n-1})} \\
  &= \left(\sum_{j=0}^{|\P|_k-2} \P_{\min(\x[k \mapsto j], |\P| - \one_n)} + \frac{\P_{\min(\x[k \mapsto |\P|_k - 1], |\P| - \one_n)}}{1 - \alpha_k}\right) \balpha[\hat{k}]^{\max(\zero_{n-1}, \x[\hat k] - |\Q| + \one_{n-1})} \\
  &= \left(\sum_{j=0}^{|\P|_k-2} (\P_{k:j})_{\min(\x[\hat k], |\Q| - \one_n)} + \frac{(\P_{k: |\P|_k - 1})_{\min(\x[\hat k], |\Q| - \one_{n-1})}}{1 - \alpha_k}\right) \balpha[\hat{k}]^{\max(\zero_{n-1}, \x[\hat k] - |\Q| + \one_{n-1})} \\
  &= \Q_{\min(\x[\hat k], |\Q| - \one_n)} \balpha[\hat{k}]^{\max(\zero_{n-1}, \x[\hat k] - |\Q| + \one_{n-1})} \\
  &= \egd{\Q}{\bbeta}(\x[\hat{k}]) \qedhere
\end{align*}
\end{proof}

\EgdMoments*
\begin{proof}
The tail asymptotics are immediate from the definition of EGDs.
For the $k$-th moment, we find:
\begin{align*}
  \ExpVal_{X \sim \egd{\P}{\alpha}}[X^k] &= \sum_{j = 0}^\infty \egd{\P}{\alpha}(j) \cdot j^k \\
  &= \sum_{j=0}^{d - 1} \P_j \cdot j^k + \sum_{j = d}^\infty \P_d \cdot \alpha^{j - d} \cdot j^k \\
  &= \sum_{j=0}^{d - 1} \P_j \cdot j^k + \frac{\P_d}{1 - \alpha}\sum_{j = 0}^\infty (1 - \alpha)\alpha^j \cdot (j + d)^k \\
  &= \sum_{j=0}^{d - 1} \P_j \cdot j^k + \frac{\P_d}{1 - \alpha}\ExpVal_{Y \sim \Geometric(1 - \alpha)}[(Y + d)^k] \\
  &= \sum_{j=0}^{d - 1} \P_j \cdot j^k + \frac{\P_d}{1 - \alpha}\ExpVal_{Y \sim \Geometric(1 - \alpha)}\left[\sum_{i=0}^k \binom{k}{i} d^{k - i} Y^i\right] \\
  &= \sum_{j=0}^{d - 1} \P_j \cdot j^k + \frac{\P_d}{1 - \alpha} \sum_{i=0}^k \binom{k}{i} d^{k - i} \ExpVal_{Y \sim \Geometric(1 - \alpha)}[Y^i] \qedhere
\end{align*}
\end{proof}

\subsubsection{Complexity of the Solution Set}

\begin{figure}
\centering
\includegraphics[width=0.32\textwidth]{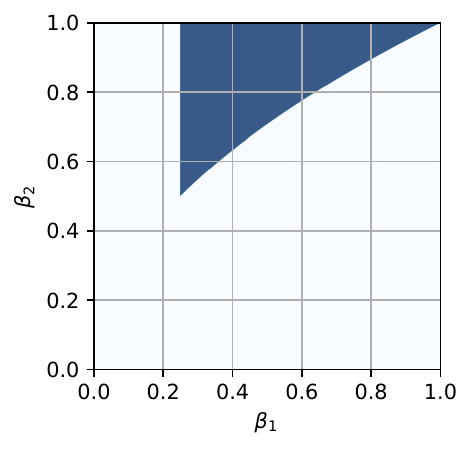}
\caption{Solution set for \cref{ex:double-geo}, projected onto $\beta_1, \beta_2$.}
\label{fig:double-feasible}
\end{figure}
One might expect that the solution set of $\bbeta$ is right-closed, i.e. that for any solution $\egd{\Q}{\bbeta}$ and any $\bgamma \ge \bbeta$, there is a solution $\egd{\Q'}{\bgamma}$, because this loosens the bound by increasing the decay rates.
Similarly, one might expect that the solution set is convex.
Both properties would be desirable because they would make the constraint solving and optimization easier.
However, the next example show that these properties do not always hold.

\begin{example}[Doubling a geometric variable]
\label{ex:double-geo}
Consider the doubling of a geometric variable with initial distribution $X_1 \sim \Geometric(\tfrac34)$ and $X_2 = 0$, in other words $\X \sim \egd{1}{(\tfrac14, 0)}$:
\[ \Whilst{X_1 > 0}{X_2 \passign 2; X_1 \massign 1 } \]
Assuming a contraction invariant $\egd{p}{\bbeta}$ with $p \in \RR \cong \RR^{[1] \times [1]}$ exists, we find:
\begin{align*}
  \semgeo{X_2 \passign 2; X_1 \massign 1}\left(\evgeo{\egd{p}{\bbeta}}{X_1 > 0}\right) &= \semgeo{X_2 \passign 2; X_1 \massign 1}\left(\egd{\begin{pmatrix} 0 \\ p \beta_1 \end{pmatrix}}{\bbeta}\right) \\
  &= \semgeo{X_1 \massign 1}\left(\egd{\begin{pmatrix} 0 & 0 & 0 \\ 0 & 0 & p \beta_1 \end{pmatrix}}{\bbeta}\right) \\
  &= \egd{\begin{pmatrix} 0 & 0 & p \beta_1 \end{pmatrix}}{\bbeta}
\end{align*}
The conditions for the contraction invariant are:
\begin{align*}
  \egd{1}{(\tfrac14, 0)} &\egdle \egd{p}{\bbeta} \\
  \egd{\begin{pmatrix} 0 & 0 & p \beta_1 \end{pmatrix}}{\bbeta} &\egdle \egd{cp}{\bbeta} = \egd{\begin{pmatrix} cp & cp\beta_2 & cp\beta_2^2 \end{pmatrix}}{\bbeta}
\end{align*}
which reduce to the following polynomial inequalities:
\begin{align*}
p \ge 1 \land \beta_1 \ge \tfrac14 \land \beta_1 < 1 \land \beta_2 \ge 0 \land \beta_2 < 1 \land cp\beta_1 \le cp\beta_2^2 \land c \ge 0 \land c < 1
\end{align*}
The constraints on $\bbeta$, after eliminating all other unknowns are equivalent to:
\begin{align*}
  \beta_1 \ge \tfrac14 \land \beta_1 < 1 \land \beta_2 > 0 \land \beta_2 < 1 \land \beta_1 < \beta_2^2
\end{align*}
The feasible values of $\beta$ are plotted in \cref{fig:double-feasible}.
Clearly the set is neither convex nor right-closed.
\end{example}

\subsubsection{Properties of the Geometric Bound Semantics}

\DecidabilityGeoBounds*
\begin{proof}
First, note that the sizes of all intermediate EGDs in each rule are determined by the input $\egd{\P}{\balpha}$, except for contraction invariants and joins, whose size needs to be fixed for this reason.
Next, we note that for $\egd{\Q}{\bbeta} := \evgeo{\egd{\P}{\balpha}}{E}$, the elements of $\Q$ and $\bbeta$ are polynomials of elements of $\P$ and $\balpha$.
In fact, only $\balpha$ occurs nonlinearly.
All this follows easily from a structural induction on $E$.

We claim that existence of $\egd{\Q}{\bbeta}$ with $\egd{\P}{\balpha} \semgeo{P} \egd{\Q}{\bbeta}$ is equivalent to a formula of the form
\[ \exists x_1,\dots,x_m \ldotp \bigwedge_{i=1}^k p_i(x_1,\dots,x_m) \bowtie_i 0 \]
where $x_1, \dots, x_m \in \nnr$, $\bowtie_i \in \{{\le}, {<}\}$ and each $p_i$ is a polynomial.
The proof is again by structural induction.

For simple statements $\Skip, \Fail, X_k \passign a, X_k \massign 1, X_k := 0$, the claim is clear because the elements of $\Q$ and $\bbeta$ are specified exactly as rational functions of elements of $\P$ and $\balpha$, so they can be encoded as
\[ \exists \Q, \bbeta \ldotp \bigwedge_{i=1}^n \beta_i = f_i(\balpha) \land \bigwedge_{\ii < |\Q|} \Q_\ii = g_\ii(\P, \balpha) \]
where $f_i$ and $g_\ii$ are rational functions.
This formula can clearly be transformed into the desired shape.
For $P = P_1; P_2$, the claim is clear by induction hypothesis.

For conditionals, we assume that the size $|\Q|$ of the join $\egd{\Q}{\bbeta}$ is fixed.
The constraints for the two branches can be expressed in the desired form by the induction hypothesis.
The constraint $(\egd{\R}{\bgamma}, \egd{\S}{\bdelta}, \egd{\Q}{\bbeta}) \in \JoinRel$ can be encoded as polynomial constraints in $\R, \S, \Q, \bgamma, \bdelta, \bbeta$, using the definition of $\JoinRel$ and expansions:
\begin{align*}
  &\bigwedge_{i=1}^n (\gamma_i \le \beta_i \land \delta_i \le \beta_i \land \beta_i < 1) \\
  &\land \bigwedge_{\ii < |\Q|} \Q_\ii = \R_{\min(\ii, |\R| - \one_n)} \bgamma^{\max(\zero_n, \ii - |\R| + \one_n)} + \S_{\min(\ii, |\S| - \one_n)} \bdelta^{\max(\zero_n, \ii - |\S| + \one_n)}
\end{align*}
which is a system of polynomial equalities.

For loop invariants, we assume that the size $|\R|$ of the contraction invariant $\egd{\R}{\bgamma}$ is fixed.
Let ${\bm s} = \max(|\P|, |\R|)$.
Then the $\egdle$-constraint $\egd{\P}{\balpha} \egdle \egd{\R}{\bgamma}$ from the semantics corresponds to the following polynomial constraints:
\[ \bigwedge_{i=1}^n \alpha_i \le \gamma_i \land \bigwedge_{\ii < {\bm s}} \P_{\min(\ii, |\P| - \one_n)} \balpha^{\max(\zero_n, \ii - |\P| + \one_n)} \le \R_{\min(\ii, |\R| - \one_n)} \bgamma^{\max(\zero_n, \ii - |\R| + \one_n)} \]
The constraint $\egd{\S}{\bdelta} \egdle \egd{c \cdot \R}{\bgamma}$ can be encoded analogously.
The remaining constraints can be encoded by the inductive hypothesis and composed into a big existential conjunction.
This proves the claim.
Note that in all the constraints, elements of initial blocks occur linearly, which is exploited in the implementation (see \cref{sec:impl-solving}).
The decidability follows from the well-known fact that the existential theory of the reals is decidable.
\end{proof}

The join relations $\JoinRel$ and $\JoinRel^*$ and the geometric bound semantics enjoy nice properties.
In particular they are monotone and multiplicative, as the following lemmas show.
As a consequence, we also show that the semantics is sound.
\begin{lemma}[Well-behavedness of the join relation]
\label{lem:join-well-behaved}
Suppose $\left(\egd{\R}{\bgamma}, \egd{\S}{\bdelta}, \egd{\Q}{\bbeta}\right) \in \JoinRel$.
Then
\begin{enumerate}
  \item if $\egd{\tilde \R}{\tilde \bgamma} \egdle \egd{\R}{\bgamma}$ and $\egd{\tilde \S}{\tilde \bdelta} \egdle \egd{\S}{\bdelta}$ then there is a $\egd{\tilde \Q}{\tilde \bbeta}$ such that $\left(\egd{\tilde \R}{\tilde \bgamma}, \egd{\tilde \S}{\tilde \bdelta}, \egd{\tilde \Q}{\tilde \bbeta}\right) \in \JoinRel$ and $\egd{\tilde \Q}{\tilde \bbeta} \egdle \egd{\Q}{\bbeta}$.
  \item for any $c > 0$, we have $\left(\egd{c \cdot \R}{\bgamma}, \egd{c \cdot \S}{\bdelta}, \egd{c \cdot \Q}{\bbeta}\right) \in \JoinRel$.
\end{enumerate}
Analogous properties hold for the strict join relation $\JoinRel^*$.
\end{lemma}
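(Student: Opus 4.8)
The plan is to unfold the definition of $\JoinRel$ in both parts and to construct the required witness EGDs explicitly; the strict-join variant $\JoinRel^*$ will come out of the same construction once one keeps track of an extra equality constraint on the decay rates. Part (2) is essentially bookkeeping: if $\egd{\R'}{\bgamma}$ and $\egd{\S'}{\bdelta}$ are equal-size expansions witnessing $\left(\egd{\R}{\bgamma}, \egd{\S}{\bdelta}, \egd{\Q}{\bbeta}\right) \in \JoinRel$ with $\Q = \R' + \S'$, then I would observe that scalar multiplication commutes with expansion, since the defining formula $\Q_{\ii} = \P_{\min(\ii, |\P| - \one_n)}\balpha^{\max(\ii - |\P| + \one_n, \zero_n)}$ is linear in the initial block. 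Hence $\egd{c\R'}{\bgamma}$ and $\egd{c\S'}{\bdelta}$ are equal-size expansions of $\egd{c\R}{\bgamma}$ and $\egd{c\S}{\bdelta}$ with $c\R' + c\S' = c\Q$, and the decay-rate conditions $\bgamma, \bdelta \le \bbeta$ (resp.\ $\bbeta = \max(\bgamma,\bdelta)$ for $\JoinRel^*$) are untouched by $c > 0$, so $\left(\egd{c\R}{\bgamma}, \egd{c\S}{\bdelta}, \egd{c\Q}{\bbeta}\right) \in \JoinRel$ (resp.\ $\JoinRel^*$).

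For part (1), I would take $\tilde\bbeta := \bbeta$ (resp.\ $\tilde\bbeta := \max(\tilde\bgamma, \tilde\bdelta)$ for $\JoinRel^*$), choose the common expansion size ${\bm s} := \max(|\Q|, |\tilde\R|, |\tilde\S|)$ componentwise, let $\tilde\R'$ and $\tilde\S'$ be the expansions of $\egd{\tilde\R}{\tilde\bgamma}$ and $\egd{\tilde\S}{\tilde\bdelta}$ to size ${\bm s}$, and set $\tilde\Q := \tilde\R' + \tilde\S'$. Membership $\left(\egd{\tilde\R}{\tilde\bgamma}, \egd{\tilde\S}{\tilde\bdelta}, \egd{\tilde\Q}{\tilde\bbeta}\right) \in \JoinRel$ then needs only $\tilde\bgamma, \tilde\bdelta \le \tilde\bbeta$, which follows from $\tilde\bgamma \le \bgamma \le \bbeta$ and $\tilde\bdelta \le \bdelta \le \bbeta$ (the $\egdle$ hypotheses contain $\tilde\bgamma \le \bgamma$ and $\tilde\bdelta \le \bdelta$); for $\JoinRel^*$ it is immediate from the choice of $\tilde\bbeta$, and moreover $\tilde\bbeta = \max(\tilde\bgamma, \tilde\bdelta) \le \max(\bgamma, \bdelta) = \bbeta$.

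It then remains to verify $\egd{\tilde\Q}{\tilde\bbeta} \egdle \egd{\Q}{\bbeta}$, which is the heart of the argument. Since $|\tilde\Q| = {\bm s} \ge |\Q|$, it suffices (besides $\tilde\bbeta \le \bbeta$, already shown) to check $\tilde\Q_{\ii} \le \egd{\Q}{\bbeta}(\ii)$ for every $\ii < {\bm s}$. I would chain three estimates: by \cref{lem:egd-expansion}, $\tilde\Q_{\ii} = \egd{\tilde\R}{\tilde\bgamma}(\ii) + \egd{\tilde\S}{\tilde\bdelta}(\ii)$; by the $\egdle$ hypotheses together with the fact (proved above) that $\egdle$ implies $\mle$, this is at most $\egd{\R}{\bgamma}(\ii) + \egd{\S}{\bdelta}(\ii) = \egd{\R'}{\bgamma}(\ii) + \egd{\S'}{\bdelta}(\ii)$ (again by \cref{lem:egd-expansion}); and finally, because $\bgamma, \bdelta \le \bbeta$ while the exponent vector $\max(\ii - |\Q| + \one_n, \zero_n)$ is nonnegative, $\egd{\R'}{\bgamma}(\ii) + \egd{\S'}{\bdelta}(\ii) \le \bigl(\R'_{\min(\ii, |\Q| - \one_n)} + \S'_{\min(\ii, |\Q| - \one_n)}\bigr)\,\bbeta^{\max(\ii - |\Q| + \one_n, \zero_n)} = \Q_{\min(\ii, |\Q| - \one_n)}\,\bbeta^{\max(\ii - |\Q| + \one_n, \zero_n)} = \egd{\Q}{\bbeta}(\ii)$, using $\Q = \R' + \S'$. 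I expect this last step to be the main obstacle: it requires carefully manipulating the multi-index geometric factors when $\ii$ exceeds $|\Q|$ in some coordinates, i.e.\ confirming that enlarging the base from $\bgamma$ (resp.\ $\bdelta$) to the coordinatewise-larger $\bbeta$ only increases each term, so that the sum is dominated by the single EGD term with base $\bbeta$. Once the definitions of $\egdle$, expansion and $\JoinRel$ are unfolded, everything else is routine, and the strict-join case requires no separate work beyond the decay-rate remarks already made.
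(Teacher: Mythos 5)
Your proof is correct and takes essentially the same approach as the paper, which simply declares the lemma ``immediate from the definition of the join relations''; your argument is that unfolding carried out in full detail. The witness constructions (rescaled expansions for part (2); the re-expanded sum with $\tilde \bbeta = \bbeta$, resp.\ $\max(\tilde\bgamma,\tilde\bdelta)$, for part (1)) and the final base-enlargement estimate using $\bgamma, \bdelta \le \bbeta$ with nonnegative initial-block entries are all sound.
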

\begin{proof}
The proof is immediate from the definition of the join relations.
\end{proof}

\begin{lemma}[Well-behavedness of the geometric bound semantics]
\label{lem:geo-sem-well-behaved}
Suppose $\egd{\P}{\balpha} \semgeo{P} \egd{\Q}{\bbeta}$.
Then
\begin{enumerate}
  \item if $\egd{\tilde \P}{\tilde \balpha} \egdle \egd{\P}{\balpha}$ then there is an $\egd{\tilde \Q}{\tilde \bbeta}$ such that $\egd{\tilde \P}{\tilde \balpha} \semgeo{P} \egd{\tilde \Q}{\tilde \bbeta}$ and $\egd{\tilde \Q}{\tilde \bbeta} \egdle \egd{\Q}{\bbeta}$.
  \item for any $c > 0$, we have $\egd{c \cdot \P}{\balpha} \semgeo{P} \egd{c \cdot \Q}{\bbeta}$.
\end{enumerate}
\end{lemma}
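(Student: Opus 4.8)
The plan is to prove both parts by a single structural induction on the program $P$, carried out simultaneously because the loop case of one part will reuse witnesses supplied by the other. Before doing so I would establish the analogous facts for the event semantics $\evgeo{\cdot}{E}$ of \cref{fig:egd-event-semantics} by a subsidiary induction on the structure of $E$: (i) \emph{homogeneity} --- $\evgeo{\egd{\lambda\P}{\balpha}}{E}$ is the representation obtained from $\evgeo{\egd{\P}{\balpha}}{E}$ by scaling its initial block by $\lambda>0$ (keeping the same decay rates); and (ii) \emph{monotonicity} --- if $\egd{\tilde\P}{\tilde\balpha}\egdle\egd{\P}{\balpha}$ then $\evgeo{\egd{\tilde\P}{\tilde\balpha}}{E}\egdle\evgeo{\egd{\P}{\balpha}}{E}$. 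Homogeneity is routine: for $X_k=a$ and $\Flip(\rho)$ the output's initial block is a nonnegative linear image of $\P$, for $E_1\land E_2$ one composes, and for $\lnot E_1$ one uses that expansion is linear, so the complement block for $\lambda\P$ is $\lambda$ times that for $\P$. Monotonicity needs more care and I return to it at the end.

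With the event sublemma available, the induction on $P$ runs over \cref{fig:egd-statement-semantics}. For the atomic statements $\Skip$, $\Fail$, $X_k:=0$, $X_k\passign a$, $X_k\massign 1$ the right-hand side is the unique EGD obtained by applying a nonnegative linear operator to $\P$ while setting $\bbeta$ to a monotone function of $\balpha$; both claims are then immediate after recalling that $\egd{\P}{\balpha}\egdle\egd{\Q}{\bbeta}$ amounts to the entrywise order on a common expansion together with $\balpha\le\bbeta$. The case $P_1;P_2$ is relational composition, so both parts follow by applying the induction hypothesis first to $P_1$ and then to $P_2$. For $\Ite{E}{P_1}{P_2}$, part (1) uses event monotonicity to get $\evgeo{\egd{\tilde\P}{\tilde\balpha}}{E}\egdle\evgeo{\egd{\P}{\balpha}}{E}$ and likewise for $\lnot E$, then the induction hypothesis on each branch to obtain $\egdle$-smaller branch bounds, and finally \cref{lem:join-well-behaved}(1) to assemble a join $\egd{\tilde\Q}{\tilde\bbeta}\egdle\egd{\Q}{\bbeta}$; part (2) is the same with event homogeneity, the induction hypothesis, and \cref{lem:join-well-behaved}(2).

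The loop case is where the two parts feed each other, and it is the cleanest. Suppose $\egd{\P}{\balpha}\semgeo{\Whilst{E}{B}}\egd{\Q}{\bbeta}$ with witnesses $c,\R,\S,\bgamma,\bdelta$ (a $c$-contraction invariant, cf.\ \cref{sec:contraction-invariants}). For part (1), if $\egd{\tilde\P}{\tilde\balpha}\egdle\egd{\P}{\balpha}$ then by transitivity of $\egdle$ we still have $\egd{\tilde\P}{\tilde\balpha}\egdle\egd{\R}{\bgamma}$, while every other conjunct of the loop rule mentions only $c,\R,\S,\bgamma,\bdelta$ and is unchanged; so the same witnesses give $\egd{\tilde\P}{\tilde\balpha}\semgeo{\Whilst{E}{B}}\egd{\Q}{\bbeta}$, and $\egd{\Q}{\bbeta}\egdle\egd{\Q}{\bbeta}$ trivially. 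For part (2), take the scaled witnesses $c,\lambda\R,\lambda\S,\bgamma,\bdelta$: scaling preserves $\egdle$ (so $\egd{\lambda\P}{\balpha}\egdle\egd{\lambda\R}{\bgamma}$ and $\egd{\lambda\S}{\bdelta}\egdle\egd{c\lambda\R}{\bgamma}$), event homogeneity gives $\evgeo{\egd{\lambda\R}{\bgamma}}{E}=\lambda\cdot\evgeo{\egd{\R}{\bgamma}}{E}$ and likewise for $\lnot E$, and the induction hypothesis on the body $B$ (with scaling factor $\lambda$) gives $\evgeo{\egd{\lambda\R}{\bgamma}}{E}\semgeo{B}\egd{\lambda\S}{\bdelta}$; the output $\evgeo{\egd{\lambda\R/(1-c)}{\bgamma}}{\lnot E}=\egd{\lambda\Q}{\bbeta}$ again by homogeneity. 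Hence $\egd{\lambda\P}{\balpha}\semgeo{\Whilst{E}{B}}\egd{\lambda\Q}{\bbeta}$.

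I expect the main obstacle to be the monotonicity clause (ii) of the event sublemma in the complement case $\lnot E_1$, where the output initial block is $\P^{\uparrow}-\R$ with $\egd{\R}{\bgamma}=\evgeo{\egd{\P}{\balpha}}{E_1}$ and $\P^{\uparrow}$ the expansion of $\P$ to the size of $\R$: subtraction does not preserve the entrywise order, so one cannot conclude $\tilde\P^{\uparrow}-\tilde\R\le\P^{\uparrow}-\R$ merely from $\tilde\P^{\uparrow}\le\P^{\uparrow}$ and $\tilde\R\le\R$. The fix is to observe that, for fixed $\balpha$, the map $\P\mapsto\evgeo{\egd{\P}{\balpha}}{E}$ is given by a \emph{nonnegative} linear operator on a sufficiently large common block, together with output decay rates that are a monotone function of $\balpha$ alone; nonnegativity in the $\lnot E_1$ case (where the operator is the expansion minus the operator for $E_1$) follows because inside the block the entries of $\egd{\P}{\balpha}|_{E_1}$ are pointwise dominated by those of $\egd{\P}{\balpha}$, and on the boundary the decay-rate bookkeeping handles the tails --- exactly the kind of fact the EGD order lemma and the soundness of the event rules (\cref{thm:soundness-geo-bounds}) supply. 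Once this linear-algebraic normal form for $\evgeo{\cdot}{E}$ is in place, both event properties, and hence the whole lemma, reduce to checking that nonnegative matrices and $\tfrac{1}{1-\alpha_k}$-type factors are monotone, which is routine bookkeeping over EGD expansions.
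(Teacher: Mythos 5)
Your proposal is correct and takes the same route as the paper, whose entire proof is the single sentence ``straightforward induction on the structure of $P$''; your structural induction, with the auxiliary homogeneity/monotonicity sublemma for the event semantics and the use of \cref{lem:join-well-behaved} in the conditional case, is exactly the intended elaboration. You also correctly isolate the one genuinely non-routine point (monotonicity of $\evgeo{\cdot}{\lnot E_1}$ under $\egdle$, resolved by viewing every event restriction as a pointwise multiplier in $[0,1]$ depending only on $E$ and the index), so nothing further is needed.
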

\begin{proof}
The proof is a straightforward induction on the structure of $P$.
\end{proof}

\SoundnessGeoBounds*
\begin{proof}
We first prove soundness of the event semantics, that is $\egd{\Q}{\bbeta} := \evgeo{\egd{\P}{\balpha}}{E} = \egd{\P}{\alpha}|_E$.
First, we note that $|\Q| \ge |\P|$ because no rule reduces the size of the initial block and $\beta_i \in \{0, \alpha_i\}$ because rules can only change the decay rates by setting them to 0.
We strengthen the inductive hypothesis to include $\egd{\Q}{\bbeta} = \egd{\Q}{\balpha}$.

For $E = (X_k = a)$, we note that $\mu|_E(\x) = \mu(\x) \cdot [x_k = a]$.
Let $\P'$ be the expansion of $\P$ to size $\max(|\P|_k, a + 2)$ in dimension $k$.
In particular, $|\P'| = |\Q|$.
Then we find:
\begin{align*}
  \egd{\P}{\balpha}|_E(\x)
  &= \egd{\P'}{\balpha}|_E(\x)\\
  &= \P'_{\min(\x, |\P'| - \one_n)} \cdot \balpha^{\max(\zero_n, \x - |\P'| + \one_n)} [x_k = a] \\
  &= \left([x_k = a] \P'_{\x[k\mapsto \min(x_k, |\P'|_k - 1)]}\right) \cdot \balpha[k\mapsto 0]^{\max(\zero_n, \x - |\P'| + \one_n)} \\
  &= \Q_{\min(\x, |\Q| - \one_n)} \balpha^{\max(\zero_n, \x - |\Q| + \one_n)} \\
  &= \egd{\Q}{\bbeta}(\x)
\end{align*}
because $x_k = a$ implies $x_k < |\P'|_k$ and $\Q_\ii$ equals $[i_k = a] \P'_\ii = [i_k = a] \P_{\min(\ii, |\P| - \one_n)} \cdot \alpha_k^{\max(0, i_k - |\P|_k + 1)}$ by definition.
Furthermore, $\egd{\Q}{\bbeta} = \egd{\Q}{\balpha}$ follows from $\Q_{k:|\Q|_k - 1} = \zero$.

For $E = \lnot E_1$, let $\mu = \egd{\P}{\balpha}$ and $\egd{\R}{\bgamma} := \evgeo{\egd{\P}{\balpha}}{E_1} = \mu|_{E_1}$ by induction hypothesis.
By the inductive hypothesis, we have $\egd{\R}{\bgamma} = \egd{\R}{\balpha}$.
Let $\P'$ be the expansion of $\P$ to size $|\Q| = |\R|$.
We find:
\begin{align*}
\mu|_{\lnot E_1}(\x) &= \mu - \mu|_{E_1}(\x) \\
&= \egd{\P'}{\balpha}(\x) - \egd{\R}{\balpha}(\x) \\
&= \P'_{\min(\x, |\P'| - \one_n)} \balpha^{\max(\zero_n, \x - |\P'| + \one_n)} - \R_{\min(\x, |\R| - 1)} \cdot \balpha^{\max(\zero_n, \x - |\R| + \one_n)} \\
&= \left(\P'_{\min(\x, |\Q| - \one_n)} - \R_{\min(\x, |\Q| - \one_n)}\right) \cdot \balpha^{\max(\zero_n, \x - |\Q| + \one_n)} \\
&= \Q_{\min(\x, |\Q| - \one_n)} \balpha^{\max(\zero_n, \x - |\Q| + \one_n)} \\
&= \Q_{\min(\x, |\Q| - \one_n)} \bbeta^{\max(\zero_n, \x - |\Q| + \one_n)} \\
&= \egd{\Q}{\bbeta}(\x)
\end{align*}
since $\Q_\ii := \P'_\ii - \R_\ii = \P_{\min(\ii, |\P| - \one_n)} \cdot \balpha^{\max(\zero_n, \ii - |\P| + \one_n)} - \R_\ii$ by the definition of the expansion.

For $E = \Flip(\rho)$, the claim is clear by linearity.
For $E = E_1 \land E_2$, the correctness follows immediately from $\mu|_{E_1 \land E_2} = \mu|_{E_1}|_{E_2}$.

For the statement semantics, we prove $\egd{\P}{\balpha} \semgeo{P} \egd{\Q}{\bbeta}$ implies $\sem{P}(\egd{\P}{\bbeta}) \mle \egd{\Q}{\bbeta}$.
The cases $\Skip$, $P_1;P_2$, and $\Fail$ are obvious.
For $\Skip$ and $\Fail$, even equality holds.

The case $X_k := 0$ follows from \cref{lem:egd-marginalize}, except that the variable $X_k$ is not removed, but set to 0.
We have $\sem{P}(\mu)(\x) = \sum_{n \in \NN} \mu(\x[k \mapsto n]) \cdot [x_k = 0]$.
Hence we find
\begin{align*}
&\sem{X_k := 0}(\mu)(\x) \\
&= \sum_{n \in \NN} \mu(\x[k \mapsto n]) \cdot [x_k = 0] \\
&= \left(\frac{\P_{\min(\x[k \mapsto |\P|_k - 1], |\P| - \one_n)}}{1 - \alpha_k} + \sum_{j=0}^{|\P|_k - 2} \P_{\min(\x[k \mapsto j], |\P| - \one_n)}\right) \cdot [x_k = 0] \cdot \balpha^{\max(\zero_n, \x - |\P| + \one_n)} \\
&= [x_k = 0] \left(\frac{\P_{\min(\x[k \mapsto |\P|_k - 1], |\P| - \one_n)}}{1 - \alpha_k} + \sum_{j=0}^{|\P|_k - 2} \P_{\min(\x[k \mapsto j], |\P| - \one_n)}\right) \cdot \balpha[k\mapsto 0]^{\max(\zero_n, \x - |\P|[k \mapsto 2] + \one_n)} \\
&= \Q_{\min(\x, |\Q| - \one_n)} \cdot \bbeta^{\max(\zero_n, \x - |\Q| + \one_n)} \\
\end{align*}
by the definition of $\Q$.

The case $X_k \passign a$ is easier because $\sem{P}(\mu)(\x) = [x_k \ge a] \mu(\x[k \mapsto x_k - a])$.
We find:
\begin{align*}
\sem{X_k \passign a}(\egd{\P}{\balpha})(\x) &= [x_k \ge a] \egd{\P}{\balpha}(\x[k \mapsto x_k - a]) \\
&= [x_k \ge a] \P_{\min(\x[k \mapsto x_k - a], |\P| - \one_n)} \balpha^{\max(\zero_n, \x[k \mapsto x_k - a] - |\P| + \one_n)} \\
&= [x_k \ge a] \P_{\min(\x[k \mapsto x_k - a], |\P| - \one_n)} \balpha^{\max(\zero_n, \x - |\P|[k \mapsto \P_k + a] + \one_n)} \\
&= \Q_{\min(\x, |\Q| - \one_n)} \balpha^{\max(\zero_n, \x - |\Q| + \one_n)} \\
&= \egd{\Q}{\bbeta}(\x)
\end{align*}

The case $X_k \massign 1$ is similar: $\sem{P}(\mu)(\x) = \mu(\x[k \mapsto x_k + 1]) + [x_k = 0]\mu(\x)$.
First we consider the case $|\P|_k \ge 3$.
Then $|\Q|_k = \max(|\P|_k - 1, 2) = |\P|_k - 1$ and we find:
\begin{align*}
&\sem{X_k \massign 1}(\egd{\P}{\balpha})(\x) \\
&= [x_k = 0]\egd{\P}{\balpha}(\x) + \egd{\P}{\balpha}(\x[k \mapsto x_k + 1]) \\
&= [x_k = 0] \P_{\min(\x, |\P| - \one_n)} \balpha^{\max(\zero_n, \x - |\P| + \one_n)} + \P_{\min(\x[k \mapsto x_k + 1], |\P| - \one_n)} \balpha^{\max(\zero_n, \x[k \mapsto x_k + 1] - |\P| + \one_n)} \\
&= \left([x_k = 0] \P_{\min(\x[k \mapsto 0], |\P| - \one_n)} + \P_{\min(\x[k \mapsto x_k + 1], |\P| - \one_n)}\right) \balpha^{\max(\zero_n, \x - |\Q| + \one_n)} \\
&= \Q_{\min(\x, |\Q| - \one_n)} \balpha^{\max(\zero_n, \x - |\Q| + \one_n)} \\
&= \egd{\Q}{\bbeta}(\x)
\end{align*}
For $|\P|_k \le 2$, we first extend $\egd{\P}{\balpha}$ to $\egd{\P'}{\balpha}$ with $|\P'|_k = 3$.
By the above argument, we find $\sem{X_k \massign 1}(\egd{\P'}{\balpha}) = \egd{\Q}{\bbeta}$ with
\begin{align*}
\Q_{\ii} &= [i_k = 0] \P'_{\ii[k \mapsto 0]} + \P'_{\ii[k \mapsto i_k + 1]} \\
&= [i_k = 0] \P_{\ii[k \mapsto 0]} + \P_{\ii[k \mapsto i_k + 1]} \alpha_k^{\max(0, i_k + 1 - |\P|_k + 1)} \\
&= [i_k = 0] \P_{\ii[k \mapsto 0]} + \P_{\ii[k \mapsto i_k + 1]} \alpha_k^{\max(0, 2 + i_k - |\P|_k)}
\end{align*}
for all $\ii < |\Q|$.
It turns out that for $|\P|_k \ge 3$, we have  $i_k \le |\Q|_k - 1 \le |\P|_k - 2$ and thus $\max(0, 2 + i_k - |\P|_k) = 0$.
Hence the above equation for $\Q_{\ii}$ holds generally, as desired.
Note that we have to ensure $|\Q|_k \ge 2$ due to the special case for $i_k = 0$, which necessitates the case split on $|\P|_k \ge 3$.

For the case $\Ite{E}{P_1}{P_2}$, we have $\sem{P}(\mu)(\x) = \sem{P_1}(\mu|_E)(\x) + \sem{P_2}(\mu|_{\lnot E})(\x)$.
Suppose $\evgeo{\egd{\P}{\balpha}}{E} \semgeo{P_1} \egd{\R}{\bgamma}$ and $\evgeo{\egd{\P}{\balpha}}{\lnot E} \semgeo{P_2} \egd{\S}{\bdelta}$ and $\egd{\Q}{\bbeta}$ is a join of $\egd{\R}{\bgamma}$ and $\egd{\S}{\bdelta}$.
By the definition of the join relation, we have $\egd{\R}{\bgamma} + \egd{\S}{\bdelta} \mle \egd{\Q}{\bbeta}$.
Then we find with the inductive hypothesis:
\begin{align*}
\sem{\Ite{E}{P_1}{P_2}}(\egd{\P}{\balpha})
&= \sem{P_1}\left(\evgeo{\egd{\P}{\balpha}}{E}\right) + \sem{P_2}\left(\evgeo{\egd{\P}{\balpha}}{\lnot E}\right) \\
&\mle \egd{\R}{\bgamma} + \egd{\S}{\bdelta} \\
&\mle \egd{\Q}{\bbeta}
\end{align*}

Finally, for loops $\Whilst{E}{B}$, suppose that there is a $c$-contraction invariant $\egd{\R}{\bgamma}$ such that $\egd{\P}{\balpha} \egdle \egd{\R}{\bgamma}$ and $\evgeo{\egd{\R}{\bgamma}}{E} \semgeo{B} \egd{\S}{\bdelta}$ with $\egd{\S}{\bdelta} \egdle \egd{c \cdot \R}{\bgamma}$.
By the induction hypothesis, we know $\sem{B}(\egd{\R}{\bgamma}|_E) \mle \egd{\S}{\bdelta}$.
The fixed point equation of $\Whilst{E}{B}$ yields by monotonicity of the standard semantics:
\begin{align*}
\sem{\Whilst{E}{B}}\left(\egd{\R}{\bgamma}\right) &= \egd{\R}{\bgamma}|_E + \sem{\Whilst{E}{B}}\left(\sem{B}\left(\egd{\R}{\bgamma}|_E\right)\right) \\
&\mle \evgeo{\egd{\R}{\bgamma}}{\lnot E} + \sem{\Whilst{E}{B}}\left(\egd{\S}{\bdelta}\right) \\
&\mle \evgeo{\egd{\R}{\bgamma}}{\lnot E} + \sem{\Whilst{E}{B}}\left(c \cdot \egd{\R}{\bgamma}\right) \\
&= \evgeo{\egd{\R}{\bgamma}}{\lnot E} + c \cdot \sem{\Whilst{E}{B}}\left(\egd{\R}{\bgamma}\right)
\end{align*}
which implies that $\sem{\Whilst{E}{B}}\left(\egd{\R}{\bgamma}\right) \mle \frac{1}{1-c}\evgeo{\egd{\R}{\bgamma}}{\lnot E}$.
By monotonicity, we also get $\sem{\Whilst{E}{B}}\left(\egd{\P}{\balpha}\right) \mle \sem{\Whilst{E}{B}}\left(\egd{\R}{\bgamma}\right) \mle \frac{1}{1-c}\evgeo{\egd{\R}{\bgamma}}{\lnot E} = \egd{\Q}{\bbeta}$.

For bounds on the normalized distribution, note that the normalizing constant equals
\begin{align*}
  1 - \esem{P}(\mu)(\lightning) &= \esem{P}(\mu)(\estates) - \esem{P}(\mu)(\lightning) = \esem{P}(\mu)(\NN^n) = \sem{P}(\mu)(\NN^n)
\end{align*}
by almost sure termination, the fact that $\mu$ is a probability distribution, and the definition of $\sem{P}$.
The lower bound on $\sem{P}(\mu)(\NN^n)$ follows from \cref{thm:soundness-lower-bounds} and the upper bound $\egd{\Q}{\bbeta}(\NN^n)$ from the above proof.
As a consequence, we obtain the desired bounds on the normalized distribution.
\end{proof}

\PrecisionLoopFree*
\begin{proof}
The proof is by induction on the structure of $P$.
We strengthen the inductive hypothesis to state that $\egd{\Q}{\bbeta} = \egd{\Q}{\balpha}$ and each $\beta_i \in \{0, \alpha_i\}$.
In the soundness proof, it was shown that the restriction $\evgeo{\egd{\P}{\balpha}}{E}$ is correct and that the semantics of $\Skip$, $X_k := 0$, $X_k \passign c$, $X_k \massign 1$, and $\Fail$ are exact.
Only for $\Fail$ and $X_k := 0$ can it happen that $\bbeta \ne \balpha$ (because some $\beta_k$'s are set to zero).
In both cases, $\egd{\Q}{\balpha} = \egd{\Q}{\bbeta}$ because $\Q_{k:|\Q|_k-1}$ for all $k$ where $\beta_k = 0 < \alpha_k$.

For $P = P_1; P_2$, the inductive hypothesis ensures that $\semgeo{P_1}$ and $\semgeo{P_2}$ are precise and thus $\semgeo{P}$ is precise as well.
The stronger claim $\beta_i \in \{0, \alpha_i\}$ also clearly holds for them because they can only ever set a component of $\balpha$ to zero.

The only construct that is generally not precise is the conditional $\Ite{E}{P_1}{P_2}$.
Suppose that $\evgeo{\egd{\P}{\balpha}}{E} \semgeo{P_1} \egd{\R}{\bgamma}$ and $\evgeo{\egd{\P}{\balpha}}{\lnot E} \semgeo{P_2} \egd{\S}{\bdelta}$, which are precise by inductive hypothesis and $\gamma_i, \delta_i \in \{0, \alpha_i\}$ for all $i$.

We claim that $\left(\egd{\R}{\bgamma}, \egd{\S}{\bdelta}, \egd{\Q}{\bbeta}\right) \in \JoinRel^*$ implies $\egd{\R}{\bgamma} + \egd{\S}{\bdelta} = \egd{\Q}{\bbeta}$.
By the definition of the strict join relation, $\bbeta = \max(\bgamma, \bdelta) \le \balpha$.
Thus each $\gamma_i, \delta_i \in \{0, \beta_i\} \subseteq \{0, \alpha_i\}$.
If either of them is zero, then our strengthened induction hypothesis ensures that $\egd{\R}{\bgamma} = \egd{\R}{\bbeta}$ and $\egd{\S}{\bdelta} = \egd{\S}{\bbeta}$ and similarly for any expansions $\R', \S'$ of them.
Then by the definition of the strict join relation, $\Q = \R' + \S'$ for some expansions $\R', \S'$.
As a consequence:
\begin{align*}
  \egd{\Q}{\bbeta} &= \egd{\R'}{\bbeta} + \egd{\S'}{\bbeta} \\
  &= \egd{\R'}{\bgamma} + \egd{\S'}{\bdelta} \\
  &= \egd{\R}{\bgamma} + \egd{\S}{\bdelta} \\
  &= \sem{P_1}\left(\evgeo{\egd{\P}{\balpha}}{E}\right) + \sem{P_2}\left(\evgeo{\egd{\P}{\balpha}}{\lnot E}\right) \\
  &= \sem{\Ite{E}{P_1}{P_2}}(\egd{\P}{\balpha})
\end{align*}
Furthermore $\egd{\R}{\bgamma} = \egd{\R}{\balpha}$ and $\egd{\S}{\bdelta} = \egd{\S}{\balpha}$ by the inductive hypothesis.
As a consequence, $\egd{\Q}{\balpha} = \egd{\R'}{\balpha} + \egd{\S'}{\balpha} = \egd{\R}{\bgamma} + \egd{\S}{\bdelta} = \egd{\Q}{\bbeta}$.
\end{proof}

\SufficientConditions*
\begin{proof}
Since the loop body is loop-free, we have ${\semgeo{B}} = \sem{B}$ \cref{thm:precision-loop-free} (if the strict join relation is used in the semantics of $\Ite{-}{-}{-}$).
Conditions (1) and (2) ensure that the distribution of the change is the same in every loop iteration.
(The technical condition (2) is needed to deal with the fact that decrements are clamped at zero.)
We will show that a contraction invariant of the form $\egd{\R}{\bgamma}$ exists, with $\R \in \RR$ viewed as a $(1, \dots, 1)$-tensor.

We will see that $\egd{\R}{\bgamma}$ with $\bgamma = \one_n$ is ``almost'' a solution to the constraints of a $c$-contraction invariant.
``Almost'' in the sense that we relax some $<$-constraints to $\le$-constraints, for example $\gamma_i < 1$ and $c < 1$.
Via a gradient argument, we can show that taking an $\epsilon$-small step from $\bgamma = \one_n$ in the direction of $-\blambda = (-\lambda_1, \dots, -\lambda_n)$, i.e. the negated coefficients from the linear combination, improves all of the $\le$-constraints.
The full proof works as follows.

Let $b_i$ be the maximum number of increments of $X_i$ in any program path through $B$.
We claim that there is a finite subprobability distribution $D$ on $\Delta := \bigtimes_{i=1}^n \{-a_i, \dots, b_i\}$ such that $\sem{B}(\Dirac(\x)) = \sum_{\bdelta \in \Delta} D(\bdelta) \cdot \Dirac(\x + \bdelta)$ for all $\x$ with $x_i \ge a_i$.
The latter restriction is necessary because for $X_i = 0$, the decrement $X_i \massign 1$ has no effect.
Since we assume that $E$ guarantees that $X_i \ge a_i$, we find $\sem{B}(\Dirac(\x)|_E)(\w) \le [\w - \x \in \Delta] \cdot D(\w - \x)$ for all $\w \in \NN^n$.
By a weighted sum of all $\x$, we can construct any measure $\mu$ on $\NN^n$ and obtain
\begin{align*}
  \sem{B}(\mu|_E)(\w) &\le \sum_{\x \in \NN^n} \mu|_E(\x) \cdot [\w - \x \in \Delta] \cdot D(\w - \x) \\
  &= \sum_{\bdelta \in \NN^n} \mu|_E(\w - \bdelta) \cdot [\bdelta \in \Delta] \cdot D(\bdelta) \\
  &= \sum_{\bdelta \in \Delta} D(\bdelta) \cdot \mu|_E(\w - \bdelta)
\end{align*}

Let's see what one loop iteration on $\mu = \egd{\R}{\bgamma}$ does, where $\R \in \RR$ is viewed as a $(1, \dots, 1)$-tensor.
The restriction $\mu|_E$ is represented by $\egd{\R'}{\bgamma}$ with $\R'_{\x} \le [x_1 \ge a_1] \cdots [x_n \ge a_n] \cdot \R \bgamma^{\x}$ since $E$ implies $\x \ge {\bm a}$.
The semantics of $B$ on $\mu|_E$ is then $\egd{\S}{\bgamma'} := {\semgeo{B}}\left(\evgeo{\egd{\R}{\bgamma}}{E}\right)$.
By \cref{thm:precision-loop-free}, $\egd{\S}{\bgamma'}$ is precise and we find for $\x < |\S|$:
\begin{align*}
  \S_{\x} &:= \sum_{\bdelta \in \Delta} D(\bdelta) \cdot \egd{\R'}{\bgamma}(\x - \bdelta) \\
  &\le \sum_{\bdelta \in \Delta} D(\bdelta) \cdot [\x - \bdelta \ge {\bm a}] \cdot \R \cdot \bgamma^{\x - \bdelta} \\
  &\le \R \sum_{\bdelta \in \Delta} D(\bdelta) \cdot \bgamma^{\x - \bdelta}
\end{align*}
By the invariant of the proof of \cref{thm:precision-loop-free}, we have $\bgamma' \le \bgamma$ and $\egd{\S}{\bgamma'} = \egd{\S}{\bgamma}$.
The constraint $\egd{\S}{\bgamma'} \egdle c \cdot \egd{\R}{\bgamma}$ is thus implied by the following inequality:
\begin{align*}
  \R \sum_{\bdelta \in \Delta} D(\bdelta) \cdot \bgamma^{\x - \bdelta} &\le c \cdot \R \cdot \bgamma^{\x} \\
\iff L(\bgamma) := \sum_{\bdelta \in \Delta} D(\bdelta) \cdot \bgamma^{-\bdelta} &\le c < 1
\end{align*}

The left-hand side is $\le 1$ at the point $\bgamma = \one_n$ because $D$ is a subprobability distribution.
We will show that in any neighborhood of $\one_n$, there is a $\bgamma < \one_n$ such that the LHS is $L(\bgamma) < 1$.
Then choosing $c := L(\bgamma)$ ensures $\egd{\S}{\bgamma'} \egdle c \cdot \egd{\R}{\bgamma}$.

To show that such a point $\bgamma < \one_n$ exists, we exhibit a direction in which $L(\bgamma)$ decreases, starting at $\bgamma = \one_n$.
For this it is enough to find $\blambda \in \RR_{\ge 0}^n$ such that $\blambda^\top \nabla L(\one_n) > 0$.
We compute the derivatives:
\begin{align*}
  \frac{\partial}{\partial \gamma_k} L(\bgamma)\big|_{\bgamma = \one_n} &= -\sum_{\bdelta \in \Delta} D(\bdelta) \cdot \delta_k \cdot \gamma_1^{-\delta_1} \cdots \gamma_k^{-\delta_k - 1} \cdots \gamma_n^{-\delta_n} \bigg|_{\bgamma = \one_n} \\
  &= -\sum_{\bdelta \in \Delta} D(\bdelta) \cdot \delta_k \\
  &= -\sum_{\X \in \NN^n} [\X - \x \in \Delta] \cdot D(\X - \x) \cdot (X_k - x_k) \\
  &= -\sum_{\X \in \NN^n} \sem{B}(\Dirac(\x)|_E)(\X) \cdot (X_k - x_k) \\
  &= -\ExpVal_{\X \sim \sem{B}(\Dirac(\x)|_E)}[X_k - x_k]
\end{align*}
where we picked $\x \ge {\bm a}$.
In other words, this is the negative expected change of $X_k$ after running one loop iteration on $\x \ge {\bm a}$.
Therefore, we have
\begin{align*}
  \blambda^\top \nabla L(\one_n) &\ge -\sum_{k=1}^n \lambda_k \ExpVal_{\X \sim \sem{B}(\Dirac(\x)|_E)}[X_k - x_k] \\
  &= -\ExpVal_{\X \sim \sem{B}(\Dirac(\x)|_E)}\left[\sum_{k=1}^n \lambda_k (X_k - x_k)\right] \\
  &> 0
\end{align*}
Without loss of generality, we can assume that $\lambda_k > 0$ for all $k$ because increasing each $\lambda_k$ by a small enough amount will not invalidate the above inequality since $X_k - x_k$ is bounded.
This means that there is an $\epsilon > 0$ such that for all $t \in (0, \epsilon)$, we have $L(\one_n - t \blambda) < 1$.
Then all the constraints $L(\bgamma) < 1$ are satisfied at $\bgamma = 1 - t \blambda < \one_n$ for all $t \in (0, \epsilon)$.
Having found $\bgamma$, we can choose $c := L(\bgamma)$ and all the constraints arising from $\egd{\S}{\bgamma'} \egdle \egd{c \cdot \R}{\bgamma}$ are satisfied.

The only remaining constraint is $\egd{\P}{\balpha} \egdle \egd{\R}{\bgamma}$.
Concretely, this gives rise to the constraints $\balpha \le \bgamma$ and to the constraints of the form (for all $\x \le |\P|$):
\[ \P_{\x} \le \R \cdot \gamma_1^{x_1} \cdots \gamma_n^{x_n} \]
Since $\bgamma$ can be chosen to be strictly positive, this can be satisfied by choosing $\R \ge \max_{\x \le |\P|} \frac{\P_{\x}}{\gamma_1^{x_1} \cdots \gamma_n^{x_n}}$.
The constraints $\balpha \le \bgamma$ can clearly be satisfied by choosing $\bgamma$ close enough to $\one_n$.

From the existence of a contraction invariant $\egd{\R}{\bgamma}$, it directly follows that $\egd{\P}{\balpha} \semgeo{L} \egd{\Q}{\bbeta} := \evgeo{\egd{\frac{\R}{1-c}}{\bgamma}}{\lnot E}$.
\end{proof}

\NecessaryRuntime*
\begin{proof}
We introduce an extra program variable $T := X_{n+1}$ to keep track of the running time.
We transform the original program $P$ to a program $P^T$ which adds the statement $T \passign 1$ after each statement.

We claim that whenever $\egd{\P}{\balpha} \semgeo{P} \egd{\Q}{\bbeta}$, then there is a $\Q^T$ and $\beta^* \in (0,1)$ such that for all $\beta' \in [\beta^*,1)$, there is a $C(\beta') \in \nnr$ such that $C(\beta')$ tends to 1 as $\beta' \to 1$, and we have
\[ \egd{[\P]}{(\balpha, \beta')} \semgeo{P^T} \egd{\Q^T}{(\bbeta, \beta')} \egdle \egd{C(\beta') \cdot [\Q]}{(\bbeta, \beta')} \]
where the notation $[\P]$ denotes the $(|\P|,1)$-tensor with entry $[\P]_{n+1: 0} = \P$.
This claim implies the statement of the theorem because then $\egd{[\P]}{(\balpha, 0)} \egdle \egd{[\P]}{(\balpha, \beta')} \semgeo{P^T} \egd{\Q^T}{(\bbeta, \beta')}$ for all $\beta' \in [\beta^*, 1)$.
By \cref{lem:geo-sem-well-behaved}, there is also an $\egd{\Q'}{(\bbeta, \beta')}$ such that $\egd{[\P]}{(\balpha, 0)} \semgeo{P^T} \egd{\Q'}{(\bbeta, \beta')}$.

We observe that increasing $T = X_{n+1}$ transforms $\egd{[\P]}{(\balpha,\alpha')}$ to
\[ \egd{[\zero, \P]}{(\balpha, \alpha')} \egdle \egd{\frac{1}{\alpha'}[\P]}{(\balpha, \alpha')} \]
where $[\zero, \P]$ is the $(|\P|, 2)$-tensor with $[\zero, \P]_0 = \zero$ and $[\zero, \P]_1 = \P$.
Using this observation, we prove the above claim by induction on the structure of $P$.

For simple statements $P$ (i.e. not containing other statements), the claim is proven as follows.
Since $P^T = (P; T \passign 1)$, we have
\[ \egd{[\P]}{(\balpha, \beta')} \semgeo{P^T} \egd{[\zero, \Q]}{(\bbeta, \beta')} \egdle \egd{\frac{1}{\beta'} [\Q]}{(\bbeta, \beta')} \]
and by setting $C(\beta') = \frac{1}{\beta'}$, the claim is proven.

For statements $P = P_1; P_2$, there is an $\egd{\R}{\bgamma}$ such that $\egd{\P}{\balpha} \semgeo{P_1} \egd{\R}{\bgamma}$ and $\egd{\R}{\bgamma} \semgeo{P_2} \egd{\Q}{\bbeta}$.
By inductive hypothesis, there is a $\R^T$ and $\gamma^* \in (0, 1)$ such that $\egd{[\P]}{(\balpha, \gamma')} \semgeo{P_1^T} \egd{\R^T}{(\bgamma, \gamma')}$ and $\egd{\R^T}{(\bgamma, \gamma')} \egdle \egd{C_1(\gamma') \cdot [\R]}{(\bgamma, \gamma')}$ for all $\gamma' \in [\gamma^*, 1)$.
By inductive hypothesis, there is also a $\Q^T$ and $\beta^* \in [\gamma', 1)$ such that $\egd{[\R]}{(\bgamma, \beta')} \semgeo{P_2^T} \egd{\Q^T}{(\bbeta, \beta')}$ and $\egd{\Q^T}{(\bbeta, \beta')} \egdle \egd{C_2(\beta') [\Q]}{(\bbeta, \beta')}$ for all $\beta' \in [\beta^*, 1)$.
Then the claim follows by \cref{lem:geo-sem-well-behaved} with $C(\beta') := C_1(\beta') \cdot C_2(\beta')$ by monotonicity of $C$.

For conditionals $P = \Ite{E}{P_1}{P_2}$, we have that $\evgeo{\egd{\P}{\balpha}}{E} \semgeo{P_1} \egd{\R}{\bgamma}$ and $\evgeo{\egd{\P}{\balpha}}{\lnot E} \semgeo{P_2} \egd{\S}{\bdelta}$.
By inductive hypothesis, there are $\R^T, \S^T$ and $\gamma^* \in (0, 1), \delta^* \in (0, 1)$ such that 
\[ \evgeo{\egd{[\P]}{(\balpha, \gamma')}}{E} \semgeo{P_1^T} \egd{\R^T}{(\bgamma, \gamma')} \egdle \egd{C_1(\gamma') \cdot [\R]}{(\bgamma, \gamma')} \]
for all $\gamma' \in [\gamma^*, 1)$ and
\[ \evgeo{\egd{[\P]}{(\balpha, \delta')}}{E} \semgeo{P_2^T} \egd{\S^T}{(\bdelta, \delta')} \egdle \egd{C_2(\delta') \cdot [\S]}{(\bdelta, \delta')} \]
for all $\delta' \in [\delta^*, 1)$.
Set $\beta^* = \max(\gamma^*, \delta^*)$ and $C(\beta') = \max(C_1(\beta'), C_2(\beta'))$.
Then $\egd{[\Q]}{(\bbeta, \beta')}$ is a join of $\egd{[\R]}{(\bgamma, \gamma')}$ and $\egd{[\S]}{(\bdelta, \delta')}$ and the claim follows from \cref{lem:join-well-behaved,lem:geo-sem-well-behaved}.

For loops $P = \Whilst{E}{B}$, we know that there is a $c$-contraction invariant $\egd{\R}{\bgamma} \egdge \egd{\P}{\balpha}$ with $\egd{\Q}{\bbeta} = \evgeo{\egd{\frac{\R}{1-c}}{\bgamma}}{\lnot E}$.
We claim that $\egd{[\R]}{(\bgamma, \gamma')}$ is a contraction invariant for $P^T$ for all $\gamma' \in [\gamma^*, 1)$ where $\gamma^*$ will be determined later.
By the inductive hypothesis, we have
\[ \evgeo{\egd{[\R]}{(\bgamma, \gamma')}}{E} \semgeo{B^T} \egd{[\S']}{(\bdelta, \delta')} \egdle \egd{c \cdot C(\gamma') \cdot [\R]}{(\bgamma, \gamma')} \]
Since $\lim_{\gamma' \to 1} C(\gamma') = 1$, we can choose $\gamma^*$ large enough such that $C(\gamma') < \frac{1}{c}$ for all $\gamma' \in [\gamma^*, 1)$ and thus $c'(\gamma') := c \cdot C(\gamma') < 1$ for all $\gamma' \in [\gamma^*,1)$.
Furthermore, we have $c'(\gamma') \to c$ as $\gamma' \to 1$.
Clearly, $\egd{[\P]}{(\balpha, \gamma')} \egdle \egd{[\R]}{(\bgamma, \gamma')}$ holds too.
Thus $\egd{[\R]}{(\bgamma, \gamma')}$ is a $c'(\gamma')$-contraction invariant for all $\gamma' \in [\gamma^*, 1)$.
So we obtain the bound:
\[ \egd{[\P]}{(\balpha, \gamma')} \semgeo{P} \evgeo{\egd{\left[\frac{\R}{1 - c'(\gamma')}\right]}{(\bgamma, \gamma')}}{\lnot E} \]
The ratio between $\frac{1}{1 - c'(\gamma')}$ and $\frac{1}{1 - c}$ tends to 1 as $\gamma' \to 1$, as well.
This finishes the proof.
\end{proof}

\ConvergenceUpperBounds*
\begin{proof}
Recall that there are two sources of imprecision in the geometric bound semantics: overapproximating loops and joining branches of conditionals.
The imprecision of loop approximations decreases exponentially in $u$ because by the definition of a contraction invariant, the bound on the probability distribution entering the loop decreases by the contraction factor $c < 1$ in each iteration.
So after $u$ unrollings, the imprecision is reduced by a factor of $c^u$.
The imprecision of joins can also be reduced by picking larger and larger expansions of the EGDs involved.
The complete proof that follows is technical, because we have to ensure that all these approximations compose in the right way, but these two ideas are the key ingredients.

First, note that $\sem{P^{(u)}} = \sem{P}$ since unrolling does not affect the semantics of a program.
The proof works by induction on the structure of $P$.
We strengthen the inductive hypothesis by asserting that $A$ and $C$ are monotonic functions of $\egd{\P}{\balpha}$.

\textbf{Exactly representable statements.}
The statements $P$ where $\egd{\P}{\balpha} \semgeo{P} \sem{P}(\egd{\P}{\balpha})$ are $\Skip$, $X_k \passign c$, $X_k \massign 1, X_k := 0$.
Since $\semgeo{P}$ is exact here, the claim holds trivially by setting $A = 0$ and $C = 0$.

\textbf{Sequential composition.}
If $P = P_1; P_2$, we argue as follows:
Since $\egd{\P}{\balpha} \semgeo{P} \egd{\Q}{\bbeta}$, there exist $\R, \bgamma$ such that $\egd{\P}{\balpha} \semgeo{P_1} \egd{\R}{\bgamma}$ and $\egd{\R}{\bgamma} \semgeo{P_2} \egd{\Q}{\bbeta}$.
By the induction hypothesis, there exist $A_1, C_1, \R^{(u)}$ such that $\egd{\P}{\balpha} \semgeo{P_1^{(u)}} \egd{\R^{(u)}}{\bgamma}$ and $\egd{\R^{(u)}}{\bgamma} - \sem{P_1}(\egd{\P}{\balpha}) \mle A_1 \cdot C_1^u \cdot \egd{\R}{\bgamma}$.
Also by the induction hypothesis, there exist $A_2, C_2, \Q^{(u)}, \Q'$ such that $\egd{\R^{(u)}}{\bgamma} \semgeo{P_2} \egd{\Q'}{\bbeta} \egdle \egd{\Q}{\bbeta}$ (by \cref{lem:geo-sem-well-behaved} since $\egd{\R^{(u)}}{\bgamma} \egdle \egd{\R}{\bgamma}$) and $\egd{\R^{(u)}}{\bgamma} \semgeo{P_2^{(u)}} \egd{\Q^{(u)}}{\bbeta}$ and $\egd{\Q^{(u)}}{\bbeta} - \sem{P_2}\left(\egd{\R^{(u)}}{\bgamma}\right) \mle A_2 \cdot C_2^u \cdot \egd{\Q'}{\bbeta}$.
Together, we find:
\begin{align*}
&\egd{\Q^{(u)}}{\bbeta} - \sem{P}(\egd{\P}{\balpha}) \\
&= \left(\egd{\Q^{(u)}}{\bbeta} - \sem{P_2}\left(\egd{\R^{(u)}}{\bgamma}\right)\right) + \left(\sem{P_2}\left(\egd{\R^{(u)}}{\bgamma}\right) - \sem{P}\left(\egd{\P}{\balpha}\right)\right) \\
&= \left(\egd{\Q^{(u)}}{\bbeta} - \sem{P_2}\left(\egd{\R^{(u)}}{\bgamma}\right)\right) + \sem{P_2}\left(\egd{\R^{(u)}}{\bgamma} - \sem{P_1}\left(\egd{\P}{\balpha}\right)\right) \\
&\mle A_2 \cdot C_2^u \cdot \egd{\Q'}{\bbeta} + \sem{P_2}\left(A_1 \cdot C_1^u \cdot \egd{\R}{\bgamma}\right) \\
&\mle A_2 \cdot C_2^u \cdot \egd{\Q}{\bbeta} + A_1 \cdot C_1^u \cdot \egd{\Q}{\bbeta} \\
&\mle (A_1 + A_2) \cdot \max(C_1, C_2)^u \cdot \egd{\Q}{\bbeta}
\end{align*}
Furthermore, monotonicity of $A := A_1 + A_2$ and $C := \max(C_1, C_2)$ in $\egd{\P}{\balpha}$ is preserved.

\textbf{Conditionals.}
The case $\Ite{E}{P_1}{P_2}$ is similar.
First, let $\evgeo{\egd{\P}{\balpha}}{E} \semgeo{P_1} \egd{\R}{\bgamma}$ and $\evgeo{\egd{\P}{\balpha}}{\lnot E} \semgeo{P_2} \egd{\S}{\bdelta}$.
Similarly, let $\evgeo{\egd{\P}{\balpha}}{E} \semgeo{P_1^{(u)}} \egd{\R^{(u)}}{\bgamma}$ and $\evgeo{\egd{\P}{\balpha}}{\lnot E} \semgeo{P_2^{(u)}} \egd{\S^{(u)}}{\bdelta}$ with
\begin{align*}
\egd{\R^{(u)}}{\bgamma} - \sem{P_1}\left(\evgeo{\egd{\P}{\balpha}}{E}\right) &\mle A_1 \cdot C_1^u \cdot \egd{\R}{\bgamma} \\
\egd{\S^{(u)}}{\bdelta} - \sem{P_2}\left(\evgeo{\egd{\P}{\balpha}}{\lnot E}\right) &\mle A_2 \cdot C_2^u \cdot \egd{\S}{\bdelta}
\end{align*}
Now $\egd{\Q}{\bbeta}$ is a join of $\egd{\R}{\bgamma}$ and $\egd{\S}{\bdelta}$ via expansions $\R_*, \S_*$ and $\egd{\Q^{(u)}}{\bbeta}$ is a join of $\egd{\R^{(u)}}{\bgamma}$ and $\egd{\S^{(u)}}{\bdelta}$ via expansions $\R_*^{(u)}$ and $\S_*^{(u)}$.
We know that $\egd{\Q}{\bbeta}$ is a join of $\egd{\R}{\bgamma}$ and $\egd{\S}{\bdelta}$.
By \cref{lem:join-well-behaved}, there is a join $\egd{\Q'}{\bbeta}$ of $\egd{\R^{(u)}}{\bgamma}$ and $\egd{\S^{(u)}}{\bdelta}$ with $\egd{\Q'}{\bbeta} \egdle \egd{\Q}{\bbeta}$.
By \cref{lem:join-approx}, we can find such a $\Q^{(u)}$ with
\[ \egd{\Q^{(u)}}{\bbeta} - \egd{\R^{(u)}}{\bgamma} - \egd{\S^{(u)}}{\bdelta} \mle A_0 \cdot C_0^u \cdot \egd{\Q'}{\bbeta} \mle A_0 \cdot C_0^u \cdot \egd{\Q}{\bbeta} \]
where $A_0$ and $C_0$ will be chosen very shortly.
Overall, we find
\begin{align*}
&\egd{\Q^{(u)}}{\bbeta} - \sem{P}(\egd{\P}{\balpha}) \\
&= \left(\egd{\Q^{(u)}}{\bbeta} - \egd{\R^{(u)}}{\bgamma} - \egd{\S^{(u)}}{\bdelta}\right) \\
&\qquad + \left(\egd{\R^{(u)}}{\bgamma} - \sem{P_1}\left(\evgeo{\egd{\P}{\balpha}}{E}\right)\right) + \left(\egd{\S^{(u)}}{\bdelta} - \sem{P_2}\left(\evgeo{\egd{\P}{\balpha}}{\lnot E}\right)\right) \\
&\mle A_0 \cdot C_0^u \cdot \egd{\Q}{\bbeta} + A_1 \cdot C_1^u \cdot \egd{\R}{\bgamma} + A_2 \cdot C_2^u \cdot \egd{\S}{\bdelta} \\
&\mle A_0 \cdot C_0^u \cdot \egd{\Q}{\bbeta} + \max(A_1, A_2) \cdot \max(C_1, C_2)^u \cdot \egd{\Q}{\bbeta} \\
&\mle 2(A_1 + A_2) \cdot \max(C_1, C_2)^u \cdot \egd{\Q}{\bbeta}
\end{align*}
by choosing $A_0 := \max(A_1, A_2)$ and $C_0 := \max(C_1, C_2)$.
Monotonicity of $A := 2(A_1 + A_2)$ and $C := \max(C_1, C_2)$ in $\egd{\P}{\balpha}$ also holds by induction hypothesis.

\textbf{Loops.}
The most interesting case is $P = \Whilst{E}{B}$.
Since $\egd{\P}{\balpha} \semgeo{P} \egd{\Q}{\bbeta}$, there exist $c, \R, \S, \bgamma, \bdelta$ such that $\egd{\P}{\balpha} \egdle \egd{\R}{\bgamma}$, and $\evgeo{\egd{\R}{\bgamma}}{E} \semgeo{B} \egd{\S}{\bdelta}$ and $\egd{\S}{\bdelta} \egdle \egd{c \cdot \R}{\bgamma}$ and $\evgeo{\egd{\R}{\bgamma}}{\lnot E} = \egd{\Q}{\bbeta}$.
By the induction hypothesis, $\evgeo{\egd{\R}{\bgamma}}{E} \semgeo{B^{(u)}} \egd{\S^{(u)}}{\bgamma}$ with $\egd{\S^{(u)}}{\bgamma} \egdle \egd{\S}{\bdelta} \egdle c \cdot \egd{\R}{\bgamma}$, so $\egd{\R}{\bgamma}$ is a $c$-contraction invariant for $\Whilst{E}{B^{(u)}}$ too.
Furthermore, the inductive hypothesis yields
\[ \egd{\S^{(u)}}{\bgamma} - \sem{B}\left(\evgeo{\egd{\R}{\bgamma}}{E}\right) \mle A_r \cdot C_r^u \cdot \egd{\S}{\bdelta} \mle A_r \cdot C_r^u \cdot c \cdot \egd{\R}{\bgamma} \]

\begin{figure}
\begin{subfigure}{0.5\textwidth}
\begin{align*}
&\Comment{\egd{\P_0}{\balpha_0} = \egd{\P}{\balpha}} \\
&\Ite{E}{ \\
&\quad B \\
&\quad \Comment{\egd{\P_1}{\balpha_1}} \\
&\quad \Ite{E}{ \\
&\qquad B \\
&\qquad \Comment{\egd{\P_2}{\balpha_2}} \\
&\qquad \dots \\
&\qquad \Comment{\egd{\P_{u}}{\balpha_{u}}} \\
&\qquad \Whilst{E}{B} \\
&\qquad \Comment{\egd{\Q_{u}}{\bbeta_{u}}} \\
&\qquad \dots \\
&\qquad \Comment{\egd{\Q_2}{\bbeta_2}} \\
&\quad }{\Skip} \\
&\quad \Comment{\egd{\Q_1}{\bbeta_1}} \\
&}{\Skip} \\
&\Comment{\egd{\Q_0}{\bbeta_0} = \egd{\Q}{\bbeta} }
\end{align*}
\end{subfigure}%
\begin{subfigure}{0.5\textwidth}
\begin{align*}
&\Comment{\egd{\P_0^{(u)}}{\balpha_0} = \egd{\P}{\balpha}} \\
&\Ite{E}{ \\
&\quad B^{(u)} \\
&\quad \Comment{\egd{\P^{(u)}_1}{\balpha_1}} \\
&\quad \Ite{E}{ \\
&\qquad B^{(u)} \\
&\qquad \Comment{\egd{\P^{(u)}_2}{\balpha_2}} \\
&\qquad \dots \\
&\qquad \Comment{\egd{\P^{(u)}_{u}}{\balpha_{u}}} \\
&\qquad \Whilst{E}{B^{(u)}} \\
&\qquad \Comment{\egd{\Q^{(u)}_{u}}{\bbeta_{u}}} \\
&\qquad \dots \\
&\qquad \Comment{\egd{\Q^{(u)}_2}{\bbeta_2}} \\
&\quad }{\Skip} \\
&\quad \Comment{\egd{\Q^{(u)}_1}{\bbeta_1}} \\
&}{\Skip} \\
&\Comment{\egd{\Q^{(u)}_0}{\bbeta_0} = \egd{\Q^{(u)}}{\bbeta} }
\end{align*}
\end{subfigure}
\caption{Notation for the loop case in the proof of \cref{thm:convergence}}
\label{fig:convergence-loop-case-notation}
\end{figure}

\textbf{Definition of $\P_v$ and $\P_v^{(u)}$.}
Next, we will define intermediate distributions within the unrolling, illustrated in \cref{fig:convergence-loop-case-notation}.
Define $\egd{\P_0}{\balpha_0} = \egd{\P}{\balpha}$ and $\evgeo{\egd{\P_{v-1}}{\balpha_{v-1}}}{E} \semgeo{B} \egd{\P_v}{\balpha_v}$.
This satisfies the inequality $\egd{\P_v}{\balpha_v} \egdle c^v \egd{\R}{\bgamma}$ by a simple inductive argument using the definition of a $c$-contraction invariant.
Define $\egd{\P_0^{(u)}}{\balpha_0} = \egd{\P}{\balpha}$ and choose $\egd{\P_v^{(u)}}{\balpha_v}$ to satisfy the following conditions:
\begin{enumerate}
  \item $\evgeo{\egd{\P_{v-1}^{(u)}}{\balpha_{v-1}}}{E} \semgeo{B^{(u)}} \egd{\P_v^{(u)}}{\balpha_v}$, which is possible by induction hypothesis.
  \item $\egd{\P_v^{(u)}}{\balpha_v} \egdle \egd{\P_v}{\balpha_v}$, which is satisfiable by a simple inductive argument using monotonicity of the semantics.
  \item $\evgeo{\egd{\P_{v-1}^{(u)}}{\balpha_{v-1}}}{E} \semgeo{B} G_v^{(u)}$ with $G_v^{(u)} \egdle \egd{\P_v}{\balpha_v}$, which can be satisfied by \cref{lem:geo-sem-well-behaved} because we have $\evgeo{\egd{\P_{v-1}^{(u)}}{\balpha_{v-1}}}{E} \egdle \evgeo{\egd{\P_{v-1}}{\balpha_{v-1}}}{E}$ and because $\evgeo{\egd{\P_{v-1}}{\balpha_{v-1}}}{E} \semgeo{B} \egd{\P_v}{\balpha_v}$.
\end{enumerate}
As a consequence of these conditions and the induction hypothesis, we find that
\begin{align*}
  \egd{\P_v^{(u)}}{\balpha_v} - \sem{B}\left(\evgeo{\egd{\P_{v-1}^{(u)}}{\balpha_{v-1}}}{E}\right) &\mle A_v \cdot C_v^u \cdot G_v^{(u)} \\
  &\mle A_v \cdot C_v^u \cdot \egd{\P_v}{\balpha_v} \\
  &\mle A_v \cdot C_v^u \cdot c^v \cdot \egd{\R}{\gamma} \\
  &\mle A_r \cdot C_r^u \cdot c^v \cdot \egd{\R}{\gamma}
\end{align*}
where the last steps follows from monotonicity of $A$ and $C$.

\textbf{Definition of $\Q_v^{(u)}$.}
Similarly, we want to define $\egd{\Q_v^{(u)}}{\bbeta}$ such that
\[ \egd{\P_u^{(u)}}{\balpha_u} \semgeo{(\Whilst{E}{B})_v^{(u)}} \egd{\Q_v^{(u)}}{\bbeta} \quad \text{and} \quad \egd{\Q_v^{(u)}}{\bbeta} \egdle \frac{c^u}{1-c} \egd{\Q}{\bbeta} \]
We achieve this by the following inductive definition:
For $v = u$, choose it so that
\[ \egd{\P_u^{(u)}}{\balpha_u} \semgeo{\Whilst{E}{B^{(u)}}} \egd{\Q_u^{(u)}}{\bbeta} \]
Since $\egd{\P_u^{(u)}}{\balpha_u} \egdle \egd{\P_u}{\balpha_u} \egdle c^u \cdot \egd{\R}{\bgamma}$, we can choose
$\Q_u^{(u)}$ such that
\[ \egd{\P_u^{(u)}}{\balpha_u} \semgeo{\Whilst{E}{B^{(u)}}} c^u \cdot \evgeo{\egd{\frac{\R}{1-c}}{\bgamma}}{\lnot E} \egdle \frac{c^u}{1-c} \cdot \egd{\Q}{\bbeta} \]
by \cref{lem:geo-sem-well-behaved}.
For $v < u$, use \cref{lem:join-approx} to choose $\Q_v^{(u)}$ such that $\egd{\Q_v^{(u)}}{\bbeta}$ is a join of $\egd{\Q_{v+1}^{(u)}}{\bbeta}$ and $\evgeo{\egd{\P_v^{(u)}}{\balpha}}{\lnot E}$ such that
\[ \egd{\Q_v^{(u)}}{\bbeta} - \evgeo{\egd{\P_v^{(u)}}{\balpha_v}}{\lnot E} - \egd{\Q_{v+1}^{(u)}}{\bbeta} \mle \frac{B_v}{1-c} \cdot c^v \cdot \egd{\Q}{\bbeta} \]
where $B_v = A_r \cdot C_r^u \cdot c(1-c)$.
The lemma can be applied because $\frac{c^v}{1-c} \cdot \egd{\Q}{\bbeta}$ is a bound on the join of $\evgeo{\egd{\P_v^{(u)}}{\balpha_v}}{\lnot E}$ and $\egd{\Q_{v+1}^{(u)}}{\bbeta}$.
This is because
\[ \evgeo{\egd{\P_v^{(u)}}{\balpha_v}}{\lnot E} \egdle \evgeo{\egd{\P_v}{\balpha_v}}{\lnot E} \egdle c^v \cdot \evgeo{\egd{\R}{\bgamma}}{\lnot E} = c^v \cdot \egd{\Q}{\bbeta} \]
and $\egd{\Q_{v+1}^{(u)}}{\bbeta} \egdle \frac{c^{v+1}}{1-c} \egd{\Q}{\bbeta}$ by inductive hypothesis, which implies that the sum $\left(c^v + \frac{c^{v+1}}{1-c}\right) \cdot \egd{\Q}{\bbeta}$ is a valid join and equals $\frac{c^v}{1-c} \cdot \egd{\Q}{\bbeta}$.
Hence a $\egd{\Q_v^{(u)}}{\bbeta} \egdle \frac{c^v}{1-c} \egd{\Q}{\bbeta}$ with the desired properties exists.
Finally, define $\Q^{(u)} := \Q_0^{(u)}$.

Inductively, it is easy to see that
\[ \egd{\P_v^{(u)}}{\balpha_u} \semgeo{(\Whilst{E}{B})_v^{(u)}} \egd{\Q_v^{(u)}}{\bbeta} \]
In the base case $v = u$, this is clear from the definition.
For the inductive step, we can assume $\egd{\P_{v+1}^{(u)}}{\balpha_{v+1}} \semgeo{(\Whilst{E}{B})_{v+1}^{(u)}} \egd{\Q_{v+1}^{(u)}}{\bbeta}$.
We have $\evgeo{\egd{\P_v^{(u)}}{\balpha_v}}{E} \semgeo{B^{(u)}} \egd{\P_{v+1}^{(u)}}{\balpha_{v+1}}$ by definition, so
\[ \evgeo{\egd{\P_v^{(u)}}{\balpha_v}}{E} \semgeo{B^{(u)}; (\Whilst{E}{B})_{v+1}^{(u)}} \egd{\Q_{v+1}^{(u)}}{\bbeta} \]
Since $\egd{\Q_v^{(u)}}{\bbeta}$ is a join of $\egd{\Q_{v+1}^{(u)}}{\bbeta}$ and $\evgeo{\egd{\P_v^{u}}{\balpha}}{\lnot E}$, the semantics of $\Ite{-}{-}{-}$ implies
\[ \egd{\P_v^{(u)}}{\balpha_v} \semgeo{(\Whilst{E}{B})_v^{(u)}} \egd{\Q_v^{(u)}}{\bbeta} \]
because $(\Whilst{E}{B})_v^{(u)} = \Ite{E}{B^{(u)}; (\Whilst{E}{B})_{v+1}^{(u)}}{\Skip}$.

\textbf{Actual proof for the loop case.}
Consider $P^{(u)} = (\Whilst{E}{B})_0^{(u)}$.
The proof for the while case is by induction on $v$ in $(\Whilst{E}{B})_v^{(u)}$.
We need to strengthen our induction hypothesis to the following:
\begin{align}
  \egd{\Q_v^{(u)}}{\bbeta} - \sem{P}\left(\egd{\P_v^{(u)}}{\balpha_v}\right) \mle \left(c^u + 2 A_r C_r^u \cdot \frac{c^{v+1}}{1-c}\right) \cdot \egd{\Q}{\bbeta} \label{eq:while-strengthened}
\end{align}

We first consider the base case $v = u$:
We have $(\Whilst{E}{B})_u^{(u)} = \Whilst{E}{B^{(u)}}$.
We remarked above that $\egd{\R}{\bgamma}$ is a $c$-contraction invariant for $\Whilst{E}{B^{(u)}}$.
Furthermore we have $\egd{\P_u^{(u)}}{\balpha_u} \egdle \egd{\P_u}{\balpha_u} \egdle c^u \cdot \egd{\R}{\bgamma}$ and thus:
\begin{align*}
  \egd{\P_u^{(u)}}{\balpha_u} \semgeo{(\Whilst{E}{B})_u^{(u)}} c^u \cdot \evgeo{\egd{\frac{\R}{1-c}}{\bgamma}}{\lnot E}
  &= c^u \cdot \egd{\Q}{\bbeta} \\
  &\mle \left(c^u + 2 A_r C_r^u \frac{c^{u+1}}{1-c}\right) \egd{\Q}{\bbeta}
\end{align*}

Now for the inductive step $v < u$:
By the loop unrolling property, we have:
\[ \sem{\Whilst{E}{B}}\left(\egd{\P_v^{(u)}}{\balpha_v}\right) = \sem{\Whilst{E}{B}}\left(\sem{B}\left(\evgeo{\egd{\P_v^{(u)}}{\balpha_v}}{E}\right)\right) + \evgeo{\egd{\P_v^{(u)}}{\balpha_v}}{\lnot E} \]
Furthermore, since $\Whilst{E}{B}_v^{(u)}$ has the same semantics as $\Whilst{E}{B}$, we find:
\begin{align*}
&\egd{\Q_v^{(u)}}{\bbeta} - \sem{\Whilst{E}{B}}\left(\egd{\P_v^{(u)}}{\balpha_v}\right) \\
&= \egd{\Q_v^{(u)}}{\bbeta} - \left(\sem{\Whilst{E}{B}}\left(\sem{B}\left(\evgeo{\egd{\P_v^{(u)}}{\balpha_v}}{E}\right)\right) + \evgeo{\egd{\P_v^{(u)}}{\balpha_v}}{\lnot E}\right) \\
&= \egd{\Q_v^{(u)}}{\bbeta} - \evgeo{\egd{\P_v^{(u)}}{\balpha_v}}{\lnot E} - \egd{\Q_{v+1}^{(u)}}{\bbeta} \\
&\quad + \egd{\Q_{v+1}^{(u)}}{\bbeta} - \sem{\Whilst{E}{B}}\left(\sem{B}\left(\evgeo{\egd{\P_v^{(u)}}{\balpha_v}}{E}\right)\right) \\
&= \egd{\Q_v^{(u)}}{\bbeta} - \evgeo{\egd{\P_v^{(u)}}{\balpha_v}}{\lnot E} - \egd{\Q_{v+1}^{(u)}}{\bbeta} \\
&\quad + \egd{\Q_{v+1}^{(u)}}{\bbeta} - \sem{\Whilst{E}{B}}\left(\egd{\P_{v+1}^{(u)}}{\balpha_{v+1}}\right) \\
&\quad + \sem{\Whilst{E}{B}}\left(\egd{\P_{v+1}^{(u)}}{\balpha_{v+1}} - \sem{B}\left(\evgeo{\egd{\P_v^{(u)}}{\balpha_v}}{E}\right)\right) \\
&\mle \frac{B_v}{1-c} \cdot c^v \cdot \egd{\Q}{\bbeta} \\
&\quad + \left(c^u + 2 A_r C_r^u \cdot \frac{c^{v+2}}{1-c}\right) \cdot \egd{\Q}{\bbeta} \\
&\quad + \sem{\Whilst{E}{B}}\left(A_r \cdot C_r^u \cdot c^{v+1} \cdot \egd{\R}{\bgamma}\right) \\
&\mle A_r \cdot C_r^u \cdot c^{v+1} \cdot \egd{\Q}{\beta} \\
&\quad + \left(c^u + 2 A_r C_r^u \cdot \frac{c^{v+2}}{1-c}\right) \cdot \egd{\Q}{\bbeta} \\
&\quad + A_r \cdot C_r^u \cdot c^{v+1} \cdot \evgeo{\egd{\R}{\bgamma}}{\lnot E} \\
&\mle \left(c^u + 2 A_r C_r^u \cdot \frac{c^{v+2}}{1-c}\right) \cdot \egd{\Q}{\bbeta} + 2 \cdot A_r \cdot C_r^u \cdot c^{v+1} \cdot \egd{\Q}{\bbeta} \\
&\mle \left(c^u + 2 A_r C_r^u \cdot \frac{c^{v+1}}{1-c}\right) \cdot \egd{\Q}{\bbeta} \\
\end{align*}
This proves \cref{eq:while-strengthened}.
As a consequence, we have (since $\P = \P_0^{(u)}$ and $\balpha = \balpha_0$):
\[ \egd{\Q^{(u)}}{\bbeta} - \sem{P}(\egd{\P}{\balpha}) \mle \left(c^u + 2 A_r C_r^u \cdot \frac{c}{1-c}\right) \cdot \egd{\Q}{\bbeta} \mle A \cdot C^u \cdot \egd{\Q}{\bbeta} \]
with $A := 1 + \frac{2A_r c}{1-c}$ and $C := \max(c, C_r)$.
Furthermore, $A, C$ can be chosen to be monotonic functions of $\egd{\P}{\balpha}$, by choosing $\egd{\R}{\bgamma}$ as a monotonic function of $\egd{\P}{\balpha}$.
This finishes the induction and the proof.
\end{proof}

The following lemma was needed in the proof.
\begin{lemma}
\label{lem:join-approx}
Let $A > 0$.
Suppose $\left(\egd{\R}{\bgamma}, \egd{\S}{\bdelta}, \egd{\Q}{\bbeta}\right) \in \JoinRel$.
Then there are $\P, \balpha$ such that $\egd{\P}{\balpha} \egdle \egd{\Q}{\bbeta}$ and $\left(\egd{\R}{\bgamma}, \egd{\S}{\bdelta}, \egd{\P}{\balpha}\right) \in \JoinRel$ and $\egd{\P}{\balpha} - \egd{\R}{\bgamma} - \egd{\S}{\bdelta} \mle A \cdot \egd{\Q}{\bbeta}$.
\end{lemma}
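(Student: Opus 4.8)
The idea is to take the \emph{strict} join $\egd{\P}{\balpha}$ with decay rate $\balpha := \max(\bgamma,\bdelta)$ (componentwise) but with an artificially large initial block; the larger the block, the more of the sum $\egd{\R}{\bgamma}+\egd{\S}{\bdelta}$ is represented exactly, so the remaining mismatch becomes negligible relative to $\egd{\Q}{\bbeta}$. Concretely, for a parameter $k\in\NN$ to be fixed later, I would set $\bm{t} := |\Q| + k\one_n$, take $\egd{\R^{(k)}}{\bgamma}$ and $\egd{\S^{(k)}}{\bdelta}$ to be the expansions of $\egd{\R}{\bgamma}$ and $\egd{\S}{\bdelta}$ to size $\bm{t}$ (legitimate since $|\R|,|\S| \le |\Q| \le \bm{t}$ because $\egd{\Q}{\bbeta}$ is a join), and put $\P := \R^{(k)} + \S^{(k)}$. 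Then $(\egd{\R}{\bgamma},\egd{\S}{\bdelta},\egd{\P}{\balpha})\in\JoinRel$ by construction (it is even a strict join). Moreover $\egd{\P}{\balpha}\egdle\egd{\Q}{\bbeta}$: we have $\balpha\le\bbeta$, and for every $\ii < \bm{t}$ (note $\max(|\P|,|\Q|)=\bm{t}$) we get $\egd{\P}{\balpha}(\ii) = \egd{\R}{\bgamma}(\ii) + \egd{\S}{\bdelta}(\ii) \le \egd{\Q}{\bbeta}(\ii)$ by \cref{lem:egd-expansion} and the definition of a join (\cref{def:join-relation}), which also gives $\egd{\R}{\bgamma}+\egd{\S}{\bdelta}\mle\egd{\Q}{\bbeta}$; in particular $\egd{\P}{\balpha}\mle\egd{\Q}{\bbeta}$.

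It remains to control the gap $\egd{\P}{\balpha} - \egd{\R}{\bgamma} - \egd{\S}{\bdelta}$, a nonnegative measure (again by the join property) that vanishes on all $\ii < \bm{t}$. Writing $\hat{\ii} := \min(\ii, \bm{t} - \one_n)$ and $\bm{m} := \max(\ii - \bm{t} + \one_n, \zero_n)$, a direct computation using $\R^{(k)}_{\hat{\ii}} = \egd{\R}{\bgamma}(\hat{\ii})$ and $\S^{(k)}_{\hat{\ii}} = \egd{\S}{\bdelta}(\hat{\ii})$ gives, for $\ii\not< \bm{t}$,
\[
\bigl(\egd{\P}{\balpha} - \egd{\R}{\bgamma} - \egd{\S}{\bdelta}\bigr)(\ii) = \egd{\R}{\bgamma}(\hat{\ii})\,\bigl(\balpha^{\bm{m}} - \bgamma^{\bm{m}}\bigr) + \egd{\S}{\bdelta}(\hat{\ii})\,\bigl(\balpha^{\bm{m}} - \bdelta^{\bm{m}}\bigr).
\]
Let $c := \max\bigl(\{\gamma_j/\beta_j \mid \gamma_j < \delta_j\} \cup \{\delta_j/\beta_j \mid \delta_j < \gamma_j\}\bigr)$; if both sets are empty then $\bgamma = \bdelta = \balpha$ and the gap is identically zero, so I may assume $0\le c < 1$. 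I claim that for $k \ge 1$ and every $\ii\not<\bm{t}$ the gap at $\ii$ is at most $2c^k\cdot\egd{\Q}{\bbeta}(\ii)$, which then yields $\egd{\P}{\balpha} - \egd{\R}{\bgamma} - \egd{\S}{\bdelta} \mle 2c^k\egd{\Q}{\bbeta}$, and choosing $k$ with $2c^k\le A$ finishes the proof.

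For the claim, consider the first summand (the second is symmetric). It is zero unless some coordinate $j_0$ with $m_{j_0}\ge 1$ satisfies $\alpha_{j_0} > \gamma_{j_0}$, i.e.\ $\gamma_{j_0} < \delta_{j_0}$; such a $j_0$ is \emph{saturated}, meaning $\hat{\ii}_{j_0} = t_{j_0} - 1 \ge |\R|_{j_0} + k - 1 \ge |\R|_{j_0}$, and $\gamma_{j_0} > 0$ (else $\egd{\R}{\bgamma}(\hat{\ii}) = 0$, since coordinate $j_0$ lies past $\R$'s initial block). ``Peeling off'' the $k$ trailing geometric factors in coordinate $j_0$ and using $\egd{\R}{\bgamma}\mle\egd{\Q}{\bbeta}$ gives $\egd{\R}{\bgamma}(\hat{\ii}) \le (\gamma_{j_0}/\beta_{j_0})^k\,\egd{\Q}{\bbeta}(\hat{\ii})$; since $\balpha^{\bm{m}}\le\bbeta^{\bm{m}}$ and $\egd{\Q}{\bbeta}(\ii) = \egd{\Q}{\bbeta}(\hat{\ii})\cdot\bbeta^{\bm{m}}$ (as $\hat{\ii}$ and $\ii$ differ only in coordinates at or past $\Q$'s block), the first summand is bounded by $\egd{\R}{\bgamma}(\hat{\ii})\,\balpha^{\bm{m}} \le (\gamma_{j_0}/\beta_{j_0})^k\,\egd{\Q}{\bbeta}(\ii) \le c^k\,\egd{\Q}{\bbeta}(\ii)$, and likewise for the second summand.

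I expect the main obstacle to be exactly this pointwise estimate: the multi-index bookkeeping (isolating the saturated coordinates, the single-coordinate peeling of $k$ geometric factors, and threading the comparison $\egd{\R}{\bgamma}(\hat{\ii})\le\egd{\Q}{\bbeta}(\hat{\ii})$ through to one at $\ii$), together with dispatching the degenerate cases ($\gamma_{j_0}=0$, or $\Q$ having a zero entry at its block corner so its decay rate is vacuous) by observing that the relevant terms already vanish there. The rest ($\JoinRel$-membership, the $\egdle$-comparison, nonnegativity and vanishing of the gap on the block) is routine given \cref{lem:egd-expansion} and \cref{def:join-relation}.
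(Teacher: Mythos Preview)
Your argument is correct and takes a different route from the paper's. The paper proves the lemma by \emph{induction on the number of dimensions} $n$: it fixes the last coordinate, expands everything to size $m$ in that dimension, applies the induction hypothesis to each of the $m$ slices (with error budget $A/2$), and then handles the tail $j\ge m$ by a case split on whether $\gamma_n=\delta_n$ or $\gamma_n<\delta_n$, in the latter case using $(\gamma_n/\alpha_n)^{m-|\Q|_n}\to 0$ to absorb the residual into the remaining $A/2$. The resulting block size is therefore determined dimension by dimension.

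Your approach is more direct: you inflate \emph{all} dimensions by the same $k$ simultaneously and carry out a single global pointwise estimate. The key step---identifying, for each point $\ii$ outside the block, a saturated coordinate $j_0$ with $\gamma_{j_0}<\delta_{j_0}$ (or symmetrically) and peeling off $k$ factors $\gamma_{j_0}^k$ against $\beta_{j_0}^k$ via the pointwise bound $\egd{\R}{\bgamma}\mle\egd{\Q}{\bbeta}$---is clean and avoids the inductive bookkeeping. The degenerate cases you flag ($\gamma_{j_0}=0$ forcing $\egd{\R}{\bgamma}(\hat{\ii})=0$; $\beta_{j_0}>0$ being automatic since $\gamma_{j_0}<\delta_{j_0}\le\beta_{j_0}$; the set defining $c$ being empty forcing $\bgamma=\bdelta=\balpha$ and hence zero gap) are all handled correctly. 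What you gain is a shorter, non-inductive proof with an explicit decay rate $2c^k$ in the block size; what the paper's induction buys is a slightly more modular decomposition, at the cost of having to recombine slice errors.
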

\begin{proof}
Let $\egd{\R^*}{\bgamma}, \egd{\S^*}{\bdelta}$ be the expansions of $\egd{\R}{\bgamma}$ and $\egd{\S}{\bdelta}$ such that $\R^* + \S^* = \Q$.
We set $\balpha = \max(\bgamma, \bdelta)$.
Clearly, $\balpha \le \bbeta$ and thus $\egd{\Q}{\balpha} \egdle \egd{\Q}{\bbeta}$.

The proof is by induction on $n = |\balpha|$.
If $n = 0$ then $\egd{\P}{\balpha}= \P = \R + \S \in \RR$ and $\egd{\P}{\balpha} - \egd{\R}{\bgamma} - \egd{\S}{\bdelta} = \P - \R - \S = 0$.

If $n > 0$, we assume without loss of generality that $\gamma_n \le \delta_n$ and thus $\alpha_n = \delta_n$.
Let $\balpha' := (\alpha_1, \dots, \alpha_{n-1})$, $\bbeta' := (\beta_1, \dots, \beta_{n-1})$, $\bgamma' := (\gamma_1, \dots, \gamma_{n-1})$, and $\bdelta' := (\delta_1, \dots, \delta_{n-1})$.
In this proof we will write $\T_j$ for $\T_{n: j}$, i.e. the $j$-the component in the $n$-th dimension of a tensor $t$.

Let $m > |\Q|_n$ to be chosen later and let $\egd{\R^{(m)}}{\bgamma}$, $\egd{\S^{(m)}}{\bdelta}$, $\egd{\Q^{(m)}}{\bbeta}$ be the expansions of $\egd{\R^*}{\bgamma}$, $\egd{\S^*}{\bdelta}$, and $\egd{\Q}{\bbeta}$ to size $m$ in dimension $n$.
Then for $j < m$, we can find by the inductive hypothesis some $\P_j$ with $\left(\egd{\R_j^{(m)}}{\bgamma'}, \egd{\S_j^{(m)}}{\bdelta'}, \egd{\P_j}{\balpha'}\right) \in \JoinRel$ and
\[ \egd{\P_j^{(m)}}{\balpha'} - \egd{\R_j^{(m)}}{\bgamma'} - \egd{\S_j^{(m)}}{\bdelta'} \mle A/2 \cdot \egd{\Q^{(m)}_j}{\bbeta'} \]
Let $[m_1] \times \cdots \times [m_{n-1}]$ be the maximum dimensions of $\P_j$.
Without loss of generality, we can assume that all $\R^{(m)}_j, \S^{(m)}_j, \P_j$ are expanded to these maximum dimensions.
Collecting the $\P_j$'s into a big vector yields $\P = (\P_0, \dots, \P_{|\Q|_n})$.
We claim $\egd{\P}{\balpha} - \egd{\R}{\bgamma} - \egd{\S}{\bdelta} \mle A \cdot \egd{\Q}{\bbeta}$.
We show this by looking at the $j$-th component, i.e. $\mu_j(\x') := \mu(\x', j)$ for $\x' \in \NN^{n-1}$.
For $j < m$, this follows from the definition.
For $j \ge m$, we analyze two cases.

If $\gamma_n = \delta_n$, then $\alpha_n = \gamma_n = \delta_n$ by definition.
We choose $m = |\Q|_n$ and find for $j \ge m = |\Q|_n$:
\begin{align*}
&\egd{\P_{m-1} \cdot \alpha_n^{j - m + 1}}{\balpha'} - \egd{\R_{m-1}^{(m)} \cdot \gamma_n^{j - m + 1}}{\bgamma'} - \egd{\S_{m-1}^{(m)} \cdot \delta_n^{j - m + 1}}{\bdelta'} \\
&= \left(\egd{\P_{m-1}}{\balpha'} - \egd{\R_{m-1}^{(m)}}{\bgamma'} - \egd{\S_{m-1}^{(m)}}{\bdelta'}\right) \cdot \alpha_n^{j - m + 1} \\
&\mle A \cdot \egd{\Q_{m-1}}{\bbeta'} \cdot \beta_n^{j - m + 1}
\end{align*}

If $\gamma_n < \delta_n = \alpha_n$, we find for $j \ge m$:
\begin{align*}
&\egd{\P_{m-1}}{\balpha'} \cdot \alpha_n^{j-m+1} - \egd{\R^{(m)}_{m-1}}{\bgamma'} \cdot \gamma_n^{j-m+1} - \egd{\S^{(m)}_{m-1}}{\bdelta'} \cdot \delta_n^{j-m+1} \\
&= \left(\egd{\P_{m-1}}{\balpha'} - \egd{\R^{(m)}_{m-1}}{\bgamma'} - \egd{\S^{(m)}_{m-1}}{\bdelta'}\right) \cdot \alpha_n^{j-m+1} \\
&\quad + \egd{\R^{(m)}_{m-1}}{\balpha'} \cdot \alpha_n^{j-m+1} - \egd{\R^{(m)}_{m-1}}{\bgamma'} \cdot \gamma_n^{j-m+1} \\
&\mle A/2 \cdot \egd{\Q_{m-1}}{\bbeta'} \cdot \beta_n^{j-m+1} + \egd{\R^{(m)}_{m-1}}{\balpha'} \cdot \alpha_n^{j-m+1} \\
&= A/2 \cdot \egd{\Q_{|\Q|_n-1} \cdot \beta_n^{m - |\Q|_n}}{\bbeta'} \cdot \beta_n^{j-m+1} + \egd{\R^*_{|\Q|_n - 1} \cdot \gamma_n^{m - |\Q|_n}}{\balpha'} \cdot \alpha_n^{j-m+1} \\
&\mle A/2 \cdot \egd{\Q_{|\Q|_n-1}}{\bbeta'} \cdot \beta_n^{j - |\Q|_n + 1} + \egd{\R^*_{|\Q|_n - 1}}{\bbeta'} \cdot \alpha_n^{j - |\Q|_n + 1} \cdot \left(\frac{\gamma_n}{\alpha_n}\right)^{m - |\Q|_n} \\
&\mle \left(A/2 + \left(\frac{\gamma_n}{\alpha_n}\right)^{m - |\Q|_n}\right) \cdot \egd{\Q_{|\Q|_n-1}}{\bbeta'} \cdot \beta_n^{j - |\Q|_n + 1} \\
&\mle A \cdot \egd{\Q_{|\Q|_n-1}}{\bbeta'} \cdot \beta_n^{j - |\Q|_n + 1}
\end{align*}
where the last step follows from $\left(\frac{\gamma_n}{\alpha_n}\right)^{m - |\Q|_n} \to 0$ as $m \to \infty$, so we can choose $m$ large enough such that the bound holds for all $j \ge m$.

So the bound holds for $j < m$ and $j \ge m$, and thus we find, as desired:
\[ \egd{\P}{\balpha} - \egd{\R}{\bgamma} - \egd{\S}{\bdelta} \mle A \cdot \egd{\Q}{\balpha} \qedhere \]
\end{proof}

\begin{example}[Failure of geometric bound semantics]
\Cref{ex:imprecise-tails} can easily be extended to an example where the geometric bound semantics fails even though an EGD bound for the whole program exists.
Recall that at the end of the program, the true distribution is $\Dirac{(0,0)}$, but the geometric bound semantics finds a bound with nonzero tails.
If we append $\Ite{X_1 = 0}{\Skip}{\Diverge}$ to the program, this does not affect the true distribution (because $X_1 = 0$ almost surely), but the geometric bound semantics will now fail because $\Diverge := \Whilst{\Flip(1)}{\Skip}$ does not admit a geometric bound since it has infinite expected running time.
\end{example}
\section{Supplementary Material for \cref{sec:impl}}
\label{apx:impl}

\subsection{Overapproximating the Measure Support}
\label{apx:approx-support}

The residual mass bound is a bound on the total mass of the distribution $\nu := \sem{P}(\mu) - \semlo{P}(\mu)$.
In practice, $\nu$ often has bounded support.
This is useful for our bounds because we know that for measurable sets $S \subseteq \estates$ outside the support of $\nu$, the lower bound is tight: $\semlo{P}(\mu)(S) = \esem{P}(\mu)(S)$.
Suppose the support of $\nu$ is contained in a Cartesian product of intervals $J := J_1 \times \dots \times J_n$.
Then we know that for discrete variables $X_k$, the lower bound on the probability mass $\Prob[X_k = i]$ is exact for $i \notin J_k$.
Even more, if $J_k$ is bounded, then we can even compute upper bounds on moments:
\[ \ExpVal_{\X \sim \sem{P}(\mu)}[X_k] - \ExpVal_{\X \sim \semlo{P}(\mu)}[X_k] = \ExpVal_{\X \sim \nu}[X_k] \le \nu(\estates) \cdot \max J_k \le \semres{P}(\mu) \cdot \max J_k \]

To compute such an overapproximation $J := J_1 \times \dots \times J_n$ of the support of $\nu$, we use the standard technique of abstract interpretation with interval arithmetic.
Since the lower bound ignores loops, computing $J$ amounts to bounding the range of values of variables after loops.
Such a program analysis is standard, so we describe it only briefly.

As the abstract domain, we use products of intervals $I_1 \times \dots \times I_n$ where each $I_j$ is an interval $[a_j, b_j]$ with $a_j \in \NN$ and $b_j \in \NN \cup \{\infty\}$.
This domain is extended with a special element $\bot = \emptyset \times \dots \times \emptyset$ representing an empty support, i.e. the zero measure.
The join operation is defined as the pointwise join of the intervals.
The abstract versions of addition and subtraction are standard.
As the widening operator $\widen$ on intervals, we use $[a, b] \widen [c, d] = [e, f]$ with $e = a$ if $a \le c$ else $e = 0$, and $f = b$ if $b \ge d$ else $f = \infty$.
This is extended pointwise to products of intervals.

\begin{example}
For the asymmetric random walk example (\cref{example:asym-rw}), the interval analysis finds that the support of $\nu := \sem{P^{(u)}}(\Dirac(0, 0)) - \semlo{P^{(u)}}(\Dirac(0, 0))$ is contained in $[0, 0] \times [u, \infty]$.
This tells us that the lower bounds on the probability masses $\Prob[X_2 = i]$ are exact for $i < u$.
\end{example}

\subsection{Details on the ADAM-based Solver and Optimizer}
\label{apx:impl-solvers}

The success of the (gradient-based) ADAM optimizer in high-dimensional problems in machine learning inspired us to apply it to optimizing our bounds as well.
Since ADAM works on unconstrained optimization problems, we first need to reduce the constrained optimization problem to an unconstrained one.
Our idea is to add penalty terms to the objective function for constraints that are violated or almost violated, to steer the solver towards feasible points.
Specifically, for each constraint $f(\x) \le 0$, we add a penalty term to the objective function:
$\exp(\lambda \mathsf{dist}_f(\x))$, where $\mathsf{dist}_f(\x) = \frac{f(\x)}{\| \nabla f(\x) \|}$ is an approximation of the (signed) distance to the constraint boundary.
This term is near zero if the constraint is comfortably satisfied, but large if the constraint is violated.
In that sense it is similar to \emph{penalty function methods} used in constrained optimization, except penalty functions are \emph{exactly} zero when the constraint is satisfied.
The parameter $\lambda$ is increased during the optimization: we take it to be the current iteration number.
A higher $\lambda$ leads to a ``sharper'' transition from near zero to a very high penalty at the constraint boundary.

How do we find an initial point for the optimization?
In many cases (see proof of \cref{thm:sufficient-cond}), there is a point with $\bbeta = \one_n$ that is close to a feasible point.
Thus we simply initialize $\bbeta$ at the point $0.999 \cdot \one_n$ (i.e. very close to one) in our implementation.
This initialization is typically good enough for our optimizer to quickly find a feasible point.
However, as mentioned in the main text, IPOPT is better at finding feasible points than our ADAM-based approach.
The primary strength of the ADAM-based optimizer is to improve the geometric bounds, particularly the tail bounds.

\section{Supplementary Material for \cref{sec:eval}}
\label{apx:eval}

\paragraph{Details on benchmarks selection}
The Polar benchmarks are all unnested loops without conditioning.
Since Polar is also used for loop invariant synthesis, most of their operate on loops of the form $\Whilst{\True}{\dots}$.
We excluded benchmarks with such infinite loops.
Since PSI cannot handle unbounded loops, the loops in its repository were all either artificially bounded or PSI could not solve them.
We remove the artificial bounds for the purposes of our benchmarks.
This explains why PSI cannot solve any of ``its own'' benchmarks.

\begin{table}
\centering
\footnotesize
\caption{Existence of geometric bounds: for each benchmark, we present some statistics about the program, the constraint problem arising from the geometric bound semantics, and the solving time, if successful. (*: cannot be solved by the respective tool; \#Var: number of program variables; \#Stmt: number of statements; Obs?: presence of observations; \#Constr: number of constraints; \#ConstrVar: number of constraint variables; Time: time to compute bounds (or error); timeout: 5 minutes exceeded; infeasible: no solution to the constraints exists; $\ExpVal[\text{runtime}] = \infty$: expected runtime is infinite, implies infeasible.)}
\label{table:benchmarks-applicability}
\alternaterowcolors
\begin{tabular}{lrrrrrr}
\toprule
Benchmark & \#Var & \#Stmt & Obs? & \#Constr & \#ConstrVar & Time \\
\midrule
\verb|polar/c4B_t303|* & 3 & 21 & \xmark & 6 & 13 & 0.02 s \\
\verb|polar/coupon_collector2| & 5 & 12 & \xmark & 2 & 10 & 0.02 s \\
\verb|polar/fair_biased_coin| & 2 & 5 & \xmark & 2 & 5 & 0.01 s \\
\verb|polar/geometric| (\cref{ex:power-of-two}) & 4 & 13 & \xmark & 21 & 18 & \xmark{} ($\ExpVal[\text{runtime}] = \infty$) \\
\verb|polar/las_vegas_search| & 3 & 7 & \xmark & 20 & 23 & 0.02 s \\
\verb|polar/linear01|* & 1 & 5 & \xmark & 2 & 3 & 0.02 s \\
\verb|polar/rabin|* & 4 & 19 & \xmark & 45 & 34 & \xmark{} (infeasible) \\
\verb|polar/random_walk_2d| & 3 & 11 & \xmark & 11 & 7 & \xmark{} ($\ExpVal[\text{runtime}] = \infty$) \\
\verb|polar/simple_loop| & 2 & 8 & \xmark & 1 & 3 & 0.01 s \\
\verb|prodigy/bit_flip_conditioning| & 5 & 13 & \cmark & 3 & 18 & 0.02 s \\
\verb|prodigy/brp_obs| & 2 & 9 & \cmark & 93 & 8 & 2.79 s \\
\verb|prodigy/condand|* & 2 & 6 & \xmark & 4 & 6 & 0.02 s \\
\verb|prodigy/dep_bern| & 3 & 8 & \xmark & 4 & 5 & 0.05 s \\
\verb|prodigy/endless_conditioning| & 1 & 7 & \cmark & 1 & 2 & 0.09 s \\
\verb|prodigy/geometric| & 2 & 6 & \xmark & 3 & 4 & 0.11 s \\
\verb|prodigy/ky_die| & 2 & 37 & \xmark & 42 & 43 & 0.03 s \\
\verb|prodigy/n_geometric| & 2 & 6 & \xmark & 3 & 4 & 0.01 s \\
\verb|prodigy/nested_while| & 3 & 11 & \xmark & 12 & 10 & \xmark{} (infeasible) \\
\verb|prodigy/random_walk| & 2 & 7 & \xmark & 2 & 4 & \xmark{} ($\ExpVal[\text{runtime}] = \infty$) \\
\verb|prodigy/trivial_iid| & 2 & 5 & \xmark & 12 & 4 & 0.03 s \\
\verb|psi/beauquier-etal3|* & 14 & 39 & \xmark & 64 & 8194 & 2.43 s \\
\verb|psi/cav-example7|* & 2 & 16 & \xmark & 4 & 6 & 0.15 s \\
\verb|psi/dieCond|* & 2 & 7 & \cmark & 2 & 8 & 0.02 s \\
\verb|psi/ex3|* & 1 & 5 & \xmark & 2 & 3 & 0.02 s \\
\verb|psi/ex4|* & 2 & 5 & \xmark & 1 & 2 & 0.06 s \\
\verb|psi/fourcards|* & 7 & 21 & \xmark & 112 & 257 & 0.09 s \\
\verb|psi/herman3|* & 8 & 33 & \xmark & 16 & 258 & 0.04 s \\
\verb|psi/israeli-jalfon3|* & 8 & 30 & \xmark & 4 & 130 & 0.03 s \\
\verb|psi/israeli-jalfon5|* & 12 & 48 & \xmark & 116 & 4098 & 2.99 s \\
\verb|ours/1d-asym-rw| (\cref{example:asym-rw}) & 2 & 7 & \xmark & 2 & 4 & 0.14 s \\
\verb|ours/2d-asym-rw| & 3 & 12 & \xmark & 4 & 5 & 0.03 s \\
\verb|ours/3d-asym-rw| & 4 & 17 & \xmark & 126 & 59 & 0.18 s \\
\verb|ours/asym-rw-conditioning| & 3 & 10 & \cmark & 6 & 6 & 0.02 s \\
\verb|ours/coupon-collector5| & 7 & 18 & \xmark & 75 & 162 & 0.04 s \\
\verb|ours/double-geo| (\cref{ex:double-geo}) & 2 & 5 & \xmark & 2 & 4 & 0.01 s \\
\verb|ours/geometric| (\cref{example:geo-counter}) & 1 & 3 & \xmark & 1 & 3 & 0.02 s \\
\verb|ours/grid| & 2 & 6 & \xmark & 11 & 12 & 0.22 s \\
\verb|ours/herman5| & 12 & 38 & \xmark & 214 & 6146 & \xmark{} (timeout) \\
\verb|ours/imprecise_tails| (\cref{ex:imprecise-tails}) & 2 & 8 & \xmark & 14 & 13 & 0.10 s \\
\verb|ours/israeli-jalfon4| & 10 & 39 & \xmark & 25 & 770 & 0.09 s \\
\verb|ours/nested| & 2 & 7 & \xmark & 12 & 10 & 0.02 s \\
\verb|ours/sub-geom| & 1 & 3 & \xmark & 3 & 4 & 0.01 s \\
\verb|ours/sum-geos| & 2 & 5 & \xmark & 3 & 5 & 0.03 s \\
\bottomrule
\end{tabular}
\end{table}

\paragraph{Detailed applicability results for \cref{sec:eval-applicability}}
The results of the geometric bound semantics, as implemented in Diabolo, on all benchmarks are shown in \cref{table:benchmarks-applicability}, along with statistics about the input program and the constraint problem arising from the geometric bound semantics.
For 5 benchmarks, Z3 could prove that no EGD bound exists (for invariant sizes $d=1,2,3,4$), these are marked as ``infeasible''.
For 3 of them, we manually determined that the expected runtime is infinite, which implies no EGD bound of any size exists (\cref{thm:necessary-runtime}).
We suspect that no EGD bound (for any invariant size) exists for the other 2 benchmarks either.

\paragraph{Detailed comparison with martingale-based methods.}
For the example of the asymmetric random walk (\cref{example:asym-rw}), we take $X_2$ as the ranking supermartingale $M$ and find that $M_{i+1} = M_i + r \cdot 1 + (1 - r) \cdot (-1) = M_i - (1 - 2r)$.
As discussed in \citep{ChatterjeeFNH16}, for a ranking supermartingale $M$ with $M_{i+1} \le M_i - \epsilon$, the tail of the termination time $T$ is bounded by $\Prob[T > n] \le \exp\left(-\frac{2(\epsilon((n-1)-M_1))^2}{(n-1)(b-a)^2}\right)$.
In our case, we have $\epsilon = 1 - 2r$ and $b - a = 2$.
Thus we find
\begin{align*}
\Prob[T > n] &\le \exp\left(-\frac{2(\epsilon ((n - 1) - M_1))^2}{2(n-1)(b-a)^2}\right)
= O\left(\exp\left(-\frac{((1 - 2r)(n-1))^2}{2(n-1)}\right)\right) \\
&= O\left(\exp\left(-\frac{(1 - 2r)^2(n-1)}{2}\right)\right)
= O\left(\left(\sqrt{\exp(-(1 - 2r)^2)}\right)^n\right)
\end{align*}
Note that $\exp(-(1 - 2r)^2) = \exp(-1 + 4r - 4r^2) > 4r(1-r)$ for $r \ne 0$ by the inequality $\exp(x) > 1 + x$ for $x \ne 0$.
By contrast, our method yields $\Prob[T > n] \le c^{n-1}(1-c) = O(c^n) = O((2\sqrt{r(1-r)} + \delta)^n)$ for any $\delta > 0$.
Hence our method yields better asymptotic tail bounds than the martingale-based method.

Similarly, for the geometric example (\cref{example:geo-counter}), our method yields the tail bound $O((\frac12 + \delta)^{n})$.
The martingale is given by the variable $M_i := \Flip(\frac{1}{2})$ implicit in the loop condition.
It satisfies
\[ \ExpVal[M_{i+1} \mid M_i] = \begin{cases}
\frac{1}{2} & \text{if } M_i = 1 \\
0 & \text{if } M_i = 0
\end{cases} \le M_i - \epsilon \cdot [M_i > 0] \]
for $\epsilon = \frac12$.
Since the martingale is bounded between 0 and 1, we have $b - a = 1$ and find the tail bound
\[
\Prob[T > n] \le \exp\left(-\frac{2(\epsilon ((n - 1) - M_1))^2}{(n-1)(b-a)^2}\right)
= O\left(\exp\left(-\frac{2(\frac{1}{2} (n - 1))^2}{(n-1)}\right)\right)
= O((\exp(-1/2))^n)
\]

\clearpage
\tableofcontents

\fi

\end{document}
\endinput